\setlist[itemize]{leftmargin=1.5em, labelsep=0.5em} 
\def\bool{\{0,1\}}
\def\A{{\algo A}}
\newcommand{\permset}[1]{{\cal P}(n)}
\newcommand\algo{\mathcal}
\newcommand{\Perms}{\mathcal{P}}
\newcommand{\switch}{{\sf swap}}
\newcommand{\Hyb}{\operatorname{\mathcal{H}}}
 \newcommand{\PrA}[1]{\Pr[\A(\Hyb_{#1})=1]} 
\newcommand{\costS}{\mathsf{S}}
\newcommand{\costU}{\mathsf{U}}
\newcommand{\Dom}{\{0,1\}^n}
\newcommand{\Adv}{\mathsf{Adv}}
\newcommand{\Swap}{\mathrm{SWAP}}
\newcommand{\E}{\mathbb{E}}
\definecolor{YB}{RGB}{40,70,200}  
\definecolor{FLAG}{RGB}{160,0,160} 
\definecolor{Ptwo}{RGB}{0,120,120} 
\definecolor{Dout}{RGB}{200,110,0} 
\definecolor{ANC}{gray}{0.35}
\newcommand{\negl}{\mathsf{negl}}
\newif\ifdraft
\newcommand{\chen}[1]{{\color{orange} #1}}
\newcommand{\atm}[1]{{\color{red} #1}}
\newcommand{\mehdi}[1]{{\color{blue} #1}}
\newcommand{\chen}[1]{}
\newcommand{\atm}[1]{}
\newcommand{\mehdi}[1]{}
\newtheorem{theorem}{Theorem}[section]
\newtheorem{lemma}[theorem]{Lemma}
\newtheorem{remark}[theorem]{Remark}
\newtheorem{definition}{Definition}[section]
\newtheorem{proposition}{Proposition}
\newenvironment{breakablealgorithm}
  {  
    \refstepcounter{algorithm}
     \hrule height.8pt depth0pt \kern2pt
     \renewcommand{\caption}[2][\relax]{
       {\raggedright\textbf{Algorithm~\thealgorithm} ##2\par}
       \ifx\relax##1\relax
         \addcontentsline{loa}{algorithm}{\protect\numberline{\thealgorithm}##2}
       \else
         \addcontentsline{loa}{algorithm}{\protect\numberline{\thealgorithm}##1}
       \fi
       \kern2pt\hrule\kern2pt
       \small
     }
  }{
     \kern2pt\hrule\relax
  }
\newenvironment{breakablealgorithm1}
  {  
    \refstepcounter{algorithm}
     \hrule height.8pt depth0pt \kern2pt
     \renewcommand{\caption}[2][\relax]{
       {\raggedright\textbf{Algorithm~\thealgorithm} ##2\par}
       \ifx\relax##1\relax
         \addcontentsline{loa}{algorithm}{\protect\numberline{\thealgorithm}##2}
       \else
         \addcontentsline{loa}{algorithm}{\protect\numberline{\thealgorithm}##1}
       \fi
       \kern2pt\hrule\kern2pt
       \small
     }
  }{
     \kern2pt\hrule\relax
  }
\definecolor{webgreen}{rgb}{0,.5,0}
\newcommand{\expref}[2]{\texorpdfstring{\hyperref[#2]{#1~\ref{#2}}}{#1~\ref{#2}}}
\newcommand{\magref}[2]{\hypersetup{linkcolor=magenta}\hyperref[#2]{#1~\ref{#2}}\hypersetup{linkcolor=webgreen}}
\newcommand\appendix@section[1]{%
  \refstepcounter{section}%
  \orig@section*{Appendix \@Alph\c@section: #1}%
  \addcontentsline{toc}{section}{Appendix \@Alph\c@section: #1}%
}
\let\orig@section\section
\g@addto@macro\appendix{\let\section\appendix@section}
\begin{document}
\title{Security of Key-Alternating Ciphers: Quantum Lower Bounds and Quantum Walk Attacks}
 \author{Chen Bai \thanks{cbai1@vt.edu}}
 \author{Mehdi Esmaili \thanks{mesmaili@vt.edu}}
 \author{Atul Mantri  \thanks{atulmantri@vt.edu}}
 \affil{Department of Computer Science, Virginia Tech, USA 24061}
\date{}    

\maketitle

\begin{abstract}
   We study the quantum security of key-alternating ciphers (KAC), a natural multi-round generalization of the Even–Mansour (EM) construction. KAC abstracts the round structure of practical block ciphers (e.g.~AES) as public permutations interleaved with key XORs. The $1$-round KAC or EM setting already highlights the power of quantum superposition access: EM is secure against classical and Q1 adversaries (quantum access to the public permutation), but insecure in the Q2 model (quantum access to both the permutation and the cipher). The security of multi-round KACs remain largely unexplored; in particular, whether the quantum-classical separation extends beyond a single round had remained open.

\begin{itemize}
\item \textbf{Quantum Lower Bounds.} We prove security of the $t$-round KAC against a non-adaptive adversary in both the Q1 and Q2 models.  In the Q1 model, any distinguiser requires $\Omega(2^{\frac{tn}{2t+1}})$ oracle queries to distinguish the cipher from a random permutation, whereas classically any distinguisher needs $\Omega(2^{\frac{tn}{t+1}})$ queries. As a corollary, we obtain a Q2 lower bound of $\Omega (2^{\frac{(t-1)n}{2t}})$ quantum queries. Thus, for $t \geq 2$, the exponential Q1-Q2 gap collapses in the non-adaptive setting, partially resolving an open problem posed by Kuwakado and Morii (2012). Our proofs develop a controlled-reprogramming framework within a quantum hybrid argument, sidestepping the lack of quantum recording techniques for permutation-based ciphers; we expect this framework to be useful for analyzing other post-quantum symmetric primitives. 

\item \textbf{Quantum Key-Recovery Attack.} We give the first non-trivial quantum key-recovery algorithm for $t$-round KAC in the Q1 model. It makes $O(2^{\alpha n})$ queries with $\alpha = \frac{t(t+1)}{(t+1)^2 + 1}$, improving on the best known classical bound of $O(2^{\alpha' n})$ with $\alpha' = \frac{t}{t+1}$. The algorithm adapts quantum walk techniques to the KAC structure.
\end{itemize}

\end{abstract}


\section{Introduction}

Quantum technologies are rapidly advancing and pose significant threats to the security of digital communications~\cite{shor1999polynomial}. Beyond public-key encryption and digital signatures, the post-quantum security of \textit{symmetric} cryptography is a central concern. Quantum algorithms~\cite{grover1996fast,simon1997power} significantly threaten symmetric ciphers by providing quadratic-and in some settings beyond-quadratic-speedups for generic tasks such as exhaustive key search~\cite{grassl2016applying,jaques2020implementing,bonnetain2022beyond}, potentially compromising widely used ciphers such as the AES~\cite{bonnetain2019quantumaes}. The heuristic ``double the key length" requires validation through rigorous lower bounds in appropriate quantum threat models.

We study \textit{key-alternating ciphers} (KAC), a fundamental abstractions capturing the structure of many block ciphers, including AES~\cite{daemen2002design,daemen2001wide,bogdanov2012key}. In its simplest form, a single-round construction is due to Even and Mansour~\cite{even1997construction}. The $t$-round generalization (Figure~\ref{fig:KAC}) alternates public permutations and $n$-bit key additions:
\[
E_k(x) = k_t \oplus P_t(k_{t-1} \oplus P_{t-1}(\ldots P_1(k_0 \oplus x) \ldots)),
\]
where $P_1,\ldots,P_t$ are public permutations on $\{0,1\}^n$ and $k_0,\ldots,k_t \in \bool^n$ are sampled from a distribution $D$ whose marginals are uniform. For example, AES-128 can be viewed as an instance with $t=10$ in which a round-dependent permutation is iterated and $11n$ bits of round-key material are derived pseudorandomly from a single $n$-bit master key; the master key space remains $\{0,1\}^{128}$.

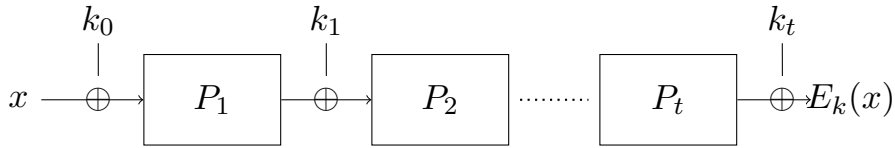
\begin{figure}[ht]
\centering
\begin{tikzpicture}[scale=1.5, every node/.style={scale=1.2}]

\node[draw, minimum width=1.5cm, minimum height=1cm] (P1) at (2,0) {$P_1$};
\node[draw, minimum width=1.5cm, minimum height=1cm] (P2) at (4,0) {$P_2$};
\node[draw, minimum width=1.5cm, minimum height=1cm] (Pt) at (6,0) {$P_t$};

\node (XOR1) at (1,0) {$\oplus$};
\node (XOR2) at (3,0) {$\oplus$};
\node (XOR4) at (7,0) {$\oplus$};

\draw[->] (0.5,0) -- (P1.west);
\draw[->] (P1.east) --  (P2.west); 
\draw[dotted, thick] (4.7,0) -- (5.3,0);

\draw[->] (Pt.east) -- (7.25,0);

\draw[-] (XOR1) -- (1,0.5);
\draw[-] (XOR2) -- (3,0.5);
\draw[-] (XOR4) -- (7,0.5);

\node at (1,0.7) {$k_0$};
\node at (3,0.7) {$k_1$};
\node at (7,0.7) {$k_t$};

\node at (0.3,0) {$x$};
\node at (7.8,0) {$E_k(x)$};

\end{tikzpicture}
\caption{A key-alternating cipher, where Boxes represent public permutations $P_i$'s and $\oplus$ is XOR of the input from left with the keys $k_i$'s.}
\label{fig:KAC}
\end{figure}

To analyze \emph{generic} attacks independent of particular permutation choices, we work in the Random Permutation Model, granting oracle access to the public permutations $P_i$ and to the keyed encryption oracle $E_k$. We distinguish three query regimes: (i) the \emph{Classical} model, where all oracles take classical queries; (ii) the \emph{Q1} model (realistic but mixed access), where the $P_i$ admit superposition queries while $E_k$ is restricted to classical queries; and (iii) the \emph{Q2} model (fully quantum), where both $P_i$ and $E_k$ admit superposition queries. In the classical setting, pseudorandomness of KAC with independent round keys is tightly understood via the H-coefficient technique~\cite{patarin2008coefficients}, yielding $\Theta(2^{\frac{tn}{t+1}})$ distinguishing bounds~\cite{bogdanov2012key,steinberger2012improved,chen2014tight}. 

In the quantum setting, adversaries may issue superposition queries to the public permutations and, depending on the access model, to the keyed encryption oracle as well, making the analysis substantially more delicate than in the classical case. On the provable-security side, classical frameworks such as the H\hbox{-}coefficient technique do not directly transfer: they fundamentally rely on recording full transcripts of adversarial interactions, which cannot be defined or preserved under superposition queries due to no\hbox{-}cloning. A quantum analogue, the compressed-oracle technique~\cite{zhandry2019record}, provides a way to “record’’ queries for \emph{random functions}, but extending it to \emph{permutations} is challenging because of inherent input–output dependencies, despite recent progress~\cite{majenz2025permutation}. Other approaches address permutation-based primitives without relying on the compressed-oracle~\cite{alagic2023two}. For $t=1$ (Even–Mansour), recent work has advanced lower-bound methodology via hybrid arguments that carefully control the trace distance between adjacent hybrids~\cite{alagic2022post,alagic2024post}; however, these analyses are tailored to the Q1 model and the single-round structure, and they do not yet scale to multiple rounds where the adversary gains quantum access to several permutations simultaneously. In particular, there is currently no evidence that KAC remains secure in the strong Q2 model, and determining the round complexity needed to withstand fully quantum adversaries is open.

On the attack side, the landscape for $t=1$ is essentially settled. In Q2, Simon’s algorithm~\cite{simon1997power} yields a polynomial-query break of Even–Mansour~\cite{kuwakado2012security}. In Q1, 
the best-known distinguishers require about $2^{n/3}$ queries, obtained via combinations of Grover search, collision-finding (BHT), and offline Simon's algorithm~\cite{brassard1997quantum,grover1996fast, leander2017grover,bonnetain2019quantum,jaeger2021quantum, kuwakado2012security}. For $t\ge 2$, however, the situation is far less developed: beginning with two rounds, the only published result in Q1~\cite{cai2022quantum} does not outperform classical strategies, and there are no Q2 attacks that leverage coherence across \emph{multiple} public permutations in a way that decisively improves over classical baselines. This gap between the single-round and multi-round cases—both on the side of lower bounds and on the side of concrete algorithms—motivates a more systematic study of quantum query complexity for KAC.

From an algorithmic standpoint, KAC provides a canonical testbed for \emph{mixed-access quantum query complexity}. Many standard quantum speedups (unstructured search, collision, element distinctness, $k$-sum) involve a single oracle, where the distinction between partial and full quantum access is not important. In contrast, cryptographic constructions such as KAC expose multiple oracles by design. This raises a fine-grained question: when a task inherently involves several black-box oracles, how does complexity evolve as only a subset of them admit coherent (quantum) queries? In our setting, “quantum advantage’’ refers to an adversary distinguishing the real KAC world from a random-permutation world using asymptotically fewer queries than any classical strategy. The $t=1$ case already exhibits an exponential advantage between classical/mixed-access (Q1) and fully quantum (Q2) models. Moreover, Q2 is of fundamental interest and it is natural to ask how many rounds are required to withstand fully quantum adversaries and how this threshold interacts with access patterns.

This motivates the following questions for $t$-round KAC in the Random Permutation Model:
\begin{enumerate}
  \item \emph{Provable security against quantum adversaries.} Can we characterize tight $t$-dependent quantum query lower bounds in Q1 and Q2, including the minimal round complexity $t^\star_{\mathrm{Q2}}$ required for security in Q2 model?
  \item \emph{Quantum attacks in realistic models.} In the realistic Q1 model, can we design quantum attacks that strictly outperform classical strategies for $t$-round KAC?
\item \emph{Mixed-access query complexity.} Does granting superposition access to more oracles strictly increase the adversary’s power for $t$-round KAC, or is there a minimum number of quantum access oracles required before any asymptotic advantage over classical appears (if at all)?
\end{enumerate}

\subsection{Our Contributions}
We give the first formal quantum security proof for $t$-round KAC in the \textit{random permutation model with independent permutations and independent round keys}, showing resistance to \textit{non-adaptive} quantum adversaries in both the Q1 and Q2 models. Specifically, in the Q1 model, we prove that distinguishing $t$-KAC from a random permutation requires $\Omega(2^{\frac{tn}{2t+1}})$ oracle queries. Moreover, our Q1 analysis yields a corresponding bound in Q2: any non-adaptive Q2 distinguisher requires $\Omega(2^{\frac{(t-1)n}{2t}})$ quantum queries in the Q2 setting. In particular, for $t \ge 2$, this rules out polynomial-query attacks such as Simon's algorithm, in contrast to the $t=1$ Even-Mansour case~\cite{kuwakado2012security,kaplan2016breaking}. These are obtained via the quantum hybrid method~\cite{Bennett_1997,alagic2022post}, 
together with a new \emph{controlled reprogramming} technique. 

Complementing the lower bounds, we present the first non-trivial \textit{quantum key-recovery} attack applicable to arbitrary $t$-round KAC. This attack operates in the Q1 model (quantum access to the public permutations only). The attack uses $O(2^{\alpha n})$ queries with $\alpha = \frac{t(t+1)}{(t+1)^2 + 1}$, improving over the best-known classical bound $\Theta(2^{\frac{tn}{t+1}})$ of Bogdanov et al.~\cite{bogdanov2012key}. This provides the first generic quantum advantage for key recovery against multi-round KAC in a realistic access model. Algorithmically, the attack is a tailored adaptation of quantum walk techniques given by Magniez et al~\cite{magniez2007search} to the KAC structure.  While quantum walks are among the most powerful tools in quantum algorithms, their use in post-quantum cryptography has been limited~\cite{kaplan2014quantum}; our result demonstrates a concrete advantage in this setting.

Our results provide a comprehensive view of the quantum security landscape for key-alternating ciphers in the non-adaptive setting: explicit lower bounds in Q1 and Q2, and explicit key-recovery attacks in Q1. This partially settles a question of Kuwakado and Morii~\cite{kuwakado2012security}. 

\subsection{Technical Overview}
We now give an informal overview of our main results (see also \expref{Table}{table}), starting with the ideas behind our quantum lower bounds and then outlining our quantum key-recovery attack.

\begin{table}[tb]
\centering
\renewcommand{\arraystretch}{1.2}
\setlength{\tabcolsep}{3pt}
\begin{tabular}{c|c|c|c}
\toprule
Target & Setting & Upper Bound & Lower Bound \\
\midrule
\multirow{3}{*}{\sf KAC} 
  & Classical & $O(2^{n/2})$~\cite{daemen1993limitations} & $\Omega(2^{n/2})$~\cite{even1997construction} \\
  & Q1 & $\tilde{O}(2^{n/3})$~\cite{kuwakado2012security, bonnetain2019quantum} & $\Omega(2^{n/3})$~\cite{alagic2022post} \\
  & Q2 & $O(n)$~\cite{kuwakado2012security} & -- \\
\midrule
\multirow{3}{*}{\sf $2$-KAC} 
  & Classical & $O(2^{2n/3})$ (~\cite{bogdanov2012key}) & $\Omega(2^{2n/3})$ (~\cite{bogdanov2012key}) \\
  & \textcolor{blue}{Q1} & \textcolor{blue}{$O(2^{3n/5})$ ~\magref{Thm}{thm:attack}} & \textcolor{blue}{$\Omega(2^{2n/5})$ ~\magref{Thm}{thm:fullQ1}} \\
  & \textcolor{blue}{Q2} & $O(n2^{n/2})$~\cite{cai2022quantum} & \textcolor{blue}{$\Omega(2^{n/4})$ ~\magref{Cor}{cor:kac-q2}} \\
\midrule
\sf 4-KAC & Q2 (2 keys) & $O(n2^{n/2})$~\cite{anand2024quantum} & -- \\
\midrule
\multirow{3}{*}{\sf $t$-KAC} 
  & Classical & $O(2^{\frac{tn}{t+1}})$~\cite{bogdanov2012key} & $\Omega(2^{\frac{tn}{t+1}})$~\cite{chen2014tight} \\
  & Q2 (same keys/perms) & $O(n)$~\cite{kaplan2016breaking} & -- \\
\midrule
\multirow{1}{*}{\sf $t$-KAC} 
  & \textcolor{blue}{Q1} & \textcolor{blue}{$O(2^{\frac{t(t+1)n}{(t+1)^2+1}})$}~\magref{Thm}{thm:attack} & \textcolor{blue}{$\Omega(2^{\frac{tn}{2t+1}})$}~\magref{Thm}{thm:fullQ1-t} \\
    & \textcolor{blue}{Q2} & -- & \textcolor{blue}{$\Omega(2^{\frac{(t-1)n}{2t}})$}~\magref{Thm}{cor:kac-q2}  \\
\bottomrule
\end{tabular}
\caption{Summary of known upper and lower bounds for KAC under various settings.}
\label{table}
\end{table}

\subsubsection{Lower bound for t-KAC in Q1 model}
We consider the Q1 model, where an adversary has classical access to $E_k$ and quantum access to $P_1,\dots,P_t$ (and their inverse). In this setting, an adversary is able to evaluate the unitary operators $U_{P_i}: \ket{x}\ket{y} \rightarrow \ket{x}\ket{y\oplus P_i(x)}$ on any quantum state it prepares. However, the adversary could only make classical queries to $E_k$, i.e., learning input and output pairs. In this paper, we focus on a non-adaptive adversary that must commit to all of its queries 
(to $E_k$, $P_1,\dots,P_t$) before interacting with the oracles. Once the queries are fixed, the adversary proceeds to query $E_k$, $P_1,\dots, P_t$ and receives the corresponding outputs.

Our work is motivated by the framework of Alagic et al.~\cite{alagic2022post}, 
who applied the hybrid method to establish the post-quantum security of the EM cipher 
in the Q1 model. In contrast, no such results were previously known for multi-round KAC. Here we extend the hybrid approach to prove post-quantum security of $t$-KAC in the Q1 model. 

\medskip
\noindent\textbf{Informal Theorem~1.1}\;\label{thm:infQ1}(Security of $t$-KAC in the Q1 model). \textit{In the Q1 model, any non-adaptive adversary requires $\Omega(2^{\frac{tn}{2t+1}})$ total queries (across classical and quantum queries) to distinguish the t-round Key-Alternating Cipher from a random permutation. More specifically, the distinguishing advantage is bounded by
$\text{Adv}_{\text{$t$-KAC,Q1}}(\A) \leq 4q_{P_1}\cdots q_{P_t}\,\frac{\sqrt{\,q_E}}{2^{tn/2}}$, where $q_E$ denotes the number of classical queries and $q_{P_i}$ the number of quantum queries to $P_i$.}

\medskip
Our proof strategy begins by establishing the post-quantum security of 2-KAC, which serves as the simplest nontrivial case beyond the one-round construction. We then extend the analysis to the general $t$-KAC setting, where the techniques developed for 2-KAC naturally generalize. For the review of techniques, we therefore focus only on outlining the proof idea for the 2-KAC case, as it already captures the essential challenges and methods used in the general setting.
 
\medskip
\noindent\textbf{Security of 2-KAC in the Q1 Model.} 
The $2$-round Key-Alternating Cipher (2-KAC) with key 
$k=(k_0,k_1,k_2)$ from a distribution $D$ is defined on $n$-bit blocks as
\[
E_k(x) \;=\; P_2\!\left(P_1(x \oplus k_0) \oplus k_1\right) \oplus k_2,
\]
with $P_1,P_2 \in \algo P_n$ sampled as independent random permutations on $\{0,1\}^n$. 
We analyze the security of $2$-KAC in the Q1 model, against a non-adaptive adversary.
By Theorem 1.1, the bound we obtain for the Q1 model is
$\text{Adv}_{\text{$2$-KAC,Q1}}(\A) \leq 4q_{P_2}\,q_{P_1}\,\frac{\sqrt{\,q_E}}{2^n}.
$ Here, $q_E$ is the number of classical queries to the block cipher, and $q_{P_1}, q_{P_2}$ are the number of quantum queries to $P_1$ and $P_2$, respectively. Our lower bound shows that any non-adaptive quantum attack requires at least $\Omega(2^{2n/5})$ oracle queries. We now outline the main ideas of our proof at a high level.

We analyze the security of the $2$-KAC construction by considering how many queries 
an adversary must make in order to distinguish whether it is interacting with the 
cipher $E_k$ or with a truly random permutation $R$. 
To formalize this, we introduce a simulator $\mathcal{S}$ that runs one of two experiments with the adversary $\A$. In both, $\mathcal{S}$ samples independent random permutations $P_1,P_2 \gets \Perms_n$; the difference lies in the additional oracle given to $\A$.

\begin{itemize}
    \item \textbf{Real World ($\mathcal{H}_0$):}  
    $\mathcal{S}$ samples a key $k=(k_0,k_1,k_2) \leftarrow D$, 
    where the marginals of $k_0,k_1,k_2$ are uniform over $\{0,1\}^n$, 
    and uses it to construct the cipher $E_k$. 
    It then answers $\A$’s queries using $E_k$, $P_1$, and $P_2$, 
    with classical access to $E_k$ and quantum access to $P_1$ and $P_2$.

    \item \textbf{Ideal World ($\mathcal{H}_2$):}  
    $\mathcal{S}$ samples a random permutation $R \gets \algo P_n$. 
    It then answers $\A$’s queries using $R$, $P_1$, and $P_2$, 
    with classical access to $R$ and quantum access to $P_1$ and $P_2$.
\end{itemize}

The adversary’s goal is to distinguish which world it is operating in. 
We restrict attention to \textit{non-adaptive} adversaries, 
which must fix all of their queries in advance. 
In particular, the adversary must decide on all classical queries to $E$ ($E_k$ in the real world or $R$ in the ideal world)
and all quantum queries to $P_1$ and $P_2$ before any interaction begins. Formally, the adversary prepares the following initial quantum state:
\begin{align*} 
\ket{\Phi_0} \;=\; \sum_{\substack{a_1,\ldots,a_{q_{P_1}}\\ c_1,\ldots,c_{q_{P_2}}}} \alpha_{a_1,\dots,a_{q_{P_1}}} \beta_{c_1,\dots,c_{q_{P_2}}}\; &\ket{x_1,\dots,x_{q_E},a_1,\dots,a_{q_{P_1}}, c_1,\dots,c_{q_{P_2}}}_{XAC}\\
&\otimes\ket{0^{q_E},0^{q_{P_1}},0^{q_{P_2}}}_{YBD}\, \end{align*} 
where the quantum register $X$ corresponds to the classical queries to $E$, 
$A$ and $C$ to the quantum queries to $P_1$ and $P_2$, 
and $Y,B,D$ are the corresponding output registers initialized to zero.  $(x_1,\dots,x_{q_E})$ is a sequence of $q_E$ classical queries, $(a_1,\dots,a_{q_1})$ and $(c_1,\dots,x_{c_2})$ are superpositions of quantum queries to $P_1$ and $P_2$ that $\A$ prepares.
See \expref{Definition}{def:psi0} for details. Our goal is to bound the distinguishing advantage of $\A$ after making 
$q_E$ classical queries to the classical cipher and $q_{P_1}, q_{P_2}$ quantum queries to $P_1$ and $P_2$. 
Formally,
\[
\Adv_{2\text{-KAC},\,Q1}(\A) 
\;=\; 
\left| \Pr[\A(\mathcal{H}_0)=1] - \Pr[\A(\mathcal{H}_2)=1] \right|.
\]

To facilitate the analysis, we introduce an intermediate experiment $\mathcal{H}_1$:

\begin{itemize}
    \item \textbf{Intermediate World ($\mathcal{H}_1$):}  
    $\mathcal{S}$ samples a key $k=(k_0,k_1,k_2)$, 
    where each $k_i$ is marginally uniform over $\{0,1\}^n$, 
    and random permutations $R,P_1,P_2 \gets \Perms_n$. 
    It then answers $\A$’s queries using $R, P_1$ together with a reprogrammed permutation $P_2'$, 
    which agrees with $P_2$ everywhere except on points reprogrammed 
    according to $\A$’s queries to $R,P_1$ and the key $(k_0,k_1)$.
\end{itemize}

By the triangle inequality, the distinguishing advantage can be bounded as
\[
\Adv_{2\text{-KAC},\,Q1}(\A)  
\;\leq\; 
\left| \Pr[\A(\mathcal{H}_0)=1] - \Pr[\A(\mathcal{H}_1)=1] \right| 
+ 
\left| \Pr[\A(\mathcal{H}_1)=1] - \Pr[\A(\mathcal{H}_2)=1] \right|.
\]
The purpose of reprogramming is to ensure that queries are answered consistently 
across $\mathcal{H}_0$ and $\mathcal{H}_1$ so that those hybrids look ``identical" to the adversary's view. To explain how the simulator $\mathcal{S}$ constructs the reprogrammed permutation $P_2'$, we first present the main ideas in the simpler case of 1-KAC: $E_k(x) = P(x \oplus k_0) \oplus k_1$.

\medskip
\noindent \textbf{Warm-up: reprogramming in 1-KAC.} 
Without reprogramming, the two worlds give inconsistent answers: 
in $\mathcal{H}_0$, querying $x$ returns $y = E_k(x) = P(x \oplus k_0) \oplus k_1$, 
whereas in $\mathcal{H}_1$, querying $x$ returns $y = R(x)$, which is independent of $P(x \oplus k_0)$.
An adversary could potentially detect this mismatch and distinguish the two worlds by additionally querying $P$ on input $x\oplus k_0$. To maintain consistency, in $\mathcal{H}_1$ the simulator $\mathcal{S}$ reprograms $P$ as
\[ P' \;=\; P \circ \switch_{\,x \oplus k_0,\, P^{-1}(R(x) \oplus k_1)},
\]
where $\switch_{a,b}$ denotes the transposition swapping $a$ and $b$
so that $P\circ\switch_{a,b}(a)=P(b)$, $P\circ\switch_{a,b}(b)=P(a)$, 
and $P\circ\switch_{a,b}(z)=P(z)$ for all other $z$. This guarantees that $P'(x \oplus k_0) = R(x) \oplus k_1$ in $\Hyb_1$, so the adversary’s query is answered identically in both experiments.

From the 1-KAC case, we see that the reprogramming points are determined solely by 
classical queries. With multiple classical queries, the simulator can construct 
a reprogramming set classically and use it to reprogram $P$. However, in the 
case of 2-KAC this approach no longer works: \textit{reprogramming $P_2$ depends not only 
on classical queries to $R$ but also on quantum queries to $P_1$}. For example, we want to reprogram $P_2$ to $P'_2$ so that 
\[
P'_2(P_1(x\oplus k_0)\oplus k_1)=R(x)\oplus k_2.
\]
There is no known permutation recording technique that can coherently capture superposition queries to $P_1$ 
and use them to update $P_2$ without collapsing the adversary’s state. To address this obstacle, we introduce \emph{controlled reprogramming}. 
The key idea of controlled reprogramming is to update $P_2$ \textit{on-the-fly} 
without ever measuring the adversary’s quantum state. Instead of first 
collecting all reprogramming points classically, we implement the 
reprogramming coherently by applying controlled swap operations to $P_2$ 
whenever a query to $P_1$ reaches a designated reprogramming point. 
This way, consistency is maintained across the hybrids while preserving 
quantum coherence throughout the adversary’s interaction. In a sense, one can view controlled reprogramming as a purification-style technique for the conventional reprogramming. Instead of performing a measurement to determine reprogramming points, we introduce auxiliary registers to compute the swap conditions unitarily. These registers control the swap operations on $P_2$ during the adversary’s query, and are later uncomputed or traced out. This ensures that the overall process  remains coherent and unitary, while effectively achieving the same consistency guarantees as in the classical reprogramming approach. To illustrate this idea, we consider a simpler case where $\A$ only makes one classical query, one quantum query each to $P_1$ and $P_2$.

\begin{tcolorbox}[breakable]
\noindent\textbf{Illustrative Example: 3-Query Non-Adaptive Adversary for 2-KAC.}\par\medskip The initial state that $\A$ prepares is:

\[
|\Phi_0\rangle
\;=\;
\sum_{a,c} \alpha_{a} \beta_{c}\;
|x,a,c\rangle_{XAC}
\;\otimes\;
|0,0,0\rangle_{YBD}\,,
\]
where $\sum_a|\alpha_a|^2=\sum_b|\beta_b|^2=1$. Then, in $\Hyb_0$, and $\Hyb_2$, $\mathcal{S}$ responds with 

\[
|\Psi^{(0)}\rangle 
\;=\;
\sum_{a,c} \alpha_{a} \beta_{c}\;
|x,a,c\rangle_{XAC}
\;\otimes\;
|E_k(x),P_1(a),P_2(c)\rangle_{YBD}\,.
\]

\[
|\Psi^{(2)}\rangle
\;=\;
\sum_{a,c} \alpha_{a} \beta_{c}\;
|x,a,c\rangle_{XAC}
\;\otimes\;
|R(x),P_1(a),P_2(c)\rangle_{YBD}\,.
\]
The non-trivial part is $\Hyb_1$, where $\mathcal{S}$ first answers the query to $R$ and applies $P_1$, producing
\[
\sum_{a,c} \alpha_a \beta_c \;
|x,a,c\rangle_{XAC} \otimes |R(x),P_1(a),0\rangle_{YBD}.
\]
It then coherently computes into fresh ancilla registers $G,U,V$ the ``flag" and ``targets"
\[
g(a)=\mathbf{1}[a=x\oplus k_0], \quad 
X_1 = P_1(x\oplus k_0)\oplus k_1, \quad 
Y_1 = P_2^{-1}(R(x)\oplus k_2),
\]
yielding
\[
\sum_{a,c} \alpha_a \beta_c \;
|x,a,c\rangle_{XAC} \otimes |R(x),P_1(a),0\rangle_{YBD} 
\otimes |g(a)\rangle_{G}|X_1\rangle_{U}|Y_1\rangle_{V}.
\]
The purpose of reprogramming at $(X_1,Y_1)$ is to enforce consistency with the real cipher, 
namely that $P_2'(P_1(x\oplus k_0)\oplus k_1)=R(x)\oplus k_2$. 
To achieve this, $\mathcal{S}$ applies the controlled-swap operator
\[
U_{\mathrm{swap}} = \ket{0}\!\bra{0}_{G}\otimes I 
+ \ket{1}\!\bra{1}_{G}\otimes \Swap_{P_2}(X_1,Y_1),
\]
which swaps entries of $P_2$ at $(X_1,Y_1)$ whenever $G=1$. 
Finally, $\mathcal{S}$ answers the query to the reprogrammed permutation $P'$, uncomputes the ancillas $(G,U,V)$, 
and traces them out, leaving
\[
|\Psi^{(1)}\rangle
= \sum_{a,c} \alpha_a \beta_c \;
|x,a,c\rangle_{XAC} \otimes |R(x),P_1(a),P_2'(c)\rangle_{YBD}.
\]
Here $P_2'$ denotes the permutation obtained from $P_2$ by
\[
P_2' \;=\; P_2 \circ \big(\switch_{X_1,Y_1}\big)^{g(a)}.
\]

that is, swapping the entries at $(X_1,Y_1)$ when $g(a)=1$. Then, our goal is to bound the trace distance between $|\Psi^{(0)}\rangle$ and $|\Psi^{(1)}\rangle$, and between $|\Psi^{(1)}\rangle$ and $|\Psi^{(2)}\rangle$. For the technique overview, we illustrate the idea by bounding the trace distance
\[
\Tr(\Psi^{(1)}, \Psi^{(2)}) 
= \tfrac{1}{2}\big\| \ket{\Psi^{(1)}}\bra{\Psi^{(1)}}-\ket{\Psi^{(2)}}\bra{\Psi^{(2)}} \big\|_1,
\]
in terms of the set of reprogramming points. One can bound $\Tr(\ket{\Psi^{(0)}}, \ket{\Psi^{(1)}})$ analogously. Using triangle inequality, we combine these bounds to get the total distinguishing advantage between the real and the ideal world. Full details can be found in \expref{Section}{sec:lowerbound}.
\end{tcolorbox}

\medskip
\noindent The above example highlights the structural difference we exploit. In $\Hyb_0$ and $\Hyb_2$, the simulator’s joint answer factors across the oracle interfaces: for the three register output $(Y,B,D)$,
\[
|\Psi^{(0)}\rangle \;=\; |\psi_Y^{(0)}\rangle \otimes |\psi_B^{(0)}\rangle \otimes |\psi_D^{(0)}\rangle,
\qquad
|\Psi^{(2)}\rangle \;=\; |\psi_Y^{(2)}\rangle \otimes |\psi_B^{(2)}\rangle \otimes |\psi_D^{(2)}\rangle,
\]
because $E_k(x)$ (or $R(x)$) is independent of $(a,c)$ and the outputs of $P_1$ and $P_2$ depend on disjoint inputs. In $\Hyb_1$, the controlled reprogramming of $P_2$ applies a joint unitary on $(X_1\,Y_1)$ \emph{conditioned} on the $P_1$–branch $a$ with $a=x\oplus k_0$. Consequently, the post-query state
\[
|\Psi^{(1)}\rangle \;=\; \sum_{a,c}\alpha_a\beta_c\,|x,a,c\rangle_{XAC}\otimes|R(x)\rangle_Y\otimes\bigl(|P_1(a)\rangle_B\otimes U_a|P_2(c)\rangle_D\bigr),
\]
with $U_a=\mathrm{Id}$ for $a\ne x\oplus k_0$ and $U_a=\Swap_{P_2}(X_1,Y_1)$ otherwise, does \emph{not} factor with respect to the B-D partition ( $\rho_{BD}^{(1)}\neq \rho_B^{(1)}\otimes \rho_D^{(1)}$), since unitary on $D$ depends coherently on $P_1$ branch.  Thus distinguishing reduces to bounding the trace distance between a product state and an (entangled) state obtained by a controlled unitary supported on the reprogrammed set.

\subsubsection{Lower bound for t-KAC in Q2 model}
We also analyze the security of $t$-round KAC in the strong Q2 model, which allows the adversary quantum access to the cipher in addition to the internal permutations. We provide the first proof of security for $t$-round KAC against non-adaptive adversaries in this model. 

\medskip
\noindent\textbf{Informal Theorem 1.2}~(Security of $t$-KAC in the Q2 model). \textit{Even in the strong Q2 model, where adversaries have quantum access to both the cipher and its internal permutations, a non-adaptive adversary needs at least $\Omega(2^{\frac{(t-1)n}{2t}})$ quantum queries to distinguish $t$-round KAC from a random permutation.}
\medskip

To highlight the main idea, consider a hypothetical adversary $\A_{\text{im}}$ that is given full access to the codebook of $E_k$, i.e., it makes no queries to learn input/output pairs of $E_k$. Hence,
\[
\text{Adv}_{t\text{-KAC},Q2}(\A) \leq \text{Adv}_{t\text{-KAC},Q2}(\A_{\text{im}}) = \text{Adv}_{t\text{-KAC},Q1}(\A_{\text{im}}),
\]
since $\A$ must issue quantum queries to obtain any information about $E_k$. In particular, we have
$
\text{Adv}_{t\text{-KAC},Q1}(\A_{\text{im}}) \leq 4 q_{P_1} \cdots q_{P_t} \cdot \frac{\sqrt{2^n}}{2^{tn/2}}
$
(by setting $q_E = 2^n$). This implies that $\A_{\text{im}}$ must perform $\Omega\!\left(2^{\frac{(t-1)n}{2t}}\right)$ queries to achieve a constant advantage. Consequently, $\A$ also requires at least this many quantum queries.

\subsubsection{Key-Recovery Attack on t-KAC via Quantum Walk.}
Complementing our lower bound results which establish hardness, we also provide a concrete key-recovery attack on the $t$-round KAC in the Q1 model. This attack demonstrates an achievable upper bound on the query complexity required for a successful key recovery.

Our attack leverages the core idea from the classical attack~\cite{bogdanov2012key}, which identifies the correct key by finding multiple input tuples that result in the same key candidate via a set of equations involving $E_k$ and $P_i$. While the classical attack achieves this by exhaustively searching over large sampled sets, incurring a cost of $O(2^{tn/(t+1)})$, our quantum attack significantly speeds up this search process. For simplicity, we first describe the attack model on the 2-KAC cipher, and in Section~\ref{sec:attack} we extend the discussion to the general case of arbitrary $t$.

\begin{figure}[ht]
    \centering
    \includegraphics[scale=.2]{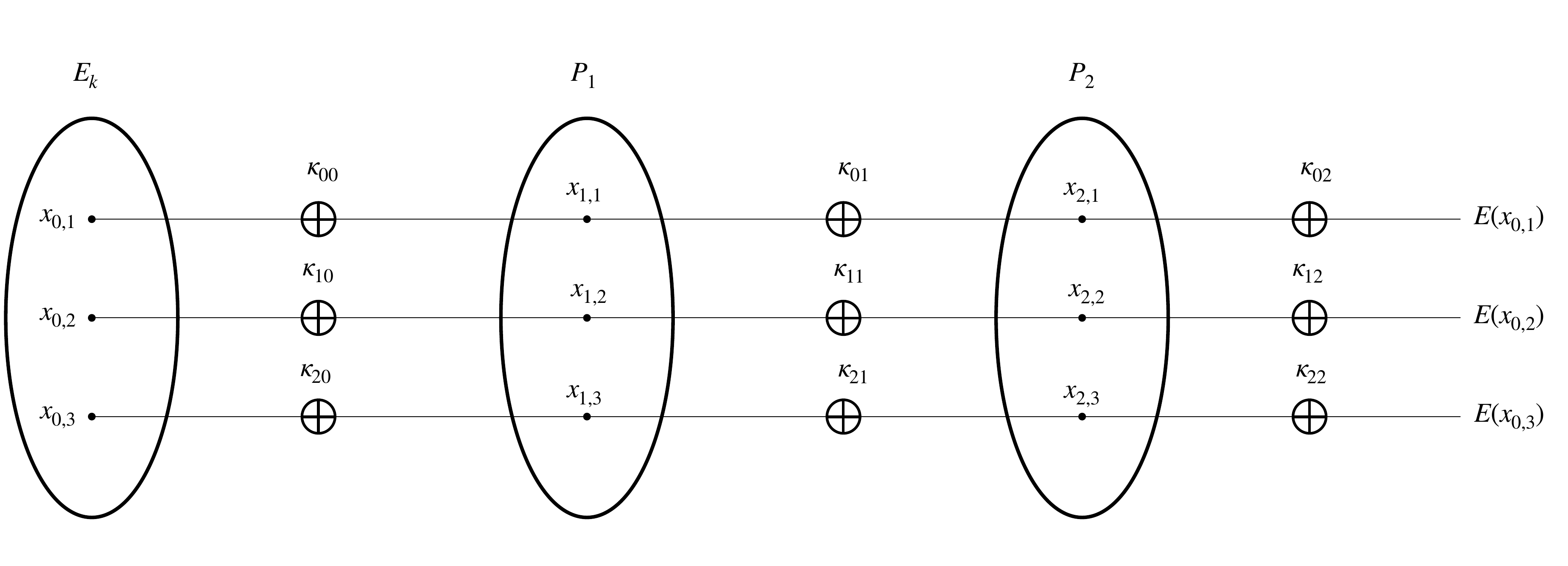}
    \caption{Generating potential key candidates from specific inputs to $E$, $P_1$, and $P_2$. The key candidate $\kappa_0 = (\kappa_{00}, \kappa_{01}, \kappa_{02})$ is generated as $(x_{0,1} \oplus x_{1,1}, P_1(x_{1,1}) \oplus x_{2,1}, P_2(x_{2,1}) \oplus E(x_{0,1}))$. Similarly, $\kappa_1 = (\kappa_{10}, \kappa_{11}, \kappa_{12})$ and $\kappa_2 = (\kappa_{20}, \kappa_{21}, \kappa_{22})$ are generated using other sets of inputs. If $\kappa_0 = \kappa_1 = \kappa_2$, the true key $(k_0, k_1, k_2)$ is highly likely.}
    \label{fig:inputs}
\end{figure}

Since our goal is to recover the cipher’s keys, a natural strategy is to (i) generate all possible key candidates, ensuring that the set consists of correct key, and (ii) identify which candidate corresponds to the actual key. To illustrate how these steps can be carried out, consider three inputs to $E$, $P_1$, and $P_2$, denoted by $x_{0,1}$, $x_{1,1}$, and $x_{2,1}$, respectively, as shown in \expref{Fig}{fig:inputs}. Using these inputs, we can construct a potential key candidate $(\kappa_{00}, \kappa_{01}, \kappa_{02})$ for $(k_0, k_1, k_2)$ as follows:
\[
\kappa_{00}=x_{0,1}\oplus x_{1,1},\quad
\kappa_{01}=P_1(x_{1,1})\oplus x_{2,1},\quad
\kappa_{02}=P_2(x_{2,1})\oplus E(x_{0,1}).
\]

Thus, with a single input to each of $E, P_1, P_2$, we can derive a candidate key for $(k_0, k_1, k_2)$. Extending this to $\alpha$ inputs for each of $E, P_1, P_2$, we obtain $\alpha^3$ potential key candidates. Importantly, if $(\kappa_{00}, \kappa_{01}, \kappa_{02)})$ is a generated key candidate and $(\kappa_{00}, \kappa_{01}) = (k_0, k_1)$, then it necessarily follows that $\kappa_{02} = k_2$. This implies that by generating all possible candidates for $(k_0, k_1)$, one of them must correspond to the correct full key $(k_0, k_1, k_2)$. Constructing all such candidates for $(k_0, k_1)$ requires approximately $2^{2n/3}$ queries to $E, P_1, P_2$.

The next question to address is: How can we determine which key candidate is the correct one? In the classical setting, the solution is relatively straightforward: we can list all the generated candidates and verify which one is correct, requiring only slightly more queries. In the quantum setting, however, the answer is not as simple if we are aiming for a speed-up, since the candidates are prepared in a superposition and any measurement collapses this state to a single candidate, discarding the rest. To tackle this, observe that if we make $\alpha\cdot2^{2n/3}$ queries to $E, P_1, P_2$, the expected number of times each possible key candidate for $(k_0, k_1)$ appears is $\alpha^3$. Now, consider two input tuples $(x_{0,1}, x_{1,1}, x_{2,1})$ and $(x_{0,2}, x_{1,2}, x_{2,2})$ that generate key candidates $(\kappa_{00}, \kappa_{01}, \kappa_{02})$ and $(\kappa_{10}, \kappa_{11}, \kappa_{12})$, respectively, and suppose that $(\kappa_{00}, \kappa_{01}) = (\kappa_{10}, \kappa_{11})$. If $(\kappa_{00}, \kappa_{01})$ is equal to $(k_0, k_1)$, then we necessarily have $\kappa_{02} = \kappa_{12} = k_2$. Otherwise, the equality of $\kappa_{02} = \kappa_{12}$ occurs only with a small probability $1/{2^n}$. Consequently, the true key $(k_0, k_1, k_2)$ will appear approximately $\alpha^3$ times among the generated candidates, whereas such repetitions are exceedingly unlikely for incorrect candidates.

More precisely, if we select inputs to $E, P_1, P_2$ so that each key candidate for $(k_0, k_1)$ is expected to occur about three times, then only a small number of candidates—such as $(\kappa_0, \kappa_1, \kappa_2)$—will actually appear with this multiplicity, and the desired key $(k_0,k_1,k_2)$ is one of them. Defining two input tuples that yield the same key candidate as being \emph{related}, the task of recovering keys reduces to finding three input tuples to $E, P_1, P_2$ that are all mutually related (i.e, they generate the same key candidate). More generally, for $t$-KAC, this corresponds to finding $t+1$ input tuples that share mutual relations.

A wide range of problems can be formulated as finding certain types of relations over sets of function inputs. For instance, in the claw finding problem, the relation is defined for two inputs $x, y$ over functions $f, g$ whenever $f(x) = g(y)$. In the $k$-XOR problem, the relation involves inputs $x_1, \ldots, x_k$ to a function $h$, where we require $h(x_1) \oplus \cdots \oplus h(x_k) = 0$. Over the years, it has been demonstrated that quantum walk algorithms are among the most effective techniques for uncovering such relations~\cite{Tani_2009, ambainis2007quantum, le2014improved, kaplan2014quantum}, and in our case this framework is also directly applicable.

\medskip
\noindent \textbf{Quantum Walk Framework.}
We employ the MNRS-style quantum walk algorithm~\cite{magniez2007search}, which operates on the Cartesian product of $t$ Johnson graphs. In this construction, each vertex corresponds to a collection of $tr$ inputs, selected from the input domains of $P_1, \ldots, P_t$ (with $r$ inputs drawn from each permutation). A vertex $v$ is \emph{marked} if it contains $t+1$ input tuples that can generate the same key candidate. The MNRS framework then allows us to efficiently search for such marked vertices, thereby identifying the right key. To outline how this quantum walk algorithm applies to our setting, we specify the following parameters of the MNRS framework:

\begin{itemize}
    \item \textbf{Setup cost:} $O(tr)$ queries are required to initialize a superposition over the vertices of $G$.
    \item \textbf{Update cost:} Since two vertices in $G$ are adjacent if they share $t(r-1)$ inputs, moving from one vertex to an adjacent vertex requires $O(t)$ queries.
    \item \textbf{Checking cost:} Given all the classical queries together with the $tr$ queries associated with any vertex $v$, the algorithm checks if $v$ is marked without issuing additional queries.

\end{itemize}

By executing the quantum walk algorithm, we can conclude the following theorem:

\medskip
\noindent\textbf{Informal Theorem 3}\;(Quantum Attack). \textit{There exists a quantum algorithm that recovers the keys of a $t$-KAC in the Q1 model with $O(2^{\alpha n})$ classical and quantum queries, where $\alpha = \frac{t(t+1)}{(t+1)^2 + 1}$.}


\subsection{Related Work}

The Even-Mansour (EM) cipher was first introduced in \cite{even1997construction} to construct a block cipher from a publicly known permutation. Later, Daemen extended this idea into the key-alternating cipher (KAC) \cite{daemen2001wide,daemen2002design}, which forms the basis of AES.

Regarding security, in the case of a single round ($t=1$), Even and Mansour showed that an adversary requires about $2^{n/2}$ queries to distinguish $E_k$ from a random permutation. For $t=2$, Bogdanov et al. \cite{bogdanov2012key} proved a security bound of $2^{2n/3}$, while Steinberger \cite{steinberger2012improved} improved the bound to $2^{3n/4}$ for $t=3$. A conjectured general bound of $2^{\frac{tn}{t+1}}$ was eventually proven by Chen and Steinberger \cite{chen2014tight}. These bounds assume independent public permutations and random keys, but \cite{chen2018minimizing,wu2020tight, yu2023security} showed specific constructions that achieve similar security levels.

In the quantum setting, attacks against the $1$-round KAC (Even-Mansour) have been extensively studied, particularly in the Q1 and Q2 models. Kuwakado and Morii \cite{kuwakado2012security} showed that Even-Mansour is completely broken in the Q2 model using Simon’s algorithm. In the Q1 model, they presented an attack with approximately $2^{n/3}$ oracle queries and exponential memory, using the BHT collision-finding algorithm~\cite{brassard1997quantum}. Bonnetain et al. \cite{bonnetain2019quantum2} later improved the Q1 attack to use only polynomial memory by combining Grover’s search with Simon’s algorithm.  

For $t\geq 2$, results are sparse. Recently, Cai et al. \cite{cai2022quantum} provided a Q1 attack on $2$-KAC, which has the same query complexity as the best classical attack, implying no quantum advantage. 

On the quantum lower-bound side, for $t=1$, Jaeger et al.~\cite{jaeger2021quantum} established security for the Even-Mansour cipher against non-adaptive adversaries, and Alagic et al. \cite{alagic2022post,alagic2024post} showed post-quantum security of the Even-Mansour and its tweakable variant against adaptive adversaries. In both adaptive and non-adaptive cases, the tight lower bound is $\Omega(2^{n/3})$, matching the of Kuwakado-Morri~\cite{kuwakado2012security} and Bonnetain et al's~\cite{bonnetain2019quantum} result.  Prior to our work, however, no quantum lower bounds were known for $t$-KAC with $t\ge 2$ in either the Q1 or Q2 models. 

\subsection{Discussion and Future work}
In this work, we make progress in understanding the security of KAC against non-adaptive quantum adversaries in both the Q1 and Q2 model. A natural next step is to establish security against \emph{adaptive} quantum adversaries in both Q1 and Q2, even for $t=2$. Extending hybrid-based arguments and our controlled ``reprogramming" gadget to settings where queries depend on intermediate answers remains the central technical challenge. A second direction is to develop a more fine grained theory of mixed-access models and understand query complexity in those setting: instead of the two endpoints Q1/Q2, parameterize access by the subset $S\subseteq\{P_1,\ldots,P_t,E\}$ of oracles available in superposition and study the resulting complexity $Q^{\mathrm{mix}}(S)$. Does granting quantum access to additional $P_i$ strictly reduce query complexity? In particular, our Q1 attack based on quantum walk continues to apply under \emph{partial} quantum access—e.g., when only a single permutation admits superposition queries—yielding the same asymptotic query bound. This aligns with phenomena in collision finding and $k$-XOR, where optimal quantum speedups do not require that all functions be queried quantumly~\cite{aaronson2004quantum,shi2002quantum,belovs2013adversary}. Formalizing tight lower bounds as a function of the number of quantum-access oracles would clarify where, if anywhere, ``global coherence" across all oracles begins to help.  From a practical perspective, mixed-access models (e.g., $Q^{\mathrm{mix}}(S)$ with a single quantum-access permutation) are closer to NISQ constraints; mapping the precise query advantages and limitations in such models would connect the theory more tightly to realistic quantum attacks.

A complementary set of questions concerns exponents and round dependence. Our non-adaptive lower bounds approach the $2^{n/2}$ barrier as $t\to\infty$; determining whether this $1/2$ exponent is intrinsic for non-adaptive quantum distinguishers, and tightening the gap between upper and lower exponents for small $t$, are concrete open questions. It is also natural to compare tasks: our lower bounds address (in)distinguishability, whereas the upper bound is for key recovery. Establishing black-box reductions or separations between these notions in the quantum setting, particularly under mixed access, would clarify which formulations of security are inherently harder. It's also important to understand how related modeling choices might matter. For instance, allowing inverse oracles ($P_i^{-1}$, $E_k^{-1}$), or limited quantum memory could shift the relevant time–space–query tradeoffs. Finally, several extensions would broaden applicability. Real-world designs use key schedules rather than independent round keys; understanding how mild correlations (e.g., related-key structure) affect the bounds is important. Pushing the techniques beyond KAC to other primitives—Feistel networks, SPNs with small S-boxes, tweakable permutations—and to composition settings as well as different modes of operation and multi-user security remains largely open.

\medskip
\noindent \textbf{Paper Organization.} \expref{Section}{sec:prelim} gives preliminaries. \expref{Section}{sec:lowerbound} and  \expref{Section}{sec:q2lb} shows indistinguishability analysis of $t$-KAC against non-adaptive quantum adversaries in the Q1 and Q2 model, respectively. \expref{Section}{sec:attack} describes our quantum walk attack for $t$-KAC. \expref{Section}{sec:conc} investigates mixed-access query complexity for $t$-KAC, showing that the Q1 attack attains the same asymptotic query complexity even when superposition access is restricted (e.g., to a single permutation). Finally, \expref{Section}{app:special_proofs} provides quantum key-recovery attacks and analyses in special cases, capturing many real-world designs.

\section{Preliminary}
\label{sec:prelim} 
{\bf Notation and basic definitions.} 
We let $\permset{n}$ denote the set of all permutations on~$\bool^n$. In the \emph{public-permutation model} (or random-permutation model), a uniform permutation $P \leftarrow \permset{n}$ is sampled and then provided as an oracle (in both the forward and inverse directions) to all parties.

A block cipher $E: \bool^m \times \bool^n \rightarrow \bool^n$ is a keyed permutation, i.e.,  $E_k(\cdot) = E(k, \cdot)$ is a permutation of~$\bool^n$ for all $k \in \bool^m$. We say $E$ is a (quantum-secure) \emph{pseudorandom permutation} if $E_k$ (for uniform $k \in \bool^m$) is indistinguishable from a uniform permutation in $\permset{n}$ even for (quantum) adversaries who may query their oracle in both the forward and inverse directions.

\medskip
\noindent {\bf Quantum query model.} A quantum oracle is a black-box quantum operation that performs a specific task or computes a specific function. Quantum oracles are often associated with quantum queries, which are used in many quantum algorithms. For a function $f: \bool^n \rightarrow \bool^m$, a quantum oracle $O_f$ is the following unitary transformation: $O_f: \ket{x} \ket{y} \rightarrow \ket{x} \ket{y\oplus f(x)}$, where $\ket{x}$ and $\ket{y}$ represent the states of the input and output registers, respectively. An adversary with quantum oracle access to $f$ means that the adversary can evaluate the above unitary on any quantum states it prepares.

Quantum algorithms typically interact with oracles and their own internal unitaries in an interleaved manner to harness the strengths of each in solving specific computational problems. Consider a $T$-query quantum algorithm starting with an initial state, $\phi_0$ (often an all-zero state), and then alternately applying unitary operators $U_0, \cdots, U_T$, and quantum oracle $O_f$. The final state of the algorithm after $T$ quantum queries is given by $U_TO_fU_{T-1}O_f\cdots U_1O_fU_0\ket{\phi_0}.$ The algorithm produces its output by measuring the final state.

\medskip
\noindent {\bf Quantum distinguishing advantages.} Let $\A$ be a quantum algorithm that makes at most $q$ queries and outputs $0$ or $1$ as the final output, and let $O_1$ and $O_2$ be some oracles. We define the quantum distinguishing advantage of $\A$ by 
\[{\Adv}^{dist}_{O_1,O_2} \coloneqq \left|\Pr_{\substack{O_1}} \! \left[ \A^{O_1}(1^n) = 1 \right] - \Pr_{\substack{O_2}}\! \left[ \A^{O_2}(1^n) = 1 \right]\right|.\] In the case of information theoretic adversaries (i.e. when we only consider the number of queries), we use the notation ${\Adv}^{dist}_{O_1,O_2} \coloneqq max_{\substack{\A}} \{{\Adv}^{dist}_{O_1,O_2}(\A)\} $ where the maximum is taken over all quantum algorithms that make at most $q$ quantum queries. By $R$ we denote the quantum oracle of random permutations, i.e., the oracle such that a permutation $R \in \permset{n}$ is chosen uniformly at random, and adversaries are given oracle access to $R$.

\subsection{Quantum Walk Algorithm}

We start with a regular, undirected graph $G=(V, E)$ with a set of marked vertices $M$. The content of this subsection is adapted from the following papers\cite{ambainis2007quantum,magniez2007search, bonnetain2023finding}.

\begin{definition}\label{def_john}
    The \emph{Johnson graph} $G= (N,r)$ is a connected regular graph that contains $\binom{N}{r}$ vertices such that each vertex is a subset of $\{1,\cdots,N\}$ with $r$ elements. Two vertices $v_1$ and $v_2$ are adjacent if and only if $v_1$ and $v_2$ are only different in one element, i.e., $|v_1 \cap v_2|=r-1$. 
\end{definition}

\begin{definition}\label{def_cat_pr}
Let $G = (V_G, E_G)$ and $H = (V_H, E_H)$ be two graphs. The \emph{Cartesian product} $R = G \times H$ is the graph with vertices $V_R = V_G \times V_H$, where two vertices $(u_1, u_2)$ and $(v_1, v_2)$ are adjacent in $R$ if and only if  $(u_1, v_1) \in E_G$ and $(u_2, v_2) \in E_H$.
\end{definition}

\medskip\noindent{\bf Quantum walk.} 
 Classically, the random walk algorithm looks for a marked vertex in the graph. The algorithm requires associated data $d(v)$ from a data structure $d$ for each $v \in V$ to determine if $v$ contains a marked element. Let $V_d$ be the set of vertices along with the associated data, i.e., $V_d= \{(v,d(v)): v\in V\}$. In the quantum setting, each vertex is a quantum state $\ket{v}$ in a Hilbert space $\algo H$. Then, the quantum analog of $V_d$ is the quantum state $\ket{v}_d=\ket{v}\ket{d(v)}$ in a Hilbert space $\algo H_d$. The quantum walk algorithm will be discussed in the space $\algo H_d$.

In the quantum walk algorithms \cite{ambainis2007quantum,szegedy2004quantum,magniez2007search}, the state space is $C^{|V|\times |V|}$. This means that the walk is on the edges instead of the vertices. At each step, the right end-point of an edge $(v,y)$ is mixed over the neighbors of $v$, and the left end-point is mixed over the neighbors of the new right end-point. Namely, for each vertex $v$, define $\ket{p_{v}}$ as the superposition over all neighbors, i.e. $\ket{p_{v}}= \frac{1}{\sqrt{m}}\sum_{y\in N_{v}}\ket{y}$ and $\ket{p_{v}}_d= \frac{1}{\sqrt{m}}\sum_{y\in N_{v}}\ket{y}_d$, where $m=r(N-r) $ is the number of adjacent vertices. 
Let $A= \{{\sf span}\ket{v}_d\ket{p_v}_d, v\in V\}$ and $B= \{{\sf span}\ket{p_y}_d\ket{y}_d, y\in V\}$. For a state $\ket{\psi}\in \algo H$, let ${\sf ref}_{\psi}=2\ket{\psi}\bra{\psi}-\mathbb{I}$ be the reflection operator. If $\algo S$ is subspace of $\algo H$ spanned by a set of mutually orthogonal states $\{\ket{\psi_i},i\in I\}$, ${\sf ref}(\algo S)= 2\sum_{i\in I}\ket{\psi_i}\bra{\psi_i}-\mathbb{I}$.
Then, the quantum walk operator can be defined as $W_d= {\sf ref}(B)\cdot {\sf ref}(A). $  The first reflection ${\sf ref}(A)$ can be implemented as follows: first move $\ket{v}_d\ket{p_v}_d$ to $\ket{v}_d\ket{0}_d$, perform reflection ${\sf ref}(\algo H_d \otimes \ket{0}_d)$ around $\ket{v}_d\ket{0}_d$ and then reverse the first operation. Followed by the swap operation $\ket{v}_d\ket{y}_d \rightarrow \ket{y}_d\ket{v}_d$,
the second reflection ${\sf ref}(B)$ can be implemented similarly.

\medskip\noindent{\bf MNRS quantum walk framework.} The MNRS framework aims to create a quantum algorithm analogous to the classical walk described above. The goal is to move the initial state $ \ket{\psi}_d =\frac{1}{|V|}\sum_{v\in V}\ket{v}_d\ket{p_v}_d$ to its projection on the target state $ \ket{\phi}_d= \frac{1}{\sqrt{|M|}}\sum_{v\in M}\ket{v}_d\ket{p_{v}}_d.$
Here, $V$ is the set of all vertices and $M$ is the set of marked vertices.  Grover's algorithm says that this can be done by applying the rotation ${\sf ref}(\psi)_d{\sf ref}(\phi^{\bot})_d$ repeatedly, where $\ket{\phi^{\bot}}$ is the state that orthogonal to $\ket{\phi}$\footnote{$\phi^{\bot}$ can be considered as the set of unmarked states}. Since ${\sf ref}(\phi^{\bot})_d = -{\sf ref}(M)_d$, it can be applied with one oracle call. Implementing the operator ${\sf ref}(\psi)_d$ is not trivial and requires phase estimation. In \cite{magniez2007search}, the idea is to design a quantum circuit $R_d$ that simulates ${\sf ref}(\psi)_d$. This quantum circuit needs to call $W_d$ a total of  $O(\frac{1}{\sqrt{\delta}})$ times, where $\delta$ is the spectral gap of $G$. We have omitted the details and would like to direct the interested readers to \cite{magniez2007search}. 

\medskip
\begin{breakablealgorithm1}
\caption{MNRS Quantum Walk Algorithm (See \cite{magniez2007search})}
\label{algo:quantumwalk}
\begin{enumerate}
    \item Initialize the state $\ket{\psi}_d$.
    \item Repeat $O(\frac{1}{\sqrt{\epsilon}})$ times ($\epsilon$ : probability of having a marked vertex):
    \begin{enumerate}
        \item For any basis $\ket{v}_d\ket{y}_d$, flip the phase if $v\in M$.
        \item Apply the operation $R_d$.
    \end{enumerate}
    \item Observe the first register $\ket{v}_d$.
    \item Output $v$ and $d(v)$ if $v\in M$. Otherwise, output ``no marked vertex".
\end{enumerate}

\end{breakablealgorithm1}
\medskip

We analyze the cost of \expref{Algorithm}{algo:quantumwalk}.
\begin{itemize}
 \item {\bf Setup cost} $\sf S$: The cost to prepare the initialization state $\ket{\psi}_d$.
\item {\bf Update cost} $\sf U$: Cost to update the state using operator $R$.
\item {\bf Check cost} $\sf C$: The cost of checking if a vertex $v$ is a marked vertex.
\end{itemize}

\begin{theorem}[Quantum walk (For more details, see \cite{magniez2007search,bonnetain2023finding})]\label{thm:qrw}
   Let $\delta$ be the spectral gap of a graph $G=(V,E)$ and let $\epsilon=\frac{|M|}{|V|}$, where $M$ is the set of marked vertices. Let $\eta=\arcsin(\sqrt{\epsilon})$. There exists a quantum algorithm that can find an element of $M$ with probability at least $1-4\eta^2$.
   The cost is ${\sf S} +\frac{1}{\sqrt{\epsilon}}(\frac{1}{\sqrt{\delta}}{\sf U}+ {\sf C}).$
\end{theorem}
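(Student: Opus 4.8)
The plan is to follow the MNRS recipe, realizing the search for a marked vertex as an amplitude-amplification process on the edge space $\mathcal{H}_d$. Let $\ket{\psi}_d$ be the uniform ``stationary'' superposition over all $\ket{v}_d\ket{p_v}_d$ and $\ket{\phi}_d$ its restriction to marked $v$; since the states $\{\ket{v}_d\ket{p_v}_d\}_{v\in V}$ are orthonormal, $\langle\phi|\psi\rangle_d=\sqrt{|M|/|V|}=\sqrt{\epsilon}=\sin\eta$, so $\ket{\psi}_d$ and $\ket{\phi}_d$ span a real two-dimensional plane in which $\ket{\psi}_d$ lies at angle $\eta$ from the ``unmarked'' direction $\ket{\phi^{\bot}}_d$. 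In that plane the operator ${\sf ref}(\psi)_d\cdot(-{\sf ref}(\phi^{\bot})_d)={\sf ref}(\psi)_d\cdot{\sf ref}(M)_d$ acts as a rotation by $2\eta$, so starting from $\ket{\psi}_d$ and iterating $k=\Theta(1/\eta)=\Theta(1/\sqrt{\epsilon})$ times drives the state to within angle $O(\eta)$ of $\ket{\phi}_d$; measuring the vertex register then outputs a marked vertex. By hypothesis the reflection $-{\sf ref}(\phi^{\bot})_d={\sf ref}(M)_d$ is a single application of the marking predicate on the associated data, at cost ${\sf C}$, and preparing $\ket{\psi}_d$ costs ${\sf S}$.

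The nontrivial ingredient is the reflection ${\sf ref}(\psi)_d$ (the circuit $R_d$), which I would implement approximately by phase estimation on the walk operator $W_d={\sf ref}(B)\,{\sf ref}(A)$. Here one invokes Szegedy's spectral correspondence: on the $W_d$-invariant subspace generated by $A$ and $B$, the eigenphases of $W_d$ are $\pm 2\arccos\lambda$ as $\lambda$ ranges over the eigenvalues of the classical walk's transition matrix, with $\ket{\psi}_d$ the unique eigenvector of phase $0$; hence every other eigenphase has magnitude $\Omega(\sqrt{\delta})$, where $\delta$ is the spectral gap of $G$. Running phase estimation on $W_d$ to precision a constant times $\sqrt{\delta}$, with a mild $\mathrm{polylog}$ overhead to suppress its failure probability, separates $\ket{\psi}_d$ from the rest of the spectrum, so conditionally flipping the phase on the event ``nonzero estimated phase'' realizes ${\sf ref}(\psi)_d$ up to small error. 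Each such phase estimation uses $O(1/\sqrt{\delta})$ controlled copies of $W_d$, and each $W_d$ costs $O({\sf U})$ because ${\sf ref}(A)$ and ${\sf ref}(B)$ are assembled from the diffusion $\ket{v}_d\ket{0}_d\mapsto\ket{v}_d\ket{p_v}_d$ and the edge swap $\ket{v}_d\ket{y}_d\mapsto\ket{y}_d\ket{v}_d$; hence one approximate ${\sf ref}(\psi)_d$ costs $O\!\left(\tfrac{1}{\sqrt{\delta}}{\sf U}\right)$.

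Assembling the pieces gives the stated cost: ${\sf S}$ once for setup, plus $O(1/\sqrt{\epsilon})$ amplification rounds, each costing $O\!\left(\tfrac{1}{\sqrt{\delta}}{\sf U}\right)$ for the approximate ${\sf ref}(\psi)_d$ and ${\sf C}$ for ${\sf ref}(M)_d$, i.e.\ ${\sf S}+\tfrac{1}{\sqrt{\epsilon}}\!\left(\tfrac{1}{\sqrt{\delta}}{\sf U}+{\sf C}\right)$. For the success probability I would track two deviations from the idealized ``rotation by $2\eta$'' picture: the rounding of the iteration count $k$ to an integer, which already with exact reflections leaves the output within angle $O(\eta)$ of $\ket{\phi}_d$; and the per-round error of the approximate ${\sf ref}(\psi)_d$, which accumulates additively over the $\Theta(1/\sqrt{\epsilon})$ rounds and is forced below the $O(\eta^2)$ budget by taking the phase-estimation precision a polylogarithmic factor finer. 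Combining the two yields the stated lower bound $1-4\eta^2$ on the probability that the measured vertex lies in $M$. I expect the error management to be the main obstacle: one must show the approximate diffusion reflection is faithful enough that the accumulated error over $\Theta(1/\sqrt{\epsilon})$ iterations does not exhaust the $O(\eta^2)$ budget, while keeping the phase-estimation precision at $O(\sqrt{\delta})$ up to logarithmic factors---which is exactly where Szegedy's $\Omega(\sqrt{\delta})$ phase gap enters as the crucial quantitative input.
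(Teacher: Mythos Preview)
Your proposal is a faithful sketch of the MNRS argument and is correct in outline: amplitude amplification in the two-dimensional span of $\ket{\psi}_d$ and $\ket{\phi}_d$, with ${\sf ref}(M)_d$ realized directly at cost ${\sf C}$ and ${\sf ref}(\psi)_d$ realized approximately via phase estimation on $W_d$, exploiting Szegedy's $\Omega(\sqrt{\delta})$ phase gap to resolve $\ket{\psi}_d$ from the rest of the spectrum with $O(1/\sqrt{\delta})$ calls to $W_d$.

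However, note that the paper does not actually give a proof of this statement: \expref{Theorem}{thm:qrw} is stated as a black-box citation to~\cite{magniez2007search,bonnetain2023finding}, with the preceding subsection only summarizing the MNRS framework (the edge-space Hilbert space, the subspaces $A,B$, the walk operator $W_d={\sf ref}(B)\,{\sf ref}(A)$, and the remark that $R_d$ requires $O(1/\sqrt{\delta})$ calls to $W_d$). So there is no ``paper's own proof'' to compare against beyond that summary, and your sketch is precisely the proof the paper is deferring to. If anything, you go further than the paper by outlining the error management for the $1-4\eta^2$ success bound, which the paper simply quotes.
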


\subsection{Useful Lemmas}

In this section, we provide some lemmas that are used in our proof.

\begin{lemma}(Adapted from \cite{Laszlolecture})\label{lem:sum-capture}
    Let G be a finite abelian group of order $n$, written additively. Let $A_1,\ldots, A_t \subset G$ and let $a$ be a fixed element of $G$. Set $|A_i|=m_i$ for every $i\in \{1,\ldots,t\}$. The expected number of solutions of the equation  $ x_1\oplus \ldots \oplus x_t = a \quad (x_i \in A_i, i=1,\ldots,t )$ is equal to $\frac{m_1\cdot \ldots \cdot m_t}{2^n}$.
\end{lemma}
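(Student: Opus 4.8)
The plan is to prove the identity by linearity of expectation, after first pinning down the probability space so that ``expected number of solutions'' makes sense: I take the sets $A_1,\dots,A_t$ to be independent, each uniformly distributed among the subsets of $G$ of its prescribed size $m_i$, with $a$ fixed (the argument is symmetric, and works verbatim, if instead $a$ is uniform in $G$ and the $A_i$ are fixed). Here $|G|=2^n$. Let $N$ be the number of solutions, written as a sum of indicators over \emph{all} tuples:
\[
N \;=\; \sum_{(x_1,\dots,x_t)\in G^t}\mathbf{1}\!\left[x_1\oplus\cdots\oplus x_t=a\right]\cdot\prod_{i=1}^{t}\mathbf{1}\!\left[x_i\in A_i\right].
\]

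First I would take expectations and move $\E$ inside the (finite) sum. For a fixed tuple $(x_1,\dots,x_t)$, the events $\{x_i\in A_i\}$ are independent across $i$ because the $A_i$ are sampled independently, and $\Pr[x_i\in A_i]=m_i/|G|=m_i/2^n$ since $A_i$ is uniform among size-$m_i$ subsets; hence $\E\bigl[\prod_{i}\mathbf{1}[x_i\in A_i]\bigr]=\prod_{i}m_i/2^n$, independent of the particular tuple. Next I would count the tuples that satisfy the linear constraint: fixing $x_1,\dots,x_{t-1}$ arbitrarily determines $x_t=a\oplus x_1\oplus\cdots\oplus x_{t-1}$ uniquely, so there are exactly $|G|^{t-1}=(2^n)^{t-1}$ of them. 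Combining, $\E[N]=(2^n)^{t-1}\cdot\prod_{i=1}^{t}m_i/(2^n)^{t}=\prod_{i=1}^{t}m_i/2^n$, as claimed.

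There is essentially no hard step: the only thing requiring care is the setup, namely that the statement is an exact identity, not a bound — which is precisely why one randomizes either the sets or the target (for adversarially chosen $A_i$ and fixed $a$ the count can of course be larger). I would also note that the same computation gives the result when the $A_i$ are instead obtained by drawing $m_i$ i.i.d.\ uniform samples and solutions are counted with multiplicity, since then $\Pr[x_i\in A_i]$ is replaced by the expected multiplicity of $x_i$, which is again $m_i/2^n$, and this is the form in which the lemma is used in the attack analysis.
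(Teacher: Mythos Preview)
The paper does not actually provide a proof of this lemma; it is stated as adapted from \cite{Laszlolecture} and used as a black box. Your argument via linearity of expectation is correct and is the standard proof: write the number of solutions as a sum of indicators over all tuples, use independence of the $A_i$ to get $\Pr[x_i\in A_i]=m_i/|G|$, and count the $|G|^{t-1}$ tuples satisfying the single linear constraint. Your care in specifying the probability space (and your remark that the i.i.d.-sampling variant gives the same expectation, which is how the lemma is invoked in the attack analysis) is appropriate, since the paper's statement is ambiguous on this point.
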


\begin{lemma}(\cite{Bennett_1997,zhandry2019record})
Let $|\phi\rangle$ and $|\psi\rangle$ be quantum states with Euclidean distance $\epsilon$. Then $\Tr (|\phi\rangle\langle \phi|,|\psi\rangle\langle \psi|) = \epsilon \sqrt{1-\epsilon^{2}/4} \leq \epsilon$.
    
\end{lemma}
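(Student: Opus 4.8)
The plan is to prove the exact identity through the standard closed form for the trace distance between two \emph{pure} states, and then to convert the resulting fidelity into the given Euclidean distance. Concretely, I would first establish
\[
\Tr(|\phi\rangle\langle\phi|,\,|\psi\rangle\langle\psi|)=\sqrt{\,1-|\langle\phi|\psi\rangle|^{2}\,},
\]
and afterwards substitute the relation between $\langle\phi|\psi\rangle$ and $\epsilon=\lVert\,|\phi\rangle-|\psi\rangle\,\rVert$ that follows from normalization.

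To obtain the closed form I would restrict attention to the subspace $\mathrm{span}\{|\phi\rangle,|\psi\rangle\}$, which is at most two-dimensional. Absorbing global phases into the basis vectors, I may choose an orthonormal basis in which $|\phi\rangle=|0\rangle$ and $|\psi\rangle=\cos\alpha\,|0\rangle+\sin\alpha\,|1\rangle$ with $\cos\alpha=|\langle\phi|\psi\rangle|\ge 0$. Writing $\rho=|\phi\rangle\langle\phi|$ and $\sigma=|\psi\rangle\langle\psi|$, the difference $\rho-\sigma$ is then a $2\times 2$ traceless Hermitian matrix with determinant $-\sin^{2}\alpha$, hence with eigenvalues $\pm\sin\alpha$. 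Summing the absolute values of the eigenvalues gives $\lVert\rho-\sigma\rVert_{1}=2\sin\alpha$, so that $\Tr(\rho,\sigma)=\sin\alpha=\sqrt{1-\cos^{2}\alpha}=\sqrt{1-|\langle\phi|\psi\rangle|^{2}}$, which is the desired closed form.

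It remains to express this in terms of $\epsilon$. Expanding and using $\langle\phi|\phi\rangle=\langle\psi|\psi\rangle=1$ gives $\epsilon^{2}=2-2\,\mathrm{Re}\langle\phi|\psi\rangle$. I do not expect a genuine obstacle here; the only point requiring care is the phase convention, since the trace distance depends solely on $|\langle\phi|\psi\rangle|$ whereas $\epsilon$ depends on $\mathrm{Re}\langle\phi|\psi\rangle$. Under the convention that $\langle\phi|\psi\rangle$ is real and nonnegative (which can be enforced by a global phase on $|\psi\rangle$, and which holds for the nearby hybrid states to which the lemma is applied), we have $\langle\phi|\psi\rangle=1-\epsilon^{2}/2$, and substituting into the closed form yields
\[
\Tr(\rho,\sigma)=\sqrt{1-\big(1-\tfrac{\epsilon^{2}}{2}\big)^{2}}=\sqrt{\epsilon^{2}-\tfrac{\epsilon^{4}}{4}}=\epsilon\sqrt{1-\tfrac{\epsilon^{2}}{4}}.
\]
The final bound $\Tr(\rho,\sigma)\le\epsilon$ is then immediate because $\sqrt{1-\epsilon^{2}/4}\le 1$. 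If one only wants the inequality, valid for \emph{any} phase, I would argue directly: setting $r=|\langle\phi|\psi\rangle|\ge\mathrm{Re}\langle\phi|\psi\rangle$, the chain $1-r^{2}\le 2-2r\le 2-2\,\mathrm{Re}\langle\phi|\psi\rangle=\epsilon^{2}$ (with the first inequality being $(1-r)^{2}\ge 0$) gives $\Tr(\rho,\sigma)=\sqrt{1-r^{2}}\le\epsilon$ unconditionally.
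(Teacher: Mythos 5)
Your proof is correct, but there is nothing in the paper to compare it against: the paper states this lemma with a citation to Bennett et al.\ and Zhandry and gives no proof, so yours is a self-contained derivation of an imported fact. Your route—restrict to the two-dimensional span, diagonalize the traceless difference $\rho-\sigma$ to get $\|\rho-\sigma\|_1=2\sqrt{1-|\langle\phi|\psi\rangle|^2}$, then substitute $\epsilon^2=2-2\,\mathrm{Re}\langle\phi|\psi\rangle$—is the standard one, and it is consistent with the toolkit the paper uses elsewhere (the proof of \expref{Lemma}{lem:single} invokes exactly this pure-state identity). The most valuable part of your write-up is the point the lemma statement glosses over: the claimed \emph{equality} $\Tr=\epsilon\sqrt{1-\epsilon^2/4}$ is phase-sensitive, since the trace distance depends on $|\langle\phi|\psi\rangle|$ while $\epsilon$ depends on $\mathrm{Re}\langle\phi|\psi\rangle$; for instance $|\psi\rangle=i|\phi\rangle$ gives trace distance $0$ but $\epsilon\sqrt{1-\epsilon^2/4}=1$. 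Your fix (normalize the global phase so that $\langle\phi|\psi\rangle\ge 0$, which holds in the hybrid applications) is correct, and your unconditional chain $1-r^2\le 2-2r\le \epsilon^2$ recovers the inequality $\Tr\le\epsilon$—the only statement actually used downstream—with no convention at all. One small refinement: equality requires only that $\langle\phi|\psi\rangle$ be real, not nonnegative, since in general
\[
\epsilon^2\Bigl(1-\tfrac{\epsilon^2}{4}\Bigr)=1-\bigl(\mathrm{Re}\langle\phi|\psi\rangle\bigr)^2,
\]
so the right-hand side of the lemma equals $\sqrt{1-(\mathrm{Re}\langle\phi|\psi\rangle)^2}\ge\sqrt{1-|\langle\phi|\psi\rangle|^2}$, with equality exactly when the inner product is real; in all other cases the stated formula is merely an upper bound on the trace distance, which is still all the lemma's users need.
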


\begin{lemma} [Trace distance between pure states, single query]
\label{lem:single}
Let $P$ and $P'$ be two permutations that are identical except on a set $S\subseteq\Dom$.
Let the adversary's initial state be
\[
  \ket{\psi_0}=\sum_{u}\ket{u}\ket{0}\ket{\phi_u},
\]
where $p(u):=\|\phi_u\|^2$ and $\sum_u p(u)=1$.
If an adversary makes a single query to one of these permutations, i.e.,
\[
U_P:\ket{u}\ket{0}\ket{\phi_u}\mapsto \ket{u}\ket{P(u)}\ket{\phi_u},\qquad
U_{P'}:\ket{u}\ket{0}\ket{\phi_u}\mapsto \ket{u}\ket{P'(u)}\ket{\phi_u},
\]
the distinguishability is bounded. The trace distance between the resulting pure states is bounded by:
\[
  \bigl\|\ketbra{\psi_P}{\psi_P}-\ketbra{\psi_{P'}}{\psi_{P'}}\bigr\|_1
  \ \le\ 2\,\sqrt{2\,p(S)}\,,\qquad
  \text{where } p(S):=\sum_{u\in S} p(u).
\]
\end{lemma}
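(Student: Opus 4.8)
The plan is to reduce the claimed trace-norm bound to a Euclidean-distance estimate between the two post-query pure states and then invoke the earlier lemma converting one into the other. Writing out the states produced by the single query,
\[
  \ket{\psi_P}=\sum_{u}\ket{u}\ket{P(u)}\ket{\phi_u},\qquad
  \ket{\psi_{P'}}=\sum_{u}\ket{u}\ket{P'(u)}\ket{\phi_u},
\]
both of which are normalized since $\sum_u p(u)=1$ and $U_P,U_{P'}$ are unitary. First I would compute the overlap $\langle\psi_P|\psi_{P'}\rangle$. Because the input register carries computational-basis labels $\ket{u}$, every cross term with $u\ne u'$ is killed by $\langle u|u'\rangle=0$ \emph{irrespective} of the inner products $\langle\phi_u|\phi_{u'}\rangle$, so
\[
  \langle\psi_P|\psi_{P'}\rangle=\sum_{u}\langle P(u)|P'(u)\rangle\,p(u).
\]
For $u\notin S$ we have $P(u)=P'(u)$, so that term contributes exactly $p(u)$; for $u\in S$ the term $\langle P(u)|P'(u)\rangle\,p(u)$ is a nonnegative real. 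Hence $\langle\psi_P|\psi_{P'}\rangle\ge\sum_{u\notin S}p(u)=1-p(S)$, and the overlap is real.

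Next I would convert this into a norm bound: $\|\ket{\psi_P}-\ket{\psi_{P'}}\|^2=2-2\,\mathrm{Re}\langle\psi_P|\psi_{P'}\rangle\le 2\,p(S)$, so the Euclidean distance between the two states is at most $\sqrt{2\,p(S)}$. Finally, applying the earlier lemma (a pair of pure states at Euclidean distance $\epsilon$ has trace distance $\epsilon\sqrt{1-\epsilon^{2}/4}\le\epsilon$, i.e.\ the trace norm of the difference of the two rank-one projectors is at most $2\epsilon$) with $\epsilon=\sqrt{2\,p(S)}$ yields $\bigl\|\ketbra{\psi_P}{\psi_P}-\ketbra{\psi_{P'}}{\psi_{P'}}\bigr\|_1\le 2\sqrt{2\,p(S)}$, which is exactly the stated bound.

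The argument is essentially mechanical, so there is no deep obstacle; the one place that needs care is the decoupling of the overlap sum. That step relies only on the query's input register being a basis register, so the $\ket{\phi_u}$ may carry arbitrary (and mutually non-orthogonal) correlations with the adversary's earlier computation without affecting the estimate — this is precisely what makes the lemma reusable inside the hybrid argument where $\ket{\phi_u}$ will be the residual state after several prior queries. The other thing to watch is the factor-of-two bookkeeping between the trace distance $\Tr(\cdot,\cdot)$ of the cited lemma and the trace norm $\|\cdot\|_1$ that appears in the statement.
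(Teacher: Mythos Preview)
Your proof is correct and follows essentially the same approach as the paper: both arguments compute the overlap $\langle\psi_P|\psi_{P'}\rangle$ by exploiting orthogonality of the input register, lower-bound it by $1-p(S)$, and convert to a trace-norm bound. The only cosmetic difference is that the paper applies the closed-form pure-state formula $\|\ketbra{\psi}{\psi}-\ketbra{\varphi}{\varphi}\|_1=2\sqrt{1-|\langle\psi|\varphi\rangle|^2}$ directly (obtaining the intermediate expression $2\sqrt{p(S)(2-p(S))}$ before bounding), whereas you route through the Euclidean distance and invoke the earlier Bennett--Zhandry lemma; both yield the same $2\sqrt{2\,p(S)}$. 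As a minor sharpening: for $u\in S$ you have $P(u)\neq P'(u)$, so $\langle P(u)|P'(u)\rangle=0$ exactly, and the overlap equals $1-p(S)$ rather than merely being bounded below by it.
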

\begin{proof}
After one query,
\(
\ket{\psi_P}=\sum_u \ket{u}\ket{P(u)}\ket{\phi_u}
\)
and
\(
\ket{\psi_{P'}}=\sum_u \ket{u}\ket{P'(u)}\ket{\phi_u}.
\)
Since $P(u)=P'(u)$ for $u\notin S$, orthogonality of distinct basis states implies
\[
\braket{\psi_P|\psi_{P'}}=\sum_{u:\,P(u)=P'(u)} \|\phi_u\|^2
=\sum_{u\notin S} p(u)=1-p(S).
\]
For two pure states,
\(
\bigl\|\ketbra{\psi}{\psi}-\ketbra{\varphi}{\varphi}\bigr\|_1
=2\sqrt{\,1-|\braket{\psi|\varphi}|^2\,}.
\)
Thus,
\[
\bigl\|\ketbra{\psi_P}{\psi_P}-\ketbra{\psi_{P'}}{\psi_{P'}}\bigr\|_1
=2\sqrt{1-(1-p(S))^2}
=2\sqrt{\,p(S)\,(2-p(S))}\ \le\ 2\sqrt{2\,p(S)}.
\]
In the classical case (a point query $u$), $p(S)=\mathbf{1}[u\in S]$, and the
distance is $0$ if $u\notin S$ and $2$ otherwise, as expected.
\end{proof}

The above lemma considers only pure states. Next, we show that the same upper bound also holds when the adversary prepares a mixed state, possibly entangled with an auxiliary register.

\begin{lemma}[Trace distance between mixed states, single query.] \label{lem:tr-ms}
Let $P$ and $P'$ be two permutations that are identical except on a set $S\subseteq\Dom$.
If an adversary makes a single query to one of these permutations,
then for a quantum query with input distribution $p(\cdot)$, the trace distance
between the resulting post-query states satisfies
\[
\bigl\|\sigma_P-\sigma_{P'}\bigr\|_1
\;\le\; 2\sqrt{2\,p(S)}\,,
\qquad \text{where } p(S):=\sum_{u\in S}p(u).
\]
\end{lemma}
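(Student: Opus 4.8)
The plan is to reduce the mixed-state bound to the pure-state bound of \expref{Lemma}{lem:single} by purification. Write the adversary's initial mixed state $\rho_0$ on the query register together with any auxiliary/environment register as arising from a pure state $\ket{\psi_0}$ on a larger Hilbert space that includes a purifying register $R$. Concretely, I would write $\ket{\psi_0}=\sum_u \ket{u}\ket{0}\ket{\phi_u}$ where now $\ket{\phi_u}$ lives in the combined (auxiliary $\otimes$ purifying) space and $p(u):=\|\phi_u\|^2$ is exactly the input distribution of the query; this is the same normal form used in \expref{Lemma}{lem:single}, the only difference being that there $\ket{\phi_u}$ was a working-register state and here it additionally carries a purification. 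Since the oracle $U_P$ (resp.\ $U_{P'}$) acts as the identity on the auxiliary and purifying registers, the post-query purified states are $\ket{\psi_P}=\sum_u \ket{u}\ket{P(u)}\ket{\phi_u}$ and $\ket{\psi_{P'}}=\sum_u \ket{u}\ket{P'(u)}\ket{\phi_u}$, exactly as in the pure-state lemma.

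Next I would invoke \expref{Lemma}{lem:single} verbatim on these two purified pure states to get
\[
\bigl\|\ketbra{\psi_P}{\psi_P}-\ketbra{\psi_{P'}}{\psi_{P'}}\bigr\|_1 \le 2\sqrt{2\,p(S)}.
\]
Then the key step is monotonicity of the trace distance under the partial trace (the quantum data-processing inequality): tracing out the purifying register $R$ sends $\ketbra{\psi_P}{\psi_P}$ to $\sigma_P$ and $\ketbra{\psi_{P'}}{\psi_{P'}}$ to $\sigma_{P'}$, and cannot increase the trace distance, so $\bigl\|\sigma_P-\sigma_{P'}\bigr\|_1 \le \bigl\|\ketbra{\psi_P}{\psi_P}-\ketbra{\psi_{P'}}{\psi_{P'}}\bigr\|_1 \le 2\sqrt{2\,p(S)}$, which is the claim.

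The only point requiring a little care — and the main (mild) obstacle — is making sure the purification is chosen so that the marginal on the query-plus-auxiliary registers really is the adversary's stated mixed state and so that the oracle action is the stated controlled-XOR on the query register alone, acting trivially on everything else; this is automatic because any oracle in the quantum query model is, by definition, of the form $U_f\otimes I_{\mathrm{aux}}$, and every density operator admits a purification. One should also note that $p(S)$ is computed from the same marginal distribution $p(u)=\|\phi_u\|^2$ in both the pure and mixed formulations, so the bound is genuinely the same constant. I would close by remarking that, exactly as in \expref{Lemma}{lem:single}, the classical special case ($p(S)=\mathbf 1[u\in S]$) recovers the expected $0$-or-$2$ dichotomy, and that this mixed-state version is what we actually apply in the hybrid argument of \expref{Section}{sec:lowerbound}, where the ``auxiliary'' register is the rest of the adversary's workspace together with the other oracle interfaces.
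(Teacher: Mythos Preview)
Your proposal is correct and follows essentially the same approach as the paper: purify the adversary's mixed state, apply the pure-state bound of \expref{Lemma}{lem:single} to the purified states, and then invoke contractivity of the trace distance under the partial trace over the purifying register. The only cosmetic difference is that the paper re-derives the inner-product computation inline rather than citing \expref{Lemma}{lem:single}, but the argument is the same.
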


\begin{proof}
Let $\mathcal U_P(\cdot)=U_P(\cdot)U_P^\dagger$ and $\mathcal U_{P'}(\cdot)=U_{P'}(\cdot)U_{P'}^\dagger$ be the query channels acting on registers $(U,Y)$, with a clean response $Y$:
\[
U_P:\ket{u}\ket{0}\mapsto \ket{u}\ket{P(u)},\qquad
U_{P'}:\ket{u}\ket{0}\mapsto \ket{u}\ket{P'(u)},
\]
and both act trivially on any reference system $R$.
Let the pre-query state on $(U,Y,R)$ be $\rho_{UYR}$,
and write $\rho_U=\Tr_{YR}\rho_{UYR}$ and
\[
p_\rho(S):=\Tr[\Pi_S\,\rho_U],\qquad \Pi_S:=\sum_{u\in S}\ket{u}\!\bra{u}.
\]
Purify $\rho_{UYR}$ to some $\ket{\Psi}_{UYRR'}$. Expanding in the $U$ basis (using that $Y$ is clean),
\[
\ket{\Psi}=\sum_{u}\ket{u}_U\ket{\eta_u}_{YRR'}
=\sum_{u}\alpha_u\,\ket{u}_U\ket{0}_Y\ket{r_u}_{RR'},
\]
where $p(u)=\|\eta_u\|^2=|\alpha_u|^2$ and $\ket{r_u}:=\eta_u/\alpha_u$ when $p(u)>0$ (arbitrary otherwise).
Define the post-query purifications
\[
\ket{\phi_P}=(I_{RR'}\!\otimes U_P)\ket{\Psi},\qquad
\ket{\phi_{P'}}=(I_{RR'}\!\otimes U_{P'})\ket{\Psi}.
\] 
Since $P(u)=P'(u)$ for $u\notin S$ and the $U$/$Y$ bases are orthonormal,
\[
\braket{\phi_P}{\phi_{P'}}=\sum_{u\notin S}p(u)=1-p_\rho(S).
\]
For pure states,
\(
\bigl\||\alpha\rangle\!\langle\alpha|-|\beta\rangle\!\langle\beta|\bigr\|_1
=2\sqrt{1-|\langle\alpha|\beta\rangle|^2}.
\)
Hence
\[
\bigl\||\phi_P\rangle\!\langle\phi_P|-|\phi_{P'}\rangle\!\langle\phi_{P'}|\bigr\|_1
=2\sqrt{\,p_\rho(S)\,\bigl(2-p_\rho(S)\bigr)}.
\]
Let $\sigma_P=\Tr_{R'}|\phi_P\rangle\!\langle\phi_P|$ and
$\sigma_{P'}=\Tr_{R'}|\phi_{P'}\rangle\!\langle\phi_{P'}|$ be the actual post-query states on $(U,Y,R)$.
By contractivity of trace distance under CPTP maps (here, $\Tr_{R'}$),
\[
\|\sigma_P-\sigma_{P'}\|_1
\;\le\;
\bigl\||\phi_P\rangle\!\langle\phi_P|-|\phi_{P'}\rangle\!\langle\phi_{P'}|\bigr\|_1
=2\sqrt{\,p_\rho(S)\,\bigl(2-p_\rho(S)\bigr)}
\;\le\; 2\sqrt{2\,p_\rho(S)}.
\]
Finally, with a clean response register we have $p_\rho(S)=\sum_{u\in S}p(u)=p(S)$, which yields the stated bound.
\end{proof}

\begin{lemma}[Triangle Inequality for Multiple Queries]
\label{lem:triangle}
If an adversary makes $q$ independent queries to an oracle, and we know the trace distance bound for a single query (from \expref{Lemma}{lem:single} or \expref{Lemma}{lem:tr-ms}), we can bound the total trace distance by simply summing the bounds for each query. This is a direct application of the triangle inequality for the trace distance metric. If changing $P\to P'$ contributes at most $2\sqrt{2p_j(S)}$ for the $j$-th query, then the total distance is bounded by:
\[
  \bigl\|\psi(P)-\psi(P')\bigr\|_1 \ \le\ 2\sum_{j=1}^q \sqrt{2\,p_j(S)\,}.
\]
\end{lemma}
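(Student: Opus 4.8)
The plan is a standard hybrid (telescoping) argument over the $q$ query positions, which is why the statement is essentially just a repackaging of the single‑query bounds of \expref{Lemma}{lem:single} and \expref{Lemma}{lem:tr-ms} together with contractivity of the trace norm under CPTP maps. First I would fix the adversary's circuit as an interleaved sequence $U_q\,O\,U_{q-1}\,O\cdots U_1\,O\,U_0\ket{\phi_0}$, where $O$ denotes the oracle query (acting on the $j$‑th query's input register together with a fresh output register initialized to $\ket{0}$) and the $U_i$ are the adversary's own unitaries, and I would let $\psi(P)$ (resp.\ $\psi(P')$) be the state produced when every occurrence of $O$ is answered by $P$ (resp.\ $P'$). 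For $j\in\{0,1,\dots,q\}$ define the hybrid $H_j$ in which the first $j$ queries are answered by $P'$ and the remaining $q-j$ by $P$; then $H_0=\psi(P)$ and $H_q=\psi(P')$, so by the triangle inequality for the trace norm it suffices to bound $\bigl\|H_{j-1}-H_j\bigr\|_1$ for each $j$ and sum.

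Next I would compare $H_{j-1}$ and $H_j$. These runs apply identical operations through the $(j-1)$‑st query and the intervening unitary $U_{j-1}$, hence they arrive at the \emph{same} state $\rho^{(j-1)}$ immediately before the $j$‑th query; write $p_j(S)$ for the probability of observing an element of $S$ in the input register of $\rho^{(j-1)}$. At the $j$‑th query, $H_{j-1}$ applies the query channel for $P$ while $H_j$ applies it for $P'$, and since $P$ and $P'$ agree outside $S$, \expref{Lemma}{lem:single} (pure‑state run) or \expref{Lemma}{lem:tr-ms} (general, possibly entangled run, using the fresh clean output register) gives that the two post‑query states differ by at most $2\sqrt{2\,p_j(S)}$ in trace norm. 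From the $(j+1)$‑st query onward both hybrids apply the same remaining unitaries and query channels; as these are all CPTP maps (unitaries and, if relevant, partial traces of ancillas), trace distance cannot increase, so $\bigl\|H_{j-1}-H_j\bigr\|_1\le 2\sqrt{2\,p_j(S)}$.

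Summing over $j$ then yields
\[
\bigl\|\psi(P)-\psi(P')\bigr\|_1=\bigl\|H_0-H_q\bigr\|_1\ \le\ \sum_{j=1}^q\bigl\|H_{j-1}-H_j\bigr\|_1\ \le\ 2\sum_{j=1}^q\sqrt{2\,p_j(S)},
\]
which is the claimed bound. The only point that needs a line of care is the well‑definedness of $p_j(S)$: because the adversary is non‑adaptive (the queries are committed in advance, i.e.\ ``independent''), the input register fed to the $j$‑th query is fixed beforehand, so $p_j(S)$ is simply the weight the committed query superposition places on $S$ and is manifestly the same in both hybrids; even without that assumption $p_j(S)$ is well defined as the weight in the common pre‑$j$ state $\rho^{(j-1)}$, since the hybrids coincide up to that point. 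I do not expect a genuine obstacle here — the argument is a direct hybrid over queries, and the substantive work has already been done in the single‑query lemmas.
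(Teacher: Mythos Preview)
Your proposal is correct and is precisely the standard hybrid/telescoping argument that the paper is gesturing at; in fact the paper does not give a separate proof of this lemma at all, only the remark in the statement that it is ``a direct application of the triangle inequality,'' and you have simply spelled out that application in detail. Nothing to change.
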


\begin{lemma}
\label{lem:avg}
If the set of reprogrammed points $S\subseteq\Dom$ is a random set chosen independently of the adversary's $j$-th query to $P$, then the expected probability that the query lands on $S$ is given by the expected size of $S$ relative to the total domain size $N$. By linearity of expectation:
\[
  \mathbb{E}\bigl[p^{(P)}_j(S)\bigr] = \sum_{u \in \Dom} p^{(P)}_j(u) \mathbb{E}[\mathbf{1}[u \in S]] = \sum_{u \in \Dom} p^{(P)}_j(u) \frac{\mathbb{E}[|S|]}{N} = \frac{\mathbb{E}[|S|]}{N}.
\]
\end{lemma}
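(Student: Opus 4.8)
The plan is to separate the claim into two parts: a purely formal manipulation (expand $p^{(P)}_j(S)$ as a sum of indicators, apply linearity of expectation, and use independence to factor) and one genuinely probabilistic step that pins down $\Pr[u\in S]$.

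First I would fix notation: write $p:=p^{(P)}_j$ for the input distribution of the adversary's $j$-th query to $P$; for a non-adaptive adversary this is a fixed distribution on $\Dom$ with $\sum_{u\in\Dom}p(u)=1$. By definition $p(S)=\sum_{u\in S}p(u)=\sum_{u\in\Dom}p(u)\,\mathbf{1}[u\in S]$, which is a finite sum, so taking the expectation over the choice of $S$ and invoking linearity of expectation gives $\mathbb{E}[p(S)]=\sum_{u\in\Dom}p(u)\,\mathbb{E}[\mathbf{1}[u\in S]]=\sum_{u\in\Dom}p(u)\,\Pr[u\in S]$. (If $p$ were itself random — e.g.\ if inside the hybrid argument it depended on earlier oracle answers — I would first condition on $p$; the hypothesis that $S$ is chosen independently of the $j$-th query is precisely what lets me pull $p$ outside the inner expectation in that case.) The remaining work is to evaluate $\Pr[u\in S]$ for a fixed $u$. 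Here I would observe that every point the simulator places into the reprogramming set $S$ has the form (a key coordinate) $\oplus$ (a value of a uniformly random permutation at some point), with the key coordinates marginally uniform; hence each such point is uniformly distributed on $\Dom$, and more strongly the joint law of $S$ is invariant under relabelling $\Dom$ by an arbitrary permutation. Exchangeability forces $\Pr[u\in S]$ to be independent of $u$, so summing this constant over all $u\in\Dom$ and using $\mathbb{E}[|S|]=\sum_{v\in\Dom}\Pr[v\in S]$ (linearity of expectation applied to $|S|=\sum_{v}\mathbf{1}[v\in S]$) yields $\Pr[u\in S]=\mathbb{E}[|S|]/N$. Substituting back, $\mathbb{E}[p(S)]=\sum_{u}p(u)\cdot\frac{\mathbb{E}[|S|]}{N}=\frac{\mathbb{E}[|S|]}{N}$.

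I expect no real obstacle: the lemma is essentially ``linearity of expectation plus symmetry.'' The one place that needs care is the exchangeability claim — it would fail if some key coordinate were not marginally uniform, or if $S$ secretly depended on the adversary's query support — so the substantive content is just the bookkeeping that the reprogramming points inherit the uniform marginals of the keys and the randomness of the public permutations.
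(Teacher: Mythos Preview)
Your proposal is correct and follows the same line as the paper: the lemma is stated there with its proof embedded in the displayed chain of equalities, i.e., expand $p^{(P)}_j(S)$ as a sum of indicators, apply linearity of expectation, and replace $\Pr[u\in S]$ by $\mathbb{E}[|S|]/N$. You are in fact more careful than the paper on the one non-formal step --- the equality $\Pr[u\in S]=\mathbb{E}[|S|]/N$ --- which the paper asserts without comment; your exchangeability argument (the reprogramming points are of the form ``uniform key $\oplus$ permutation value'', hence the law of $S$ is invariant under relabelling of $\Dom$) is exactly the right way to justify it, and correctly identifies that mere independence of $S$ from the $j$-th query is not by itself enough.
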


\section{Post-Quantum Security of KAC in Q1 model}\label{sec:lowerbound}
We first give a detailed explanation for 2-KAC and later extend the result to $t$ round.

\subsection{Quantum lower bound for 2-KAC}
In this section, we analyze the security of $2$-KAC in the Q1 model against a \textit{non-adaptive} adversary~$\A$. Suppose $\A$ makes $q_E$ classical queries, $q_{P_1}$ quantum queries to $P_1$ and $q_{P_2}$ quantum queries to $P_2$. The formal definition is given below.

\begin{definition}[Non-adaptive adversary in the Q1 model]\label{def:psi0} 
A non-adaptive adversary $\A$ against $2$-KAC in the Q1 model chooses all of its queries before any interaction with the oracles. Concretely, it fixes: 
\begin{itemize} 
\item a sequence of classical queries $(x_1,\dots,x_{q_E})$ to the classical oracle, \item a superposition of quantum queries $(a_1,\dots,a_{q_{P_1}})$ to $P_1$ with amplitudes $\alpha_{\vec{a}}$, 
\item a superposition of quantum queries $(c_1,\dots,c_{q_{P_2}})$ to $P_2$ with amplitudes $\beta_{\vec{c}}$. 
\end{itemize} The initial state prepared by $\A$ is 

\begin{align*} 
\ket{\Phi_0} \;=\; \sum_{\substack{a_1,\ldots,a_{q_{P_1}}\\ c_1,\ldots,c_{q_{P_2}}}} \alpha_{a_1,\dots,a_{q_{P_1}}} \beta_{c_1,\dots,c_{q_{P_2}}}\; &\ket{x_1,\dots,x_{q_E},a_1,\dots,a_{q_{P_1}}, c_1,\dots,c_{q_{P_2}}}_{XAC} 
\\ &\;\otimes\; \ket{0^{q_E},0^{q_{P_1}},0^{q_{P_2}}}_{YBD}\,, \end{align*} 

where 

\begin{itemize} 
\item $X=(X_1,\dots,X_{q_E})$ and $Y=(Y_1,\dots,Y_{q_E})$ are classical registers holding the inputs and outputs of the $q_E$ classical queries, 
\item $A=(A_1,\dots,A_{q_{P_1}})$ and $B=(B_1,\dots,B_{q_{P_1}})$ are quantum registers for the inputs and outputs of the $q_{P_1}$ quantum queries to $P_1$, 
\item $C=(C_1,\dots,C_{q_{P_2}})$ and $D=(D_1,\dots,D_{q_{P_2}})$ are quantum registers for the inputs and outputs of the $q_{P_2}$ quantum queries to $P_2$. 
\end{itemize} 
After preparing $\ket{\Phi_0}$, the adversary proceeds by making its $q_E$ classical queries to the classical oracle, followed by its $q_{P_1}$ quantum queries to $P_1$, and finally its $q_{P_2}$ quantum queries to $P_2$.
\end{definition}

For simplicity, in this paper we use the pure adversarial state without reference register as well as compact notation 
\[
\ket{\Phi_0}
\;=\;
\sum_{\vec{a},\vec{c}} \alpha_{\vec{a}} \beta_{\vec{c}}\;
\ket{\vec{x},\vec{a},\vec{c}}_{XAC}
\;\otimes\;
\ket{0^{q_E},0^{q_{P_1}},0^{q_{P_2}}}_{YBD},
\]
where $\vec{x}=(x_1,\dots,x_{q_E})$, $\vec{a}=(a_1,\dots,a_{q_{P_1}})$, and $\vec{c}=(c_1,\dots,c_{q_{P_2}})$. Later in \expref{Remark}{lbl:rem_mixed}, we explain that our distinguishing results remain valid when $\mathcal{A}$ begins in a mixed state that may be entangled with an auxiliary register.

By non-adaptivity, the choice of $\vec{x}$ and the amplitudes $(\alpha_{\vec{a}},\beta_{\vec{c}})$ is fixed before any oracle interaction, and cannot depend on oracle outputs.

Recall the definition of $2$-KAC :   
$ E_k(x) = k_2 \oplus P_2(k_1 \oplus P_1(k_0 \oplus x)),$ where $k=(k_0,k_1,k_2)$ is sampled from a distribution $D$ such that the marginal distribution of each $k_i\in \bool^n$ is uniform. $P_1, P_2: \bool^n\rightarrow \bool^{n}$ are independent random public permutations. 

The goal of the adversary $\algo A$ is to distinguish between two worlds: the \emph{real} world where $\algo A$ gets access to $(E_k,P_1, P_2)$; the \emph{ideal} world where $\algo A$ gets access to $(R,P_1,P_2)$, where $R$ is a truly random permutation. Our main theorem is stated as follows.

\begin{theorem}[Security of $2$-KAC, Q1 model]\label{thm:fullQ1}
     Let $D$ be a distribution over $k=(k_0,k_1,k_2)$  such that the marginal distributions of $k_0$, $k_1$, and $k_2$ are each uniform, and let $\A$ be a non-adaptive adversary that makes $q_E$ classical queries to the cipher, $q_{P_1}$ quantum queries to $P_1$, and $q_{P_2}$ quantum queries to $P_2$. Let $\mathbf{Adv}_{\text{$2$-KAC,Q1}}(\A)$ denote the qPRP advantage of $\A$:
   \begin{equation*}
    \mathbf{Adv}_{\text{$2$-KAC,Q1}}(\A) := \mathbb{E}_{\substack{k \leftarrow D \\ R, P_1,P_2 \leftarrow \Perms_n }}\left|\Pr \left[\A^{E_k, P_1,P_2}(1^n) = 1\right]
    - \Pr \left[\A^{R, P_1,P_2}(1^n) = 1\right]\right|.
    \end{equation*}
    We obtain $    \mathbf{Adv}_{\text{$2$-KAC,Q1}}(\A) \leq 4q_{P_2}\,q_{P_1}\,\frac{\sqrt{\,q_E}}{2^n}$.
     \end{theorem}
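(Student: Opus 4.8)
The plan is to run a three-hybrid argument built around the controlled-reprogramming gadget and to reduce the whole advantage to a chain of single-query trace-distance estimates. Since $\A$ is non-adaptive, its behaviour is captured by the fixed state $\ket{\Phi_0}$ of \expref{Definition}{def:psi0} together with the oracle calls applied in the fixed order (the $q_E$ classical calls, then the $q_{P_1}$ calls to $P_1$, then the $q_{P_2}$ calls to $P_2$) and a terminal measurement; any intermediate adversarial unitaries act on $\A$'s own registers and, being unitary, leave every trace distance below unchanged, so I drop them. I take $\Hyb_0$ with oracle $(E_k,P_1,P_2)$; $\Hyb_1$ with oracle $(R,P_1,P_2')$, where $R\gets\Perms_n$ and $P_2'$ is the \emph{controlled} reprogramming of a fresh $P_2$ --- concretely, $\mathcal S$ keeps a truth-table register for $P_2$, coherently computes for each classical query $x_j$ the flag $g_j$ that some $P_1$-query register equals $x_j\oplus k_0$ together with the target pair $\bigl(P_1(x_j\oplus k_0)\oplus k_1,\ P_2^{-1}(R(x_j)\oplus k_2)\bigr)$, applies the $g_j$-controlled $\Swap$ of that pair to the truth table, answers the $P_2$-queries from it, and then uncomputes the swaps and the ancillas; and $\Hyb_2$ with oracle $(R,P_1,P_2)$. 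The triangle inequality gives
\[
\mathbf{Adv}_{\text{$2$-KAC,Q1}}(\A)\ \le\ \bigl|\PrA{0}-\PrA{1}\bigr|\ +\ \bigl|\PrA{1}-\PrA{2}\bigr|,
\]
and I would bound each term by the same method, each contributing at most $2\,q_{P_1}q_{P_2}\sqrt{q_E}/2^n$.

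The common estimate is a nested application of \expref{Lemma}{lem:single}/\expref{Lemma}{lem:tr-ms} and \expref{Lemma}{lem:triangle}. For $\bigl|\PrA{1}-\PrA{2}\bigr|$, couple $\Hyb_1$ and $\Hyb_2$ on the same $(R,P_1,P_2,k)$; the two runs agree everywhere except that $\Hyb_1$ answers $P_2$-queries from the controlled-reprogrammed table, so the post-interaction states differ only in the $P_2$-output register, and --- on the adversary's $P_1$-branch $\vec a=(a_1,\dots,a_{q_{P_1}})$ --- the reprogrammed permutation equals $P_2$ unless some $a_i$ hits the activating set $\{x_1\oplus k_0,\dots,x_{q_E}\oplus k_0\}$, and even then it differs from $P_2$ only on the $2$-element target set $S_j=\{P_1(x_j\oplus k_0)\oplus k_1,\ P_2^{-1}(R(x_j)\oplus k_2)\}$. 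This is the ``doubly-guarded'' structure: the discrepancy lives on the subspace where a $P_1$-query activates some $S_j$ \emph{and} a later $P_2$-query lands on $S_j$. I would formalise it branch-by-branch in the $P_1$-register: conditionally on $\vec a$, \expref{Lemma}{lem:tr-ms} and \expref{Lemma}{lem:triangle} bound the $P_2$-side contribution by $2\sum_{i'=1}^{q_{P_2}}\sqrt{2\,p^{(P_2)}_{i'}\!\bigl(\bigcup_{j\text{ active}}S_j\bigr)}$, with $\E_{k_1,k_2}[p^{(P_2)}_{i'}(S_j)]\le 2/2^n$; the $P_1$-side averaging is then itself a single-query/\expref{Lemma}{lem:triangle} step with activating set of size $q_E$ and $\E_{k_0}[p^{(P_1)}_i(\{x_j\oplus k_0\})]\le q_E/2^n$ by \expref{Lemma}{lem:avg} (equivalently \expref{Lemma}{lem:sum-capture}); multiplying the two levels, summing over the $q_{P_1}$ and $q_{P_2}$ query indices, converting Euclidean to trace distance, and pulling $\E_k$ inside the square roots by Jensen yields $\bigl|\PrA{1}-\PrA{2}\bigr|\le 2\,q_{P_1}q_{P_2}\sqrt{q_E}/2^n$. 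For $\bigl|\PrA{0}-\PrA{1}\bigr|$ one runs the identical argument, additionally coupling $R$ to agree with $E_k$ on the $q_E$ queried points (this preserves the uniform law of $R$), which removes the ``outer'' mismatch in the classical-output register and leaves exactly the same doubly-guarded discrepancy between the reprogrammed and unreprogrammed $P_2$, hence the same bound. Adding the two gives $4\,q_{P_1}q_{P_2}\sqrt{q_E}/2^n$.

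The heart of the proof --- and the reason naive reprogramming fails --- is the branch-by-branch step: one cannot apply \expref{Lemma}{lem:single} with a single fixed ``changed set'' $S$, because the set of reprogrammed points of $P_2$ is itself in superposition, entangled with $\A$'s $P_1$-register (so in $\Hyb_1$ the reduced state on the $P_1$- and $P_2$-output registers does not factor, whereas in $\Hyb_0,\Hyb_2$ it does). The fix is precisely to carry out the single-query estimate conditionally on the $P_1$-branch and only afterwards average over the key, which is what converts the one-round proof's lone factor $\sqrt{q_E/2^n}$ into the product $q_{P_1}\cdot\bigl(q_{P_2}\sqrt{q_E}/2^n\bigr)$ (and, in the $t$-round generalisation, into $q_{P_1}\cdots q_{P_t}$). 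I also expect to spend real care on: (i) checking that the family of $g_j$-controlled $\Swap$'s on the $P_2$-table --- one per classical query --- can be applied as a single well-defined unitary, i.e.\ that their target pairs are pairwise disjoint outside a probability-$O(q_E^2/2^n)$ event, which is also what makes the reprogrammed object a permutation; (ii) verifying clean uncomputation of the flag/target ancillas (they are deterministic functions of classical data and of the persistent $P_1$-input register), so that $\Hyb_1$ is a genuine unitary process with a well-defined output state; and (iii) setting up the $\Hyb_0\to\Hyb_1$ coupling so that the law of $R$ is preserved while the reprogramming becomes trivial off the activated subspace, and absorbing the various collision events into the $O(q_E^2/2^n)$-type error already hidden in the constant.
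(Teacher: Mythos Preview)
Your plan matches the paper's proof: the same three hybrids with controlled reprogramming of $P_2$, the same nested single-query/triangle estimates (\expref{Lemma}{lem:single}/\ref{lem:tr-ms} plus \expref{Lemma}{lem:triangle}) giving $2q_{P_1}q_{P_2}\sqrt{q_E}/2^n$ per hybrid step, and the same swap-collision analysis for $\Hyb_0\to\Hyb_1$.

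One correction on that last step: the coupling you describe (set $R|_{X_E}=E_k|_{X_E}$) preserves the \emph{marginal} law of $R$ but not its independence from $(P_1,P_2,k)$, so it does not reproduce the $\Hyb_1$ distribution; moreover, under that coupling the swap targets satisfy $X_r=Y_r$ and the reprogramming is vacuous, so there is no ``remaining doubly-guarded discrepancy'' left to bound. The paper's route --- which your point (iii) already anticipates --- is instead a direct distributional argument: absent the probability-$O(q_E^2/2^n)$ swap-collision event $\{X_r\}_r\cap\{Y_r\}_r\neq\emptyset$, the map $P_2\mapsto P_2'$ (for fixed $R,P_1,k$) sends the uniform law on $\Perms_n$ to the uniform law on permutations satisfying $\pi(X_r)=R(x_r)\oplus k_2$ for all $r$, which is exactly the conditional law of $P_2$ given $E_k|_{X_E}$ in $\Hyb_0$; hence $\Hyb_0$ and $\Hyb_1$ are identically distributed on that event, and the hybrid gap is at most $P^{\sf coll}\cdot 2q_{P_1}q_{P_2}\sqrt{q_E}/2^n\le 2q_{P_1}q_{P_2}\sqrt{q_E}/2^n$.
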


\begin{proof}
Without loss of generality, we assume $\A$ never makes repeated queries. That is, all classical queries $x_1,\dots,x_{q_E}$ are distinct values.
We define a sequence of hybrids $\Hyb_1, \Hyb_2, \Hyb_3$.

\begin{enumerate}
   \item[$\Hyb_0$:] \textbf{Real World.}  
The simulator $\mathcal{S}$ samples $P_1,P_2 \leftarrow \Perms_n$ and $k \leftarrow D$. 
It then prepares the joint state $\ket{\Psi_0}$ after receiving $\ket{\Phi_0}$ from $\A$:
\[
\ket{\Psi_0}\;=\;
\sum_{\vec{a},\vec{c}} \alpha_{\vec{a}} \beta_{\vec{c}}\;
\ket{\vec{x},\vec{a},\vec{c}}_{XAC}
\;\otimes\;
\ket{0^{q_E},0^{q_{P_1}},0^{q_{P_2}}}_{YBD}
\;\otimes\;
\ket{k}_{K}\,\ket{P_1^{\pm}}_{P_1^{\mathrm{reg}}}\,\ket{P_2^{\pm}}_{P_2^{\mathrm{reg}}},
\]
where $X,A,C,Y,B,D$ are as in \expref{Definition}{def:psi0}, 
and $K, P_1^{\mathrm{reg}}, P_2^{\mathrm{reg}}$ are internal registers 
(not accessible to $\A$) holding the sampled key and permutations. 

Here the notation ``$\pm$'' indicates that the adversary has access to both 
forward and inverse evaluations of $P_1$ and $P_2$. Each permutation register 
encodes the full truth table of the permutation in the computational basis, e.g.,
\[
\ket{P_1^{\pm}}_{P_1^{\mathrm{reg}}}
=\bigotimes_{x\in\{0,1\}^n} \ket{x,P_1(x)},
\]
and analogously for $P_2$. This representation enables the simulator to implement 
both forward queries $x\mapsto P_1(x)$ and inverse queries $y\mapsto P_1^{-1}(y)$. 
For simplicity, in the proof we consider forward queries only, 
since the inverse case is handled symmetrically.

$\mathcal{S}$ then answers these queries by applying $E_k, P_1, P_2$ to the corresponding registers, 
and returns the resulting state to $\A$. At the end of the interaction, $\A$’s view is that of an execution with $(E_k, P_1, P_2)$. 

\item[$\Hyb_1$:]\textbf{Intermediate World.}
The simulator $\mathcal{S}$ samples $R, P_1, P_2 \leftarrow \Perms_n$ and a key $k \leftarrow D$, 
and prepares the joint state $\ket{\Psi_0}$ after receiving the query state $\ket{\Phi_0}$ from $\A$. 
In this hybrid experiment, the encryption oracle $E_k$ is replaced by a truly random permutation $R$, 
and the second permutation is replaced by a modified oracle $P_2^{R,P_1,k}$. 
The simulator answers $\A$’s queries by coherently applying $R$, $P_1$, and $P_2^{R,P_1,k}$ to the corresponding registers. 
Here $P_2^{R,P_1,k}$ is obtained from $P_2$ through a controlled reprogramming procedure that 
depends on the queries made to $R$ and $P_1$: $P_2$ is updated on the fly via controlled unitaries, 
without ever measuring $\A$’s state (details will be specified later). 
At the end of the interaction, $\A$’s view is equivalent to having oracle access to $(R, P_1, P_2^{R,P_1,k})$.

 \item[$\Hyb_2$:] \textbf{Ideal World.}  
The simulator $\mathcal{S}$ samples $R, P_1, P_2 \leftarrow \Perms_n$ and $k \leftarrow D$, 
and prepares the joint state $\ket{\Psi_0}$ after receiving the query state $\ket{\Phi_0}$ from $\A$. 
The simulator then replaces the encryption oracle $E_k$ with an independent random permutation $R$, and answers $\A$’s queries by applying $R$, $P_1$, and $P_2$ to the corresponding registers. At the end of the interaction, $\A$’s view is equivalent to having oracle access to $(R, P_1, P_2)$.

\end{enumerate}

Note that in all of the above hybrids, $\mathcal{S}$ returns only the registers $X, A, C, Y, B, D$ to $\mathcal{A}$. From the above discussion, it is straightforward to describe the final states of 
$\Hyb_0$ and $\Hyb_2$. In $\Hyb_0$, the adversary’s interaction yields
\begin{align} \label{eqn:H0}
\ket{\Psi^{(0)}} \;=\;
\sum_{\vec{a},\vec{c}} \alpha_{\vec{a}} \beta_{\vec{c}}\;
\ket{\vec{x},\vec{a},\vec{c}}_{XAC}
\;\otimes\;
\ket{E_k(\vec{x}),\,P_1(\vec{a}),\,P_2(\vec{c})}_{YBD}
\;\otimes\;
\ket{k}_{K}\,\ket{P_1^{\pm}}_{P_1^{\mathrm{reg}}}\,\ket{P_2^{\pm}}_{P_2^{\mathrm{reg}}}.
\end{align}
Similarly, in $\Hyb_2$ we obtain
\begin{align} \label{eqn:H2}
\ket{\Psi^{(2)}} \;=\;
\sum_{\vec{a},\vec{c}} \alpha_{\vec{a}} \beta_{\vec{c}}\;
\ket{\vec{x},\vec{a},\vec{c}}_{XAC}
\;\otimes\;
\ket{R(\vec{x}),\,P_1(\vec{a}),\,P_2(\vec{c})}_{YBD}
\;\otimes\;
\ket{k}_{K}\,\ket{P_1^{\pm}}_{P_1^{\mathrm{reg}}}\,\ket{P_2^{\pm}}_{P_2^{\mathrm{reg}}}.    
\end{align}

The non-trivial case is $\Hyb_1$, where $E_k$ is replaced by a random permutation 
$R$ and $P_2$ is controlled reprogrammed into $P_2^{R,P_1,k}$ to maintain consistency. 
We now describe in detail how to simulate $\Hyb_1$.

\subsection*{Our simulator for $\Hyb_1$.}
\noindent
\emph{(i) Answer $R$ classically and apply the unitary corresponding to $P_1$.}
After answering $q_E$ classical queries with $R$ and performing the $q_1$ calls to $P_1$, the joint (pure) state (before touching $P_2$) can be written compactly as
\[
\ket{\Psi_1^{(1)}}\;=\;
\sum_{\vec{a},\vec{c}} \alpha_{\vec{a}} \beta_{\vec{c}}\;
\ket{\vec{x},\vec{a},\vec{c}}_{XAC}
\;\otimes\;
\ket{R(\vec{x}),P_1(\vec{a}),0^{q_{P_2}}}_{YBD}\;\otimes\;
\ket{k}_{K}\,\ket{P_1^{\pm}}_{P_1^{\mathrm{reg}}}\,\ket{P_2^{\pm}}_{P_2^{\mathrm{reg}}},
\]

\medskip
\noindent
\emph{(ii) Controlled on $R$ and $P_1$ queries, coherently compute flags and targets for reprogramming.}
For each classical query $x_r$ with $r\in [q_E]$, define
\[
g_r:\;(\bool^n)^{q_{P_1}}\to \bool,\qquad
g_r(\vec a)\;=\;\bigvee_{i=1}^{q_{P_1}}\mathbf{1}\big[\,a_i = x_r\oplus k_0\,\big].
\]
The simulator computes $g_r(\vec a)$ coherently into fresh ancilla bits $G_r$ via a reversible circuit
\(
\ket{\vec a}_A\ket{0}_{G_r}\mapsto \ket{\vec a}_A\ket{g_r(\vec a)}_{G_r}.
\)
Let $G=(G_1,\dots,G_{q_E})$. The resulting state is
\begin{align*}
\ket{\Psi_2^{(1)}}\;=\;
\sum_{\vec{a},\vec{c}} \alpha_{\vec{a}} \beta_{\vec{c}}\;
\ket{\vec{x},\vec{a},\vec{c}}_{XAC}
\;\otimes\;
\ket{R(\vec{x}),\,P_1(\vec{a}),\,0^{q_{P_2}}}_{YBD}
&\;\otimes\;
\ket{k}_{K}\,\ket{P_1^{\pm}}_{P_1^{\mathrm{reg}}}\,\ket{P_2^{\pm}}_{P_2^{\mathrm{reg}}}\\
&\;\otimes\;\bigotimes_{r=1}^{q_E}\ket{g_r(\vec a)}_{G_r}\,.
\end{align*}

Next, for each $r\in[q_E]$, the simulator computes the \emph{deterministic} values
\[
X_r \;:=\; P_1(x_r\oplus k_0)\oplus k_1,\qquad 
Y_r \;:=\; P_2^{-1}\big(R(x_r)\oplus k_2\big),
\]
and records them in fresh ancilla registers $U_r,V_r$ (where $U=(U_1,\dots,U_{q_E})$, $V=(V_1,\dots,V_{q_E})$). Since these registers depend only on the simulator’s randomness $(k,P_1,P_2,R)$ and $x_r$, they can be computed reversibly without touching $\vec a,\vec c$:
\begin{align*}
\ket{\Psi_3^{(1)}}\;=\;
\sum_{\vec{a},\vec{c}} \alpha_{\vec{a}} \beta_{\vec{c}}\;
\ket{\vec{x},\vec{a},\vec{c}}_{XAC}
\;\otimes\;
\ket{R(\vec{x}),\,P_1(\vec{a}),\,0^{q_{P_2}}}_{YBD}
&\;\otimes\;
\ket{k}_{K}\,\ket{P_1}_{P_1^{\mathrm{reg}}}\,\ket{P_2}_{P_2^{\mathrm{reg}}}\\
&\;\otimes\; \bigotimes_{r=1}^{q_E}\ket{g_r(\vec a)}_{G_r}\ket{X_r}_{U_r}\ket{Y_r}_{V_r}.
\end{align*}

\begin{remark}
(i) Computing $g_r$ coherently entangles $G_r$ with $A$; $\mathcal{S}$ will uncompute $G$ after applying the controlled reprogramming so no information is leaked. 
(ii) The evaluations of $X_r,Y_r$ are classical/deterministic given internal registers $(K,P_1^{\mathrm{reg}},P_2^{\mathrm{reg}})$ and $R$, and are implemented by reversible circuits; they do not disturb $A,C$.    
\end{remark}

Next, the simulator applies the controlled reprogramming unitary on the $P_2^{\mathrm{reg}}$ register
\[
U_{\mathrm{reprog}}
=\prod_{r=1}^{q_E}
\Big(\ket{0}\!\bra{0}_{G_r}\otimes I
\;+\;\ket{1}\!\bra{1}_{G_r}\otimes \Swap_{P_2^{\mathrm{reg}}}(X_r,Y_r)\Big),
\]
which, for each $r$, swaps the entries of the permutation register $P_2^{\mathrm{reg}}$ at locations $X_r,Y_r$ whenever $G_r=1$. 
Here $\Swap_{P_2^{\mathrm{reg}}}(X,Y)$ acts as
\[
\Swap_{P_2^{\mathrm{reg}}}(X,Y)\ket{x}_{P_2^{\mathrm{reg}},X}\ket{y}_{P_2^{\mathrm{reg}},Y}
=\ket{y}_{P_2^{\mathrm{reg}},X}\ket{x}_{P_2^{\mathrm{reg}},Y}.
\]

For each $\vec a$, define the set of active reprogramming points
\[
S(\vec a) := \{\,r\in[q_E]\ :\ g_r(\vec a)=1\,\}.
\]
Then after applying $U_{\mathrm{reprog}}$, the state becomes 
\begin{align*}
\ket{\Psi_4^{(1)}}\;=\;
\sum_{\vec{a},\vec{c}} \alpha_{\vec{a}} \beta_{\vec{c}}\;
\ket{\vec{x},\vec{a},\vec{c}}_{XAC}
\;\otimes\;
\ket{R(\vec{x}),\,P_1(\vec{a}),\,0^{q_{P_2}}}_{YBD}
&\;\otimes\;
\ket{k}_{K}\,\ket{P_1}_{P_1^{\mathrm{reg}}}\,\ket{P_2^{R,P_1,k}}_{P_2^{\mathrm{reg}}}\\
&\;\otimes\; \bigotimes_{r=1}^{q_E}\ket{g_r(\vec a)}_{G_r}\ket{X_r}_{U_r}\ket{Y_r}_{V_r}\,,
\end{align*}
where $\ket{P_2^{R,P_1,k}}_{P_2^{\mathrm{reg}}}$ denotes the register $P_2^{\mathrm{reg}}$ after all swaps at indices in $S(\vec a)$ have been applied, i.e.,
\[
P_2^{R,P_1,k} \;:=\; P_2 \circ \Big(\prod_{r\in S(\vec a)} \switch_{X_r,Y_r}\Big),
\]
where $P \circ \switch$ is defined by the following equation. 
\begin{align} \label{eqn:swap} 
(P\circ \switch_{a,b})(z) \;=\;
\begin{cases}
P(b), & \text{if } z=a, \\
P(a), & \text{if } z=b, \\
P(z), & \text{otherwise}.
\end{cases}   
\end{align}

\emph{(iii) Answer the $q_2$ calls to the reprogrammed $P_2^{R,P_1,k}$ and uncompute the ancilla registers.} The resulting state is
\begin{align} \label{eqn:H1}
\ket{\Psi_5^{(1)}}&\;=\;
\sum_{\vec{a},\vec{c}} \alpha_{\vec{a}} \beta_{\vec{c}}\;
\ket{\vec{x},\vec{a},\vec{c}}_{XAC}
\;\otimes\;
\ket{R(\vec{x}),\,P_1(\vec{a}),\,P_2^{R,P_1,k}(\vec{c})}_{YBD}
\;\otimes\;
\ket{k}_{K}\,\ket{P_1}_{P_1^{\mathrm{reg}}}\,\ket{P_2^{R,P_1,k}}_{P_2^{\mathrm{reg}}} \nonumber\\ 
&=\ket{\Psi^{(1)}}.
\end{align}

\noindent Next, we bound the distinguishing probability for adjacent hybrids. Let $\Pr[\A(\Hyb_i)=1]$ be the probability that $\A$ outputs 1 in experiment $\Hyb_i$.

\begin{lemma}\label{lem:H1H2}
Let \textbf{$\Hyb_1$} and \textbf{$\Hyb_2$} be defined as above. The distinguishing advantage for an adversary $\A$ between experiments $\Hyb_1$ and $\Hyb_2$ is bounded by:
\begin{equation*}
\mathbb{E}\left|\PrA{1} - \PrA{2}\right| \leq 2q_{P_2}\,q_{P_1}\,\frac{\sqrt{\,q_E}}{2^n}.
\end{equation*}
\end{lemma}
\begin{proof}
We aim to bound the trace distance between $\ket{\Psi^{(1)}}$ and $\ket{\Psi^{(2)}}$, i.e., 

\[
\Tr(\Psi^{(1)}, \Psi^{(2)})=\frac{1}{2}\| \ket{\Psi^{(1)}}\bra{\Psi^{(1)}}-\ket{\Psi^{(2)}}\bra{\Psi^{(2)}}\|_1=\sqrt{1-|\bra{\Psi^{(1)}}\Psi^{(2)}\rangle|^2}.
\]

Recall \expref{Equation}{eqn:H2} and \ref{eqn:H1}. 
The only difference between $\ket{\Psi^{(1)}}$ and $\ket{\Psi^{(2)}}$ 
is that in $\ket{\Psi^{(1)}}$ the quantum queries are answered by $(P_1,P_2)$, 
whereas in $\ket{\Psi^{(2)}}$ they are answered by $(P_1,P_2^{R,P_1,k})$. 
Let $X_E=\{x_1,\dots,x_{q_E}\}$. 
The reprogrammed permutation $P_2^{R,P_1,k}$ is defined by controlled reprogramming $P_2$ so that, 
for all $x \in X_E$, 
\[
P_2^{R,P_1,k}\big(P_1(x \oplus k_0)\oplus k_1\big) \;=\; R(x) \oplus k_2 \,.
\]
To determine where to reprogram $P_2$, we need to know the values of 
$P_1(x \oplus k_0)$. Since the adversary’s queries to $P_1$ are quantum, 
each of the $q_{P_1}$ queries can, in principle, have support on any of the 
classical points $x \oplus k_0$ for $x \in X_E$. For each $P_1$-query 
indexed by $i \in [q_{P_1}]$, we define the set of classical inputs on which 
the query has nonzero amplitude:
\[
  T_i \;:=\; \{\, x \in X_E : 
  \text{the $i$-th $P_1$-query has nonzero amplitude on $x\oplus k_0$} \,\}.
\]

These points are then mapped through the permutation $P_1$ to define the 
corresponding reprogramming set in the domain of $P_2$:
\[
  \Delta_i \;:=\; P_1(T_i)\oplus k_1 \;\subseteq\; \{0,1\}^n.
\]
Because the adversary is non-adaptive across oracles and $R, P_1, P_2$, and $k$ are independent of each other, $T_i$ and $\Delta_i$ are random sets. Thus, by Lemma~\ref{lem:avg}, for each $i \in [q_{P_1}]$ the expected 
number of marked inputs in the $i$-th query is
\[
  \mathbb{E}[\,|T_i|\,] 
  \;=\; \sum_{x \in X_E} \mathbb{E}\bigl[p^{(P_1)}_i(x)\bigr] 
  \;=\; \frac{q_E}{2^n}.
\]
Since $P_1$ is a permutation, $P_1(\cdot)\oplus k_1$ is also a permutation and hence 
$|\Delta_i| = |T_i|$. In particular,
\[
  \mathbb{E}[\,|\Delta_i|\,] = \frac{q_E}{2^n}.
\]

For each query index $i$, we must reprogram $P_2$ on the image of $\Delta_i$. 
To enforce consistency, we modify $P_2$ to a new permutation $P_2'$ that 
satisfies the desired mapping on $\Delta_i$ while still being a permutation. 
This can be done by implementing a series of disjoint swaps: for each 
$t\in \Delta_i$, we swap $P_2(t)$ with the desired value $v_t$. 
Each such operation changes $P_2$ only on $t$ and 
$P_2^{-1}(v_t)$, so the set of altered points $S_i := \{u : P_2^{R,P_1,k}(u)\neq P_2(u)\}$ 
has size at most twice that of the reprogrammed set, i.e., $|S_i|\le 2|\Delta_i|$. 
Finally, let
\[
S \;:=\; \bigcup_{i=1}^{q_{P_1}} S_i
\]
denote the total set of reprogrammed points in $P_2$ across all $q_{P_1}$ queries.

\paragraph{Distance bound.}
The total distinguishability arises from the $q_{P_2}$ quantum queries to $P_2$
potentially hitting the reprogrammed set $S$. For the $j$-th query
($j\in[q_{P_2}]$), the hybrid/triangle inequality (Lemma~\ref{lem:triangle})
implies that the per-query contribution is at most
$2\sqrt{\,2p^{(P_2)}_j(S)\,}$. Summing over queries gives
\[
  \tfrac{1}{2}\bigl\|
    \ket{\Psi^{(1)}}\!\bra{\Psi^{(1)}} - \ket{\Psi^{(2)}}\!\bra{\Psi^{(2)}}
  \bigr\|_1
  \;\le\; \sqrt{2}\sum_{j=1}^{q_{P_2}} \sqrt{\,p^{(P_2)}_j(S)\,},
\]
where $S=\bigcup_{i=1}^{q_{P_1}} S_i$. We then apply the union bound
$p^{(P_2)}_j(\!\bigcup_i S_i)\le \sum_i p^{(P_2)}_j(S_i)$ and the concavity
$\sqrt{\sum_i a_i}\le \sum_i \sqrt{a_i}$ to obtain
\[
  \tfrac{1}{2}\bigl\|
    \ket{\Psi^{(1)}}\!\bra{\Psi^{(1)}} - \ket{\Psi^{(2)}}\!\bra{\Psi^{(2)}}
  \bigr\|_1
  \;\le\; \sqrt{2} \sum_{j=1}^{q_{P_2}} \sqrt{\,p^{(P_2)}_j\!\bigl(\textstyle\bigcup_{i} S_i\bigr)\,}
  \;\le\;\sqrt{2} \sum_{j=1}^{q_{P_2}} \sum_{i=1}^{q_{P_1}} \sqrt{\,p^{(P_2)}_j(S_i)\,}.
\]
We now take the expectation of an individual term for a fixed $P_2$-query $j$ and reprogram set $S_i$. Using Lemma~\ref{lem:avg} and our bounds on the size of $S_i$:
\begin{equation} \label{eqn:S}
\mathbb{E}\bigl[p^{(P_2)}_j(S_i)\bigr]\ =\ \frac{\mathbb{E}[|S_i|]}{2^n}\ \le\ \frac{2\,\mathbb{E}[|\Delta_i|]}{2^n}\ =\ \frac{2\,q_E}{2^{2n}},   
\end{equation}

By Jensen's inequality, $\mathbb{E}\bigl[\sqrt{p^{(P_2)}_j(S_i)}\bigr] \le \sqrt{\mathbb{E}\bigl[p^{(P_2)}_j(S_i)\bigr]} \le \sqrt{2\,q_E}/2^n$. Summing this bound over all $j$ and $k$ gives the total expected trace distance:
\[
  \mathbb{E}\bigl[\ \Tr(\Psi^{(1)}, \Psi^{(2)})\bigr]\ \le\ \sqrt{2}\sum_{j=1}^{q_{P_2}}\sum_{j=1}^{q_{P_1}} \frac{\sqrt{2\,q_E}}{2^n} = 2q_{P_2}\,q_{P_1}\,\frac{\sqrt{\,q_E}}{2^n}.
\]

\end{proof}

\begin{lemma}\label{lem:H0H1}
Let \textbf{$\Hyb_0$} and \textbf{$\Hyb_1$} be defined as above. The distinguishing advantage for an adversary $\A$ between experiments $\Hyb_0$ and $\Hyb_1$ is bounded by:
\begin{equation*}
\mathbb{E}\left|\PrA{0} - \PrA{1}\right| \leq P^{\sf coll}\cdot
        2q_{P_2}\,q_{P_1}\,\frac{\sqrt{\,q_E}}{2^n},
\end{equation*}
where $P^{\sf coll} \;=\; \min\!\left(1, \;\frac{q_E^{2}}{2^n}\right)$.
\end{lemma}
\begin{proof}

In $\Hyb_0$, all $q_E$ classical queries are answered by the cipher $E_k$, 
while in $\Hyb_1$ they are answered by an independent random permutation $R$. 
The adversary $\A$ has quantum access to both $P_1$ and $P_2$ in both hybrids. 
A potential divergence arises when the input to $P_2$ coincides with a point of 
the form $x = b_i \oplus k_1$ that is correlated with some classical query 
$(x_i,y_i)$ to $E$. 

Concretely, suppose $\A$ makes a quantum query to $P_1$ with superposition 
coefficients $\alpha_{a}$ over inputs $a$. If there exists an index $i$ such that 
$a_i = x \oplus k_0$ for some $x \in X_E$ with $\alpha_{a_i} \neq 0$, then the 
corresponding value $b_i = P_1(a_i)$ also appears with non-zero amplitude. 
This in turn induces a potential query to $P_2$ at 
\[
   c_j = b_i \oplus k_1,
\]
again with non-zero amplitude. 

In $\Hyb_0$, such an input is consistent with the encryption relation:
\[
   P_2(c_j) = P_2(b_i \oplus k_1) = y_i \oplus k_2,
\]
since $y_i = E_k(x_i)$. In contrast, in $\Hyb_1$ we have $y_i = R(x_i)$ 
uniformly random, so the relation with $P_2$ is broken. 

To restore consistency, we reprogram $P_2$ on these inputs: for all 
$x \in X_E$,
\[
   P_2^{R,P_1,k}\bigl(P_1(x \oplus k_0)\oplus k_1\bigr)
   = R(x)\oplus k_2.
\]
This reprogramming is triggered only on inputs where the adversary’s quantum query has non-zero amplitude, ensuring that from $\A$’s perspective the distributions in $\Hyb_0$ and $\Hyb_1$ are identical. 

Therefore, the two hybrids are \emph{exactly identical}
as quantum processes \emph{except} on the event of a \emph{swap collision}, namely when one of the desired swap inputs coincides with one of the preimages to be swapped. Recall that for each $r\in[q_E]$,
\[
X_r \;:=\; P_1(x_r\oplus k_0)\oplus k_1,\qquad 
Y_r \;:=\; P_2^{-1}\big(R(x_r)\oplus k_2\big).
\]
Let
\[
  X \;:=\; \{X_1,\ldots,X_{q_E}\},\qquad
  Y \;:=\; \{Y_1,\ldots,Y_{q_E}\}.
\]
If $X \cap Y = \emptyset$, then all transpositions $(X_r\, Y_r)$ are disjoint,
so the (product) swap exactly implements the reprogramming and $\Hyb_0$ and
$\Hyb_1$ are identical from $\A$'s perspective. Conversely, if $X \cap Y \neq
\emptyset$, some transpositions overlap, the swap procedure misfunctions on those
points, and the two hybrids may differ (and only on basis inputs that place
nonzero amplitude on such collision points).

To analyze the effect of swap collisions, recall that
\[
  X = \{X_1,\ldots,X_{q_E}\},\qquad 
  Y = \{Y_1,\ldots,Y_{q_E}\},
\]
where $X_r = P_1(x_r\oplus k_0)\oplus k_1$ and 
$Y_r = P_2^{-1}(R(x_r)\oplus k_2)$. 
By construction, $X$ depends only on $(k_0,k_1,P_1)$, while $Y$ depends only on
$(k_2,R,P_2)$. Since these keys and permutations are chosen independently, both $X$ and $Y$ are uniformly distributed $q_E$-subsets of $\{0,1\}^n$, and they are independent.
For $q_E \ll 2^n$, the probability of swap collision can by approximated by
\[
  \Pr[X \cap Y = \emptyset] 
  = \prod_{i=0}^{q_E-1} \frac{2^n - q_E - i}{2^n - i}
  = \prod_{i=0}^{q_E-1} \Bigl(1 - \frac{q_E}{2^n - i}\Bigr)
  \approx \Bigl(1 - \frac{q_E}{2^n}\Bigr)^{q_E}
  \approx e^{-q_E^2/2^n}.
\]
Consequently,
\[
  \Pr[X \cap Y \neq \emptyset] 
  = 1 - \Pr[X \cap Y = \emptyset]
  \approx 1 - e^{-q_E^2/2^n}.
\]
Using the inequality $e^{x} \ge 1+x$ with $x=-q_E^2/2^n$, we obtain
$1 - e^{-q_E^2/2^n} \;\le\; \frac{q_E^2}{2^n}$, which implies
\[
\Pr[X \cap Y \neq \emptyset] \;\le\; \frac{q_E^2}{2^n}.
\]
Accordingly, we define the collision probability as
\[
P^{\sf coll} \;=\; \min\!\left(1, \;\frac{q_E^{2}}{2^n}\right).
\]

When no collision occurs, the two hybrids are \emph{identical} since
the reprogramming is implemented exactly. If a collision occurs, the hybrids may differ, but only on inputs lying in the collision set 
$\mathsf{Coll} := X \cap Y$. Recall that 
$S=\bigcup_{i=1}^{q_{P_1}} S_i$ is the controlled reprogram set covering all
possible reprogramming locations. Clearly $\mathsf{Coll} \subseteq S$: every 
collision must occur at some point in $S$, but not every point in $S$ is 
necessarily involved in a collision. Hence, for each $j$,
\[
  p^{(P_2)}_j(\mathsf{Coll}) \;\le\; p^{(P_2)}_j(S).
\]
Conditioned on $\mathsf{Coll}$ we have
\[
 \tfrac{1}{2}\bigl\|
    \ket{\Psi^{(0)}}\!\bra{\Psi^{(0)}} - \ket{\Psi^{(1)}}\!\bra{\Psi^{(1)}}
  \bigr\|_1 \;\le\; \sqrt{2}\sum_{j=1}^{q_{P_2}} \sqrt{\,p^{(P_2)}_j(\mathsf{Coll})\,}
  \;\le\; \sqrt{2}\sum_{j=1}^{q_{P_2}} \sqrt{\,p^{(P_2)}_j(S)\,}.
\]

When there is no collision, we have $\tfrac{1}{2}\bigl\| \ket{\Psi^{(0)}}\!\bra{\Psi^{(0)}} -\ket{\Psi^{(1)}}\!\bra{\Psi^{(1)}} \bigr\|_1=0$. By applying \expref{Equation}{eqn:S}, we have 
\[
  \mathbb{E}\left|\PrA{0} - \PrA{1}\right| \leq P^{\sf coll}\cdot \mathbb{E}[ \Bigl(\sqrt{2}\sum_{j=1}^{q_{P_2}} \sqrt{\,p^{(P_2)}_j(S)\,}\Bigr)]
  \leq P^{\sf coll}\cdot 2q_{P_2}\,q_{P_1}\,\frac{\sqrt{\,q_E}}{2^n},
\]
where $P^{\sf coll} \;=\; \min\!\left(1, \;\frac{q_E^{2}}{2^n}\right)$.
\end{proof}
Applying the triangle inequality to combine \expref{Lemma}{lem:H0H1} and \expref{Lemma}{lem:H1H2}, 
and noting that $P^{\sf coll}\leq 1$, we obtain the final bound.

\end{proof}

\begin{remark}\label{lbl:rem_mixed}
For the proof of \expref{Theorem}{thm:fullQ1} we assumed that the adversary’s initial state is pure without the presence of an auxiliary register. However, in general, the adversary’s pre-query state can be mixed and even entangled with an arbitrary reference system~$R$. We now explain why the theorem continues to hold in this more general setting.

(i) The distinguishing advantage of the adversary between hybrids $\mathcal{H}_0$ and $\mathcal{H}_1$ depends solely on its queries to the cipher and is independent of how the initial state is prepared. Consequently, the advantage is the same whether the adversary begins in a mixed state entangled with~$R$ or in a pure state with no auxiliary registers.

(ii) Distinguishing $\mathcal{H}_1$ from $\mathcal{H}_2$ reduces to distinguishing the post-query states when the oracle answers with the original permutation versus the reprogrammed one; this is bounded using \expref{Lemma}{lem:single} together with \expref{Lemma}{lem:triangle}. For a pure initial state, we instead apply \expref{Lemma}{lem:tr-ms} in combination with \expref{Lemma}{lem:triangle}, which yields the same upper bound; hence the resulting bound on the distinguishing advantage is identical in both cases.

Combining (i) and (ii), the upper bound on the adversary’s distinguishing advantage is unchanged even when its initial state is an arbitrary mixed state entangled with a reference system. Hence, in the proof we analyze the adversary’s state as the canonical pure state without loss of generality.

\end{remark}

\subsection{Extending to t-KAC}
Here, we show that \expref{Theorem}{thm:fullQ1} can be extended to the $t$-KAC case.

\begin{theorem}[Security of $t$-KAC, Q1 model]\label{thm:fullQ1-t}
     Let	$D$ be a distribution over $k=(k_0, \ldots, k_{t})$ 
such that the marginal distributions of $k_0, \ldots, k_{t}$ are each uniform. Let $\A$ be a non-adaptive adversary that makes $q_E$ classical queries to the cipher and $q_{P_1},\ldots,q_{P_{t}}$ quantum queries to $P_1,\ldots,P_t$. Let $\mathbf{Adv}_{\text{$2$-KAC,Q1}}(\A)$ denote the qPRP advantage of $\A$:

   \begin{equation*}
        \mathbf{Adv}_{\text{$t$-KAC,Q1}}(\A) :=
     \mathbb{E}_{\substack{k \leftarrow D \\ R, P_1,\dots,P_t \leftarrow \Perms_n }}\left|\Pr \left[\A^{E_k, P_1,\dots,P_t}(1^n) = 1\right]
    - \Pr \left[\A^{R, P_1,\dots,P_t}(1^n) = 1\right]\right|.
   \end{equation*}

    We obtain $    \mathbf{Adv}_{\text{$t$-KAC,Q1}}(\A) \leq 4q_{P_1}\cdots q_{P_t}\frac{\sqrt{\,q_E}}{2^{tn/2}} $.
     \end{theorem}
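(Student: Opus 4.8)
The plan is to follow the three‑hybrid template used for $2$‑KAC in the proof of \expref{Theorem}{thm:fullQ1}, upgrading the controlled reprogramming of the outermost permutation $P_t$ to a \emph{nested} version that is triggered along a chain through $P_1,\dots,P_{t-1}$. I would take $\Hyb_0$ to be the real world $(E_k,P_1,\dots,P_t)$, $\Hyb_2$ the ideal world $(R,P_1,\dots,P_t)$ with $R\gets\Perms_n$ fresh and independent, and $\Hyb_1$ the intermediate world in which the simulator $\mathcal{S}$ answers the $q_E$ classical queries with $R$ and the quantum queries with the original $P_1,\dots,P_{t-1}$ together with a controlled‑reprogrammed $P_t$. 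For a classical query $x_r$, put $z_1(x_r)=x_r\oplus k_0$ and $z_\ell(x_r)=P_{\ell-1}\!\big(z_{\ell-1}(x_r)\big)\oplus k_{\ell-1}$ for $2\le\ell\le t$, so that $z_t(x_r)$ is the value $E_k$ feeds into $P_t$. Each $z_\ell(x_r)$ is a deterministic function of $k$ and of the internal full‑truth‑table registers of $P_1,\dots,P_{t-1}$, so $\mathcal{S}$ can compute it coherently without disturbing the adversary's query registers; the reprogramming enforces $P_t^{\mathrm{reprog}}(z_t(x_r))=R(x_r)\oplus k_t$, the $t$‑round analogue of the $2$‑KAC condition.

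The $\Hyb_1$ simulator processes the queries in the fixed non‑adaptive order. First it answers the $q_E$ classical queries with $R$. Then, for $\ell=1,\dots,t-1$ in turn, it answers the $q_{P_\ell}$ queries to $P_\ell$ and, immediately afterwards, coherently computes into fresh ancillas the hop‑flags $h^{(\ell)}_r=\bigvee_{i=1}^{q_{P_\ell}}\mathbf{1}[\text{the }i\text{-th }P_\ell\text{-query has nonzero amplitude on }z_\ell(x_r)]$, accumulating the chain‑flags $g_r=\bigwedge_{\ell=1}^{t-1}h^{(\ell)}_r$ into registers $G_r$. Before touching $P_t$ it applies $U_{\mathrm{reprog}}=\prod_{r=1}^{q_E}\big(\ket{0}\!\bra{0}_{G_r}\otimes I+\ket{1}\!\bra{1}_{G_r}\otimes\Swap_{P_t^{\mathrm{reg}}}(X_r,Y_r)\big)$ with $X_r=z_t(x_r)$ and $Y_r=P_t^{-1}(R(x_r)\oplus k_t)$, answers the $q_{P_t}$ queries with the modified register, and uncomputes all ancillas. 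For $t=2$ this collapses exactly to the $2$‑KAC simulator, where $g_r$ is a single hop‑flag.

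The two adjacent distances are then bounded as in the proof of \expref{Theorem}{thm:fullQ1}. For $|\PrA{0}-\PrA{1}|$: conditioned on the absence of a \emph{swap collision} ($\{X_1,\dots,X_{q_E}\}\cap\{Y_1,\dots,Y_{q_E}\}=\emptyset$) the two processes are identical, using that for fixed $(k,P_1,\dots,P_{t-1})$ the map $P_t\mapsto E_k$ is a bijection; since the $X_r$ depend only on $(k_0,\dots,k_{t-1},P_1,\dots,P_{t-1})$ and the $Y_r$ only on $(k_t,R,P_t)$, which are independent, the collision probability is $\le\min(1,q_E^2/2^n)$. For $|\PrA{1}-\PrA{2}|$: the only difference between the final states is that the $q_{P_t}$ queries see $P_t$ versus its reprogrammed version, which differ on a set $S=\bigcup_{(i_1,\dots,i_{t-1})}S_{(i_1,\dots,i_{t-1})}$ indexed by tuples in $[q_{P_1}]\times\cdots\times[q_{P_{t-1}}]$, where $S_{(i_1,\dots,i_{t-1})}$ is the set altered by the swaps triggered along that query‑tuple. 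Applying \expref{Lemma}{lem:tr-ms} with \expref{Lemma}{lem:triangle}, a union bound, and concavity of the square root, this distance is at most $\sqrt2\sum_{j=1}^{q_{P_t}}\sum_{(i_1,\dots,i_{t-1})}\sqrt{p^{(P_t)}_j(S_{(i_1,\dots,i_{t-1})})}$.

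Everything then reduces to the estimate $\mathbb{E}[|S_{(i_1,\dots,i_{t-1})}|]\le 2\,\mathbb{E}[|\Delta_{(i_1,\dots,i_{t-1})}|]\lesssim q_E/2^{(t-1)n}$, where $\Delta$ is the reprogrammed set in the domain of $P_t$; this is where the nested trigger pays off and is the step I expect to be the main obstacle to formalize. For a fixed query‑tuple, the $\ell$‑th hop needs $z_\ell(x_r)$ to land in the (fixed, non‑adaptive) support of the $i_\ell$‑th $P_\ell$‑query; since $z_\ell(x_r)=P_{\ell-1}(z_{\ell-1}(x_r))\oplus k_{\ell-1}$ with $k_{\ell-1}$ uniform and independent of $k_0,\dots,k_{\ell-2}$, each of the $t-1$ hops contributes an independent factor $1/2^n$ in expectation by the tower property — essentially \expref{Lemma}{lem:sum-capture} applied round by round — leaving $q_E$ from summing over $r\in[q_E]$. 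Then \expref{Lemma}{lem:avg} gives $\mathbb{E}[p^{(P_t)}_j(S_{(i_1,\dots,i_{t-1})})]\lesssim q_E/2^{tn}$, Jensen gives $\mathbb{E}[\sqrt{\,\cdot\,}]\le\sqrt{q_E}/2^{tn/2}$, and summing over the $q_{P_t}$ queries and the $q_{P_1}\cdots q_{P_{t-1}}$ tuples yields $\mathbb{E}|\PrA{1}-\PrA{2}|\le 2q_{P_1}\cdots q_{P_t}\sqrt{q_E}/2^{tn/2}$; the same bound holds for $\mathbb{E}|\PrA{0}-\PrA{1}|$ since $\min(1,q_E^2/2^n)\le 1$, and the triangle inequality gives $\mathbf{Adv}_{\text{$t$-KAC,Q1}}(\A)\le 4q_{P_1}\cdots q_{P_t}\sqrt{q_E}/2^{tn/2}$; extension to mixed/entangled initial states is identical to \expref{Remark}{lbl:rem_mixed}. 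Besides this counting estimate, the points needing care are that the nested $U_{\mathrm{reprog}}$ is implementable unitarily given the sequential order of the query batches (so $\mathcal{S}$ has already seen all of $P_1,\dots,P_{t-1}$ before it must touch $P_t$) and that the ``$\Hyb_0\equiv\Hyb_1$ up to swap collisions'' claim survives the chained trigger.
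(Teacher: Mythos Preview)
Your proposal is correct and takes essentially the same approach as the paper. In fact, the paper's own proof of this theorem is a one-paragraph sketch that simply asserts the $2$-KAC argument ``carries over verbatim'' with $P_t$ controlled-reprogrammed along the chain $P_1,\dots,P_{t-1}$; your nested hop/chain-flag construction, the decomposition of the reprogrammed set by query-tuples in $[q_{P_1}]\times\cdots\times[q_{P_{t-1}}]$, and the round-by-round $1/2^n$ counting via the tower property are exactly the details one needs to fill in to make that sketch go through, and you have correctly flagged the counting step as the place requiring the most care.
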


\begin{proof}
The proof follows the same strategy as \expref{Theorem}{thm:fullQ1}. 
We again restrict to a non-adaptive adversary that commits to all its queries 
(to $E_k$, $P_1,\dots,P_t$) before interacting with the oracles. 
That is, the initial state is
\[
\ket{\Psi_0}
=\!\sum_{\vec a^{(1)},\ldots,\vec a^{(t)}}
\alpha_{\vec a^{(1)},\ldots,\vec a^{(t)}}
\;\ket{\vec x}_{X}
\!\Big(\bigotimes_{i=1}^t \ket{\vec a^{(i)}}_{A^{(i)}}\Big)
\!\Big(\ket{0^{q_E}}_{Y} \bigotimes_{i=1}^t \ket{0^{q_{P_i}}}_{B^{(i)}}\Big)
\otimes\ket{k}_{K}
\!\Big(\bigotimes_{i=1}^t \ket{P_i}_{(P_i)^{\mathrm{reg}}}\Big),
\]

where $\vec{a}^{(i)}=(a_1^{(i)},\dots, a_{q_{P_i}}^{(i)})$ denotes a superposition of quantum queries $(a^{(i)}_1,\dots,a^{(i)}_{q_{P_i}})$ to $P_i$ with amplitudes $\alpha_{\vec{a^{(i)}}}$. Define the hybrids:
\begin{itemize}
  \item[$\Hyb_0$:] $(E_k, P_1,\dots,P_t)$
  \item[$\Hyb_1$:] $(R, P_1,\dots,P_{t-1}, P^{R,P_1,\dots,P_{t-1}}_t)$
  \item[$\Hyb_2$:] $(R, P_1,\dots,P_t)$.
\end{itemize}

The permutation $P^{R,P_1,\dots,P_{t-1}}_t$ is obtained by controlled reprogramming 
$P_t$ on the fly, exactly as in the 2-KAC case. Whenever a superposition query passes 
through $P_1,\dots,P_{t-1}$ and reaches a point corresponding to some $x\in X_E$, we 
reprogram $P_t$ so that
\[
   P^{R,P_1,\dots,P_{t-1}}_t\!\bigl(P_{t-1}(\cdots P_1(x\oplus k_0)\cdots)\oplus k_{t-1}\bigr)
   \;=\; R(x)\oplus k_t.
\]
This ensures that $\Hyb_0$ and $\Hyb_1$ remain consistent from the adversary’s view. 
The subsequent trace-distance analysis carries over verbatim from the 2-KAC case, 
yielding the desired bound.
\end{proof}

Our theorem proves that $\A$ is required to make $q^2_{P_1} \cdots q^2_{P_t} \cdot q_E= \Omega (2^{tn})$ to achieve constant distinguishing advantage. 
When $t=1$, approximately $2^{n/3}$ classical and quantum queries are necessary for a successful attack, which matches the result in \cite{alagic2022post}. When $t\geq 2$,  approximately $2^{\frac{tn}{(2t+1)}}$ queries are necessary for a non-adaptive adversary to attack $t$-KAC in the Q1 model.

Our proofs bound the expected value of the adversary’s distinguishing advantage, taken over the randomness of keys and permutations.
The following remark shows that this suffices, as an expectation bound already implies that the real advantage is small for all but a negligible fraction of choices.

\begin{remark}
Let $\Adv_{\A}$ denote the (random) distinguishing advantage of a fixed adversary $\A$,
where the randomness is over the choice of keys, permutations, and $\A$’s internal coins.
By Markov's inequality, for any $\gamma \in (0,1]$,
\[
  \Pr\!\big[\Adv_{\A} \ge \gamma\big]
  \ \le\ \frac{\E[\Adv_{\A}]}{\gamma}.
\]
In particular, for any polynomial $p(n)$,
\[
  \Pr\!\big[\Adv_{\A} \ge 1/p(n)\big]
  \ \le\ p(n)\cdot \E[\Adv_{\A}].
\]
Hence, if $\E[\Adv_{\A}] = \negl(n)$, then $\Pr[\Adv_{\A} \ge 1/p(n)] = \negl(n)$
for every polynomial $p(n)$.
That is, for all but a negligible fraction of choices of keys and permutations,
the actual advantage $\Adv_{\A}$ is at most $1/p(n)$.
Consequently, an expectation bound immediately implies that, except with negligible probability,
the realized advantage is not far from its expected value.
\end{remark}

\section{Security of KAC in Q2 model}
\label{sec:q2lb}
Next, we analyze the Q2 security of $t$-KAC, where the adversary is given quantum access to all oracles. To do this, we introduce a simple oracle lifting argument to reduce the Q1 lower bound to a Q2 lower bound for any cipher, from which we then derive the Q2 security of $t$-KAC.

Before presenting our results, we introduce some terminology. Let $F:\{0,1\}^n\to\{0,1\}^n$ and $\mathcal{G}=\{G_1,\dots,G_m\}$. An algorithm may have
(i) \emph{classical access} to $F$ (computational basis queries only), counted by $q_F^{\mathrm{class}}$,
or (ii) \emph{quantum superposition access} to $F$ via the unitary $U_F:\ket{x,y}\mapsto\ket{x,y\oplus F(x)}$.
For each $G_i$ we count quantum queries by $q_i$.

\begin{lemma}[Table-simulation]
\label{lem:table-sim}
Given the full truth table $T_F=\{(x,F(x)) : x\in\{0,1\}^n\}$, one can implement $U_F$ by a reversible circuit without making oracle queries to $F$.
Equivalently, an algorithm with $2^n$ classical queries to $F$ (to learn $T_F$) can thereafter answer arbitrary quantum queries to $F$ at zero additional \emph{query} cost.
\end{lemma}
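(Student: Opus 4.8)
The plan is to show that the full truth table $T_F$ already contains all the information needed to realize $U_F$ as a fixed unitary, so that once $T_F$ is in hand no further interaction with $F$ is required. First I would exhibit the circuit explicitly: treat $T_F$ as classical data and, for each $x\in\bool^n$, hardwire a gate $C_x$ that is controlled on the $n$-qubit input register being in state $\ket{x}$ and, conditioned on that, applies to the $n$-qubit output register the pattern of Pauli-$X$ gates corresponding to the fixed string $F(x)$; that is, $C_x\colon\ket{x}\ket{y}\mapsto\ket{x}\ket{y\oplus F(x)}$ and $C_x\colon\ket{x'}\ket{y}\mapsto\ket{x'}\ket{y}$ for $x'\neq x$. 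Composing the $C_x$ over all $x\in\bool^n$ (in any order, since they act on pairwise-disjoint control subspaces and hence commute) gives a circuit whose action on every basis state is $\ket{x}\ket{y}\mapsto\ket{x}\ket{y\oplus F(x)}$, which is exactly $U_F$.

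Next I would verify the two properties in the statement. For reversibility/unitarity: each $C_x$ permutes the computational basis (XOR by a fixed string is an involution), so $C_x$ is a permutation matrix and in particular unitary; a product of such matrices is unitary, and by the basis-state computation above it equals $U_F$. For oracle-freeness: every gate constant used in the construction is read directly off $T_F$, which is already available, so the circuit makes zero queries to $F$. For the ``equivalently'' clause, an algorithm may first issue the $2^n$ classical queries $x\mapsto F(x)$ for all $x\in\bool^n$ — this reconstructs $T_F$ in full — and thereafter, whenever it wishes to perform a superposition query to $F$, it simply runs the circuit above; the post-setup query cost is $0$, which is what the lemma asserts.

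The main point to get right is not a mathematical obstacle but a modeling one: the resulting circuit has size $\Theta(2^n)$ (one multi-controlled gate per table entry, or a logarithmic-depth address-decoding/QRAM-style lookup if one prefers), so the lemma trades exponential time and memory for zero additional query complexity. Since the surrounding arguments (in particular the Q1-to-Q2 lifting) are purely query-complexity statements, I would emphasize in the write-up that ``cost'' here refers to query cost only, and note that the $2^n$ classical queries are both necessary and sufficient to obtain $T_F$ because $F$ is an arbitrary function on $\bool^n$. If desired, I would also remark that when $F$ is a permutation the same construction applied to the inverse table yields $U_{F^{-1}}$ at no extra query cost, which is the form actually needed for the public permutations $P_i$.
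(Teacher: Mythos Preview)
Your proof is correct. The paper actually states this lemma without proof, treating it as a standard observation; your write-up fills in exactly the details one would expect (a table-lookup circuit built from commuting multi-controlled XORs, unitarity via basis permutation, zero additional oracle queries once $T_F$ is hardwired), together with the appropriate caveat that the circuit has size $\Theta(2^n)$ but this is irrelevant in the query-complexity setting used downstream.
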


\begin{proposition}[Necessity of $2^n$ classical queries for exact simulation]
\label{prop:learn-2n}
Any algorithm that, for every function $F:\{0,1\}^n\to\{0,1\}^n$, outputs a circuit that agrees exactly with $U_F$ on all inputs must make $2^n$ classical queries to $F$ in the worst case.
\end{proposition}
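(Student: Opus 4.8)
The plan is to prove this by an information-theoretic adversary (fooling-set) argument, exploiting that \emph{exact} agreement with $U_F$ forces the output circuit to encode the entire truth table of $F$. First I would reduce to deterministic algorithms: a randomized algorithm with a worst-case query bound $m$ is, for each fixing of its internal coins, a deterministic algorithm making at most $m$ queries, and exact correctness on every $F$ must hold for each such fixing; so a worst-case query lower bound for deterministic algorithms transfers verbatim. (This proposition is the converse companion of \expref{Lemma}{lem:table-sim}: $2^n$ classical queries not only suffice to simulate $U_F$ but are necessary.)

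So let $\mathcal{A}$ be a deterministic algorithm with classical query access to $F$, and suppose toward a contradiction that, on the constant oracle $F_0\equiv 0^n$, it makes at most $m<2^n$ queries and outputs a circuit $C$ with $C=U_{F_0}$ (equivalently $C\ket{x}\ket{y}=\ket{x}\ket{y\oplus F_0(x)}$ for all $x,y$). Let $Q\subseteq\{0,1\}^n$ be the set of points queried in this run; by hypothesis $|Q|\le m<2^n$, so I may fix some $x^*\notin Q$ and a nonzero $e\in\{0,1\}^n$, and define $F_1$ to agree with $F_0$ everywhere except $F_1(x^*)=e$. The key step is then a coupling of the two executions: since $\mathcal{A}$ is deterministic and $x^*\notin Q$, an induction on the query index shows the run of $\mathcal{A}$ on $F_1$ is \emph{identical} to the run on $F_0$ — the first query is the same fixed point of $Q$, answered $0^n$ by both oracles because it is not $x^*$; and if the first $j$ queries coincide and were all answered $0^n$, the $(j{+}1)$-st query is again the same point of $Q$, answered $0^n$ in both runs. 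Hence $\mathcal{A}$ produces the same circuit $C$ on $F_0$ and on $F_1$.

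But exact correctness now demands both $C\ket{x^*}\ket{0^n}=U_{F_0}\ket{x^*}\ket{0^n}=\ket{x^*}\ket{0^n}$ and $C\ket{x^*}\ket{0^n}=U_{F_1}\ket{x^*}\ket{0^n}=\ket{x^*}\ket{e}$; since $\ket{0^n}$ and $\ket{e}$ are orthogonal, no single $C$ can satisfy both, a contradiction. Therefore $\mathcal{A}$ must make at least $2^n$ classical queries already on the constant-zero oracle, which establishes the worst-case bound. (Equivalently, one may phrase the same fact as a counting argument: a correct deterministic $\mathcal{A}$ making $m$ queries induces an injection from the $2^{n\cdot 2^n}$ functions to the set of length-$m$ answer sequences, of size $2^{nm}$, forcing $m\ge 2^n$.)

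I do not anticipate a genuine obstacle. The only point requiring care is the adaptive coupling — verifying that diverting the oracle solely on the unqueried point $x^*$ leaves the entire interaction transcript unchanged — which is immediate by induction once all queries are known to stay inside $Q$. The one caveat worth flagging in the write-up is that the argument uses \emph{exact} agreement with $U_F$ in an essential way: if the output circuit were only required to approximate $U_F$, or to be correct with high probability, this two-function collision is no longer a contradiction, and one would need an approximate/entropy-style counting argument yielding a weaker, model-dependent bound.
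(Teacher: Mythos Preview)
Your proposal is correct and follows essentially the same adversary argument as the paper's proof: pick an unqueried point $x^\star$, build two functions differing only there, observe the transcripts (and hence the output circuit) coincide, and derive a contradiction from $U_F\neq U_{F'}$. You add some welcome rigor the paper omits (the reduction to deterministic algorithms, the explicit inductive coupling, and the alternative counting formulation), but the core idea is identical.
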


\begin{proof}
Suppose the algorithm makes at most $2^n-1$ classical queries. Then there exists $x^\star$ it never queries. Construct $F,F'$ identical everywhere except at $x^\star$. The transcript is identical on both, hence the algorithm outputs the same circuit, which cannot implement both $U_F$ and $U_{F'}$. Thus $2^n$ classical queries are necessary.
\end{proof}

\begin{theorem}[Oracle Lifting]
\label{thm:oracle-lifting}
Fix \emph{Real/Ideal} distributions over oracles $(F,\mathcal{G})$ and let $\Adv(\cdot)$ be the distinguishing advantage.
Suppose for every algorithm $\mathcal{A}$ with classical access to $F$ and quantum access to $\mathcal{G}$ that makes $q_F^{\mathrm{class}}$ classical queries to $F$ and $q_i$ quantum queries to $G_i$, we have
\begin{equation}
\label{eq:q1-bound}
\Adv(\mathcal{A}) \le \varepsilon
\quad\text{whenever}\quad
\Phi\!\bigl(q_F^{\mathrm{class}},\,q_1,\dots,q_m\bigr)\;\le\;B.
\end{equation}
Then any algorithm $\mathcal{B}$ with quantum access to all of $F,G_1,\dots,G_m$, making $q_F^{\mathrm{quantum}}$ queries to $F$ and $q_i$ to $G_i$, satisfies
\begin{equation}
\label{eq:q2-instantiation}
\Adv(\mathcal{B}) \le \varepsilon
\quad\text{whenever}\quad
\Phi\!\bigl(\mathbf{2^n},\,q_1,\dots,q_m\bigr)\;\le\;B.
\end{equation}
\end{theorem}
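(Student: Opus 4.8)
The plan is to reduce the Q2 adversary $\mathcal{B}$ to a Q1 adversary $\mathcal{A}$ in a generic, oracle-agnostic way, so that the hypothesized Q1 bound~\eqref{eq:q1-bound} applies verbatim. First I would observe that the only difference between the two settings is the \emph{access pattern} to $F$: in Q2 the distinguisher $\mathcal{B}$ evaluates the unitary $U_F$ coherently, while in Q1 it is restricted to classical queries on $F$. The key conceptual point — already isolated as \expref{Lemma}{lem:table-sim} — is that once the full truth table $T_F$ is known classically, the unitary $U_F$ can be realized by a reversible circuit \emph{internally}, with no further queries. So the reduction is: $\mathcal{A}$ begins by making $2^n$ classical queries to $F$, one per input $x\in\{0,1\}^n$, to assemble $T_F$; it then runs $\mathcal{B}$ step by step, answering each of $\mathcal{B}$'s quantum queries to $F$ by applying the table-simulated circuit for $U_F$ (zero query cost), and passing each of $\mathcal{B}$'s quantum queries to $G_i$ straight through to the real $G_i$ oracle (so $\mathcal{A}$ also makes exactly $q_i$ quantum queries to each $G_i$). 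Finally $\mathcal{A}$ outputs whatever $\mathcal{B}$ outputs.

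Next I would check that this is a faithful simulation and that it lives in the Q1 access model. Faithfulness: the circuit produced from $T_F$ agrees with $U_F$ on \emph{all} inputs (this is exactly the exactness guarantee underlying \expref{Lemma}{lem:table-sim} and \expref{Proposition}{prop:learn-2n}), so $\mathcal{B}$'s internal state evolves identically whether its $F$-queries are answered by the real $U_F$ or by $\mathcal{A}$'s table-simulated copy; hence $\Pr[\mathcal{A}=1]=\Pr[\mathcal{B}=1]$ in both the Real and Ideal worlds, and therefore $\Adv(\mathcal{A})=\Adv(\mathcal{B})$. Access model: $\mathcal{A}$ touches $F$ only through computational-basis (classical) queries — it never needs superposition access to $F$ — and it touches each $G_i$ only through the allowed unitary $U_{G_i}$. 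Thus $\mathcal{A}$ is a legitimate Q1 adversary with query profile $q_F^{\mathrm{class}}=2^n$ and $q_i$ quantum queries to $G_i$ for each $i$. Applying the hypothesis~\eqref{eq:q1-bound} to $\mathcal{A}$: if $\Phi(2^n,q_1,\dots,q_m)\le B$, then $\Adv(\mathcal{A})\le\varepsilon$, and since $\Adv(\mathcal{B})=\Adv(\mathcal{A})$, we conclude $\Adv(\mathcal{B})\le\varepsilon$, which is precisely~\eqref{eq:q2-instantiation}.

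One subtlety worth flagging — and the place where the argument could be challenged — is the word ``exact'' in \expref{Lemma}{lem:table-sim}: the reduction needs the table-simulated circuit to implement $U_F$ \emph{perfectly} (not approximately), because any simulation error could in principle be amplified over $\mathcal{B}$'s many queries and would have to be tracked through the distinguishing advantage. This is fine here: given the complete table $T_F$, one builds the standard reversible lookup circuit computing $x\mapsto F(x)$ into an ancilla, XORs it into the output register, and uncomputes — this is exactly $U_F$ with zero error. A second minor point is that $\mathcal{B}$ may interleave $F$-queries and $G_i$-queries adaptively and may have its own internal unitaries between queries; the reduction handles this transparently, since $\mathcal{A}$ simply mirrors $\mathcal{B}$'s control flow, substituting the table-simulated $U_F$ for the real one at each $F$-query and leaving everything else untouched. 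I expect the only real work in writing this up is stating the step-by-step simulation cleanly and noting that neither the Real/Ideal distributions nor $\Phi$ and $B$ play any role beyond being passed through — the argument is entirely black-box in $(F,\mathcal{G})$.
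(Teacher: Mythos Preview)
Your proposal is correct and follows exactly the paper's approach: build a Q1 simulator that first learns the full truth table $T_F$ via $2^n$ classical queries, then runs $\mathcal{B}$ while answering its $F$-queries with the table-simulated $U_F$ (via \expref{Lemma}{lem:table-sim}) and forwarding $G_i$-queries, so that $\Adv(\mathcal{S})=\Adv(\mathcal{B})$ and the Q1 hypothesis with $q_F^{\mathrm{class}}=2^n$ applies. Your write-up is in fact more careful than the paper's own proof, which dispatches the argument in three sentences without discussing the exactness or adaptivity subtleties you flag.
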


\begin{proof}
Let $\mathcal{B}$ be any quantum-access algorithm. Construct $\mathcal{S}$ that first learns $T_F$ using $2^n$ classical queries, then simulates $U_F$ by \expref{Lemma}{lem:table-sim}, and otherwise forwards $\mathcal{B}$’s quantum queries to the $G_i$. This guarantees $\Adv(\mathcal{S})=\Adv(\mathcal{B})$. Moreover, $\mathcal{S}$ makes $q_F^{\mathrm{class}}=2^n$ queries to $F$ and the same $q_i$ queries to each $G_i$. Applying \eqref{eq:q1-bound} with $q_F^{\mathrm{class}}=2^n$ yields \eqref{eq:q2-instantiation}.
\end{proof}

\begin{remark}
While \expref{Proposition}{prop:learn-2n} is formulated for arbitrary functions, an analogous requirement applies to any permutation $P:\{0,1\}^n \to \{0,1\}^n$. In this case, bijectivity implies that $2^n - 1$ classical queries are both sufficient and necessary to obtain a circuit that agrees with $U_P$ on all inputs.
\end{remark}

\subsection*{Applications: Security of t-round KAC}
\label{subsec:app-2kac}
In what follows, we show how to derive $Q2$ security for $t$-KAC by applying \expref{Theorem}{thm:oracle-lifting}.

\begin{theorem}[Security of $t$-KAC, Q2 model]\label{cor:kac-q2}
Let	$D$ be a distribution over $k=(k_0, \ldots, k_{t})$ 
such that the marginal distributions of $k_0, \ldots, k_{t}$ are each uniform. Let $\A$ be a non-adaptive adversary that makes $q_{P_0},q_{P_1},\ldots,q_{P_{t}}$ quantum queries to $E_k, P_1,\ldots,P_t$. Let $\mathbf{Adv}_{\text{$t$-KAC,Q2}}(\A)$ denote the qPRP advantage of $\A$:
   \begin{equation*}
        \mathbf{Adv}_{\text{$t$-KAC,Q2}}(\A) :=
     \mathbb{E}_{\substack{k \leftarrow D \\ R, P_1,\dots,P_t \leftarrow \Perms_n }}\left|\Pr \left[\A^{E_k, P_1,\dots,P_t}(1^n) = 1\right]
    - \Pr \left[\A^{R, P_1,\dots,P_t}(1^n) = 1\right]\right|.
   \end{equation*}

    We obtain $    \mathbf{Adv}_{\text{$t$-KAC,Q2}}(\A) \leq \frac{4}{2^{(t-1)n/2}} \,\min_{0\le i\le t}\biggl\{
  \frac{q_{P_0} \cdot q_{P_1} \cdots q_{P_t} }
       {q_{P_i}}
\biggr\}$.
\end{theorem}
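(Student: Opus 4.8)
The plan is to obtain $t+1$ separate upper bounds, one per oracle $E=P_0,P_1,\dots,P_t$, by combining the single‑oracle‑classical Q1 bound of \expref{Theorem}{thm:fullQ1-t} with the Oracle Lifting principle (\expref{Theorem}{thm:oracle-lifting}), and then take the minimum. The case $i=0$ is immediate and already appears in the informal argument: given a Q2 adversary $\A$ with quantum access to $E,P_1,\dots,P_t$, learn the entire truth table of $E$ with $2^n$ classical queries and thereafter simulate $U_E$ for free (\expref{Lemma}{lem:table-sim}); the resulting algorithm is a Q1 adversary with $q_E=2^n$ classical queries to $E$ and unchanged $q_{P_1},\dots,q_{P_t}$ quantum queries, so \expref{Theorem}{thm:fullQ1-t} bounds its advantage, hence $\A$'s, by $4\,q_{P_1}\cdots q_{P_t}\sqrt{2^n}/2^{tn/2}=4\,q_{P_1}\cdots q_{P_t}/2^{(t-1)n/2}$, the $i=0$ term of the stated minimum. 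For each $i\in\{1,\dots,t\}$ I would run the same argument but classicalize $P_i$ rather than $E$ (learning a permutation's table needs at most $2^n$ classical queries, by \expref{Lemma}{lem:table-sim} and the remark after \expref{Proposition}{prop:learn-2n}); this reduces the task to a ``$P_i$‑as‑anchor'' analogue of \expref{Theorem}{thm:fullQ1-t}: distinguishing $t$‑KAC from random with \emph{classical} access to $P_i$ and \emph{quantum} access to $E$ and to the remaining $P_j$ has advantage at most $4\,q_{P_0}\bigl(\prod_{j\ne i,\,j\ge 1}q_{P_j}\bigr)\sqrt{q_{P_i}}/2^{tn/2}$.

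The crux is this ``$P_i$‑as‑anchor'' bound, and it rests on the observation that in the KAC distinguishing game the cipher $E$ and each public permutation $P_i$ are interchangeable as oracles. Conditioned on the key $k$, the real‑world truth‑table tuple $(E,P_1,\dots,P_t)$ is the uniform distribution on the solution set of $E_k=(k_t\oplus\cdot)\circ P_t\circ\cdots\circ P_1\circ(k_0\oplus\cdot)$; since for fixed $k$ and fixed $(P_j)_{j\ne i}$ the map $P_i\mapsto E_k$ is a bijection of $\Perms_n$, this same distribution is produced by drawing $E$ and $(P_j)_{j\ne i}$ uniformly and independently and then letting $P_i$ be the unique permutation consistent with the equation. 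Solving that equation for $P_i$ expresses $P_i$ as a \emph{standard} $t$‑round key‑alternating cipher built from the permutations $\bigl(P_{i-1}^{-1},\dots,P_1^{-1},\,E,\,P_t^{-1},\dots,P_{i+1}^{-1}\bigr)$ with round keys a reindexing of $(k_0,\dots,k_t)$, which still has uniform marginals; meanwhile the ideal world (all oracles independent and uniform) is symmetric under any relabelling of its oracles. Hence the game with $P_i$ classical and the others quantum is literally an instance of the game of \expref{Theorem}{thm:fullQ1-t}, with $P_i$ in the role of the cipher and $\bigl(E,(P_j^{-1})_{j\ne i}\bigr)$ in the role of the public permutations; using that a quantum query to $P_j$ also realizes one to $P_j^{-1}$ at no extra cost (so the quantum query budgets are exactly $q_{P_0}$ and the $q_{P_j}$, $j\ne i$), \expref{Theorem}{thm:fullQ1-t} yields the claimed bound. (One could instead re‑derive it from scratch by re‑running the hybrid/controlled‑reprogramming proof of \expref{Theorem}{thm:fullQ1-t} with $P_i$ as the classically‑queried anchor and a neighbouring permutation in the KAC ``cycle'' $E\!-\!P_1\!-\!\cdots\!-\!P_t\!-\!E$ as the reprogrammed oracle; the trace‑distance estimate goes through line by line.)

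Finally, plugging the ``$P_i$‑as‑anchor'' bound into the Oracle Lifting template of \expref{Theorem}{thm:oracle-lifting} with cost function $\Phi=q_{P_0}\bigl(\prod_{j\ne i,\,j\ge 1}q_{P_j}\bigr)\sqrt{q_{P_i}}$ and substituting $q_{P_i}=2^n$ gives, for every $i\in\{0,\dots,t\}$, $\mathbf{Adv}_{t\text{-KAC},Q2}(\A)\le 4\,q_{P_0}\bigl(\prod_{j\ne i,\,j\ge 1}q_{P_j}\bigr)/2^{(t-1)n/2}=\frac{4}{2^{(t-1)n/2}}\cdot\frac{q_{P_0}q_{P_1}\cdots q_{P_t}}{q_{P_i}}$ (with the $i=0$ instance being the one handled directly above, where the factor $q_{P_0}$ cancels); taking the minimum over $i$ yields the theorem. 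I expect the only genuinely non‑mechanical step to be the symmetry in the middle paragraph: one must check that solving the defining equation for a middle permutation again produces a bona fide $t$‑round KAC and that permuting the round‑key tuple preserves uniformity of all marginals, so that \expref{Theorem}{thm:fullQ1-t} truly applies verbatim; everything else is bookkeeping around \expref{Theorem}{thm:fullQ1-t}, \expref{Lemma}{lem:table-sim} and \expref{Theorem}{thm:oracle-lifting}.
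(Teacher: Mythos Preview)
Your proposal is correct and follows essentially the same route as the paper: for each $i\in\{0,\dots,t\}$ you rewrite $P_i$ (with $P_0=E_k$) as a $t$-round KAC in the permutations $(P_{i-1}^{-1},\dots,P_1^{-1},E,P_t^{-1},\dots,P_{i+1}^{-1})$ with a reindexed key tuple of uniform marginals, apply \expref{Theorem}{thm:fullQ1-t} to get a Q1 bound with $P_i$ as the classical ``anchor,'' and then invoke \expref{Theorem}{thm:oracle-lifting} with $q_{P_i}^{\sf class}=2^n$ before taking the minimum over~$i$. The paper's proof is terser on the symmetry step (it simply asserts the KAC representation of $P_i$), whereas you spell out the underlying bijection and the preservation of uniform marginals, but the argument is the same.
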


\begin{proof}
Let $F:=E_k$ denote the outer keyed permutation and set $(G_1,\ldots,G_t):=(P_1,\ldots,P_t)$ for the internal permutations.
By \expref{Theorem}{thm:fullQ1-t} we obtain
\begin{equation}\label{q1_p0}
\mathbf{Adv}_{\text{$t$-KAC,Q1}}(\A) \leq 4q_{P_1}\cdots q_{P_t}\frac{\sqrt{\,q_{P_0}}}{2^{tn/2}}.
\end{equation}
Applying Theorem~\ref{thm:oracle-lifting} to \eqref{q1_p0} with $q_E^{\sf class}=2^n$ yields
\begin{equation}\label{q2_p0}
\mathbf{Adv}_{\text{$t$-KAC,Q2}}(\A) \leq \frac{4q_{P_1}\cdots q_{P_t}}{2^{(t-1)n/2}}.
\end{equation}
Next, consider $t$ hypothetical adversaries $\text{Q1}_{i}$ for $1\leq i \leq t$, where each has classical access to $P_i$ and quantum access to $E_k, P_1,\ldots,P_{i-1},P_{i+1},\ldots,P_t$. Note that $P_i$ can be viewed as a KAC with permutations $(P_{i-1}^{-1},\ldots,P_1^{-1},E_k,P_t^{-1},\ldots,P_{i+1}^{-1})$ and keys $(k_{i-1},\ldots,k_0,k_t,\ldots,k_i)$. This time take $F:=P_i$ as the outer keyed permutation and $(G_1,\ldots,G_t):=(E_k,P_1,\ldots,P_{i-1},P_{i+1},\ldots, P_t)$ as the internal ones. By another application of \expref{Theorem}{thm:fullQ1-t}, for each $1\leq i\leq t$ we get
\begin{equation}\label{q1_pi}
\mathbf{Adv}_{\text{$t$-KAC,$\text{Q1}_i$}}(\A) \leq 4q_{P_0}\cdots q_{P_{i-1}} q_{P_{i+1}} \cdots q_{P_t}\frac{\sqrt{\,q_{P_i}}}{2^{tn/2}}.
\end{equation}
Applying Theorem~\ref{thm:oracle-lifting} to \eqref{q1_pi} with $q_{P_i}^{\sf class}=2^n$, for all $1\leq i\leq t$ we conclude
\begin{equation}\label{q2_pi}
\mathbf{Adv}_{\text{$t$-KAC,Q2}}(\A) \leq \frac{4q_{P_0}\cdots q_{P_{i-1}} q_{P_{i+1}} \cdots q_{P_t}}{2^{(t-1)n/2}}.
\end{equation}
Combining \eqref{q2_p0} and \eqref{q2_pi} completes the proof.
\end{proof}

Above theorem shows that $\mathcal{A}$ is required to make $\Omega(2^\frac{(t-1)n}{2t})$ quantum queries to each of the $E_k,P_1,\ldots,P_t$. In contrast, $1$-round KAC is completely broken in polynomial quantum time by Simon’s algorithm~\cite{kuwakado2012security}. \expref{Corollary}{cor:kac-q2} for $t$-KAC, with $t\geq 2$ marks the first non-trivial provable Q2 security for any $t$ rounds.

Note that to simulate quantum access to $F$ from classical queries, one must, in general, recover the entire truth table $T_F$. Proposition~\ref{prop:learn-2n} shows that $2^n$ classical
queries are unavoidable in the worst case: omitting even a single input leaves open two
functions $F,F'$ that differ only at that input, yet the corresponding unitaries
$U_F,U_{F'}$ are at diamond distance~$1$. Thus, exact simulation in the query model
requires full table-learning.

\begin{remark}
The $2^n$ cost can be realized in two equivalent ways: either as a \emph{uniform} strategy
that brute-forces all inputs to $F$, or as \emph{non-uniform} advice in which the entire
truth table $T_F$ is given to the algorithm. In both views, the effective query complexity
to $F$ is $2^n$, after which arbitrary quantum queries to $F$ can be answered without
further oracle calls. This overhead is information-theoretically tight; even for
average-case $F$, approximate simulation cannot safely skip inputs, since flipping an
unqueried point changes $U_F$ at unit distance.
\end{remark}

Note that the quality of the Q2 lower bound obtained via Theorem~\ref{thm:oracle-lifting} is governed by two independent effects. First, the $2^n$ overhead for simulating quantum access to $F$ is inherent: \expref{Proposition}{prop:learn-2n} shows that exact table-simulation requires querying every input, and omitting even one leads to unitaries at diamond distance~$1$. 
Second, the Q1 tradeoff bound that we plug in \expref{Theorem}{thm:fullQ1-t} may itself be loose, both because it currently applies only to non-adaptive adversaries and because the precise constants or exponents may not be tight.

\section{Quantum Key-Recovery Attack for KAC}
\label{sec:attack} 

In this section, we present a quantum attack on the KAC cipher within the $Q1$ model. We first revisit the classical attack, highlighting the property we will leverage. 

Given $t$ public permutations $P_1, P_2, \ldots, P_t$ over $n$-bit strings, and $t+1$ independent secret keys $k_0, k_1, \ldots, k_t$ of size $n$, the $t$-round KAC encryption function $E_k:\bool^{(t+1)n}\times \bool^n \rightarrow \bool^n$ for a $t$-KAC on a plaintext $x$ is defined as:
\begin{equation}
   E_k(x) = k_t \oplus P_t(k_{t-1} \oplus P_{t-1}(\ldots P_1(k_0 \oplus x) \ldots)).
\end{equation}

\subsection{Revisiting the Classical Key-Recovery Attack} \label{sec:classical}
The classical attack~\cite{bogdanov2012key} establishes a query complexity of $O(2^{tn/t+1})$. The idea is to sample inputs and look for collisions in generated key candidates. The attack proceeds as follows: First, the adversary \( \mathcal{A} \) randomly selects subsets of the input domains of \( E, P_1, \ldots, P_t \), denoted by \( S_0, \ldots, S_t \). These sets are defined as:

\begin{equation} \label{eqn:sample}
S_i = \{x_{i,1}, \ldots, x_{i,|S_i|}\} \quad \text{for } i=0, \ldots, t.
\end{equation}
with the condition of $|S_0|\cdot |S_1| \cdot \ldots \cdot |S_t|= \beta 2^{tn}$, for some constant \( \beta \geq t+1 \).

\( \mathcal{A} \) queries each of \( E, P_1, \ldots, P_t \) on the inputs taken from the sets \( S_0, \ldots, S_t \), respectively. \( \mathcal{A} \) obtains all query/response pairs $(x_{0,i}, E_k(x_{0,i}))$ for $x_{0,i} \in S_0$, and $(x_{j,i}, P_j(x_{j,i}))$ for $x_{j,i} \in S_j$ where $j \in \{1, \ldots, t\}$. $\mathcal{A}$ then stores these pairs.

Then, for each tuple \( (x_0, \ldots, x_t) \in S_0 \times \cdots \times S_t \), the adversary collect a key candidate \( (\kappa_0, \ldots, \kappa_t) \) via the following sequence of equations:

\begin{equation} \label{eqn:keygen}
\kappa_0 = x_0 \oplus x_1, \quad \kappa_i = P_i(x_i) \oplus x_{i+1} \text{ for } 1 \le i \le t-1, \quad \kappa_t = P_t(x_t) \oplus E_k(x_0).
\end{equation}

Finally, among all the \( \beta 2^{tn} \) generated key candidates, the algorithm returns the one that occurs the most frequently.

To analyze the attack, first observe that since the sets \( S_0, \ldots, S_t \) are chosen randomly, the first \( t \) equations in \expref{Equation}{eqn:keygen} generate each possible key candidate for \( (k_0, \ldots, k_{t-1}) \) approximately \( \beta \) times. Now, suppose the algorithm produces two key candidates that share the same first $t$ components derived from distinct input tuples $(x_0^{(1)}, \ldots, x_t^{(1)})$ and $(x_0^{(2)}, \ldots, x_t^{(2)})$ from $S_0 \times \cdots \times S_t$:
\[
(\kappa_0, \ldots, \kappa_{t-1}, \kappa_t^{(1)}) \quad \text{and} \quad (\kappa_0, \ldots, \kappa_{t-1}, \kappa_t^{(2)}).
\]
Then, the following holds based on the properties of random permutations and keys:
\begin{equation}\label{eqn:lastKeyInequality}
    \Pr[\kappa_t^{(1)} = \kappa_t^{(2)}] =
    \begin{cases}
    1 & \text{if } (\kappa_0, \ldots, \kappa_{t-1}) = (k_0, \ldots, k_{t-1}), \\
    \frac{1}{2^n} & \text{otherwise}.
    \end{cases}
\end{equation}

This implies that $(k_0,\ldots, k_t)$ is the one that appears most frequently among all the generated candidates, with high probability. Now, we can rewrite the classical attack in the following way: 

\vspace{1\baselineskip}
\begin{breakablealgorithm}
\caption{Classical Key-Recovery Attack for $t$-KAC}
\label{att:1}
\begin{enumerate}
    \item  Set $\beta = t+1$. Sample $t+1$ random sets as \expref{Equation}{eqn:sample} conditioning on $$|S_0|\cdot |S_1| \cdot \ldots \cdot |S_t|= \beta2^{tn}.$$
    Set $E:S_0 \rightarrow \bool^n$ and $P_i: S_i \rightarrow \bool^n$ for all $i \in \{1,\ldots,t\}$. 
    \item $\mathcal{A}$ gets classical access to $E,P_1,\ldots, P_t$. $\mathcal{A}$ then queries all possible inputs in $S_0$ for $E$, and all possible inputs in $S_i$ for $P_i$ for $1\leq i \leq t$. $\mathcal{A}$ then stores the query/response pairs.
    \item \label{att:step3}For each input tuple $(x_0,\ldots, x_t) \in S_0\times \ldots\times S_t$ , $\A$ generates a key tuple $(\kappa_0,\ldots,\kappa_t)$ by \expref{Equation}{eqn:keygen}. 
    \item \label{att:step4}$\A$ checks whether there exists a key tuple that is generated by at least $t+1$ different input tuples. It then collects all such key candidates and randomly outputs one of them.
\end{enumerate}
\end{breakablealgorithm}
\vspace{\baselineskip}

The following lemma establishes the optimal query complexity of the algorithm and proves its correctness.
\begin{lemma} \label{lem:successAttack} For any adversary $\A$ executing \expref{Algorithm}{att:1}, the optimal query complexity is $O(2^{tn/(t+1)})$. Furthermore, if $\A$ outputs $\hat{k}$, we have $\Pr[\hat{k} = k] \geq \tfrac{1}{2}$, where the probability is over the random sampling of the sets and the internal randomness of $\A$.
\end{lemma}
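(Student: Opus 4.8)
The query bound is a short optimization: the algorithm issues $\sum_{i=0}^{t}|S_i|$ queries subject to the single constraint $\prod_{i=0}^{t}|S_i|=\beta\,2^{tn}$ with $\beta$ a constant, so by AM--GM the total is minimized when all $|S_i|$ are equal, giving $|S_i|=(\beta\,2^{tn})^{1/(t+1)}=\Theta(2^{tn/(t+1)})$ and total query complexity $(t+1)\beta^{1/(t+1)}2^{tn/(t+1)}=O(2^{tn/(t+1)})$; any other admissible split of the $|S_i|$ only increases the sum, so this is optimal.

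For correctness I would establish two facts and combine them: (a) the true key $k=(k_0,\dots,k_t)$ is generated by at least the required multiplicity of sampled input tuples with probability $\ge\tfrac12$, and (b) with probability $1-o(1)$ no \emph{incorrect} full-key candidate reaches that multiplicity. Given (a) and (b), the true key is (essentially surely) the unique candidate passing Step~\ref{att:step4} --- equivalently, the strictly most frequent candidate in the formulation of Section~\ref{sec:classical} --- hence is the returned $\hat k$, so $\Pr[\hat k=k]\ge\tfrac12$. The key observation for (a) is the automatic consistency of the last equation: a tuple $(x_0,\dots,x_t)$ generates $k$ precisely when the forward chain $x_1=x_0\oplus k_0$, $x_{i+1}=P_i(x_i)\oplus k_i$ lands in $S_1\times\cdots\times S_t$ with $x_0\in S_0$, because then $\kappa_t=P_t(x_t)\oplus E_k(x_0)=k_t$ by the definition of $E_k$. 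Treating the $S_i$ as independent uniform random subsets (Poissonization; the fixed-size model follows by negative association and de-Poissonization), the indicator events attached to distinct $x_0$ involve disjoint domain points, hence are independent, so the number $N$ of generating tuples is $\mathrm{Binomial}(2^n,\beta/2^n)$ with $\E[N]=\beta$ by \expref{Lemma}{lem:sum-capture}. A binomial (and its Poisson limit) with integer mean has median equal to that mean, so $N\ge\beta$ with probability $\ge\tfrac12$; enlarging $\beta$ to a bigger constant makes this $1-o(1)$ by a Chernoff bound.

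For (b) I would invoke \expref{Equation}{eqn:lastKeyInequality}: two tuples that agree on their first $t$ key-coordinates but do not match $(k_0,\dots,k_{t-1})$ agree on the last coordinate only with probability $2^{-n}$, and essentially independently across pairs (the genericity point being that the chain images fed into $P_t$ are pairwise distinct, which fails only on a negligible event). Thus a fixed such ``wrong'' candidate is generated $\ge m$ times only if at least $m$ of the $\approx\beta$ tuples matching its first $t$ coordinates also match the last, an event of probability $O(\beta^{m}2^{-(m-1)n})$; summing over the at most $2^{tn}$ choices of the first $t$ coordinates bounds the expected number of wrong candidates of multiplicity $\ge m$ by $O(\beta^{m}2^{(t+1-m)n})$. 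Here $m=t+1$ is exactly the value making the exponent of $2^n$ nonpositive; taking $m$ a hair larger (or invoking the most-frequent criterion of Section~\ref{sec:classical} once $\beta$ is chosen to strictly exceed $t+1$) pushes this expectation to $o(1)$, so by Markov no wrong candidate survives except with probability $o(1)$. Candidates whose first $t$ coordinates equal $(k_0,\dots,k_{t-1})$ but whose last coordinate is wrong are never generated, by the automatic-consistency observation. Combining (a) and (b) yields $\Pr[\hat k=k]\ge\tfrac12-o(1)$, upgraded to $\ge\tfrac12$ by a slightly larger constant $\beta$, which is absorbed into the $O(2^{tn/(t+1)})$ query bound.

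The step I expect to be the main obstacle is reconciling (a) and (b): enlarging $\beta$ sharpens the concentration in (a) but inflates the false-positive expectation in (b) roughly like $\beta^{m}/m!$, so the argument hinges on a careful joint choice of the budget $\beta$ and the multiplicity threshold $m$, plus a clean treatment of the pairwise-distinctness genericity condition underlying the independence-across-pairs claim in \expref{Equation}{eqn:lastKeyInequality}. Everything else --- the AM--GM optimization, the expectation via \expref{Lemma}{lem:sum-capture}, and the binomial/Poisson concentration --- is routine.
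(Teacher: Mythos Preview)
Your query-complexity argument via AM--GM is identical to the paper's. For correctness, however, the paper takes a shorter and rather different route than your (a)/(b) split. Instead of separately controlling the multiplicity of the true key and then union-bounding over false positives, the paper lets $X$ be the set of candidates appearing at least $t+1$ times, argues (with the caveat in its footnote that it ``uses the expected number for analysis'') that the true key is in $X$, bounds the probability that a fixed wrong prefix $\kappa^\ast$ yields all $\beta=t+1$ last coordinates equal by $(2^{-n})^{\beta-1}=2^{-tn}$, sums over the $2^{tn}-1$ wrong prefixes to get $\E[|X|]\le 2$, and then applies Jensen: $\Pr[\hat k=k]=\E[1/|X|]\ge 1/\E[|X|]\ge 1/2$.

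What each approach buys: the paper's Jensen trick cleanly absorbs the case where several false positives survive (you never need (b) to hold with probability $1-o(1)$; an expectation bound on $|X|$ suffices), which neatly avoids the very parameter tension you flag in your last paragraph---with $m=t+1$ the false-positive expectation is $\Theta(1)$, not $o(1)$, so your route needs either a larger threshold or a larger $\beta$. On the other hand, your treatment of (a) is more honest than the paper's: the paper simply asserts the true key lies in $X$ by replacing the random count with its expectation, whereas you give an actual median-of-binomial argument. If you want the clean $\tfrac12$ without tuning constants, adopting the paper's $\E[|X|]\le 2$ plus Jensen step (and keeping your more careful justification that the true key is in $X$) gives the tidiest proof.
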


\begin{proof}
Suppose an adversary $\A$ makes $q_E$ queries to $E$ and $q_{P_i}$ queries to $P_i$. From \expref{Algorithm}{att:1}, $q_E=|S_0|$ and $q_{P_i}=|S_i|$ for $i \in \{1,\ldots,t\}$. Then $T = q_E+\sum_{i=1}^t q_{P_i}$ is minimized under the constraint $|S_0|\cdot |S_1|\cdot \ldots \cdot|S_t|=\beta 2^{tn}$ when $|S_0|\approx \cdots \approx |S_t|\approx (\beta 2^{tn})^{\tfrac{1}{t+1}}=\beta^{\tfrac{1}{t+1}}2^{\tfrac{tn}{t+1}}$. Hence, the optimal query complexity is $T = O(t\cdot\beta^{\tfrac{1}{t+1}}2^{\frac{tn}{t+1}})=O(2^{\tfrac{tn}{t+1}})$.

For the correctness of the algorithm, consider an arbitrary key candidate $\kappa = (\kappa_0,\ldots,\kappa_t) \neq (k_0,\ldots,k_t)$ that can be generated in \expref{Step}{att:step3}. Let $\kappa^* = (\kappa_0,\ldots,\kappa_{t-1})$ be the first $t$ elements in $\kappa$. \expref{Equation}{eqn:lastKeyInequality} implies that $\kappa^* \neq (k_0,\ldots,k_{t-1})$.

Consider the tuples $(x_0, \ldots, x_t)$ that generate a fixed prefix $\kappa^*$. As mentioned before, the expected number of such tuples for a random $\kappa^*$ is $\beta$. Let those keys be $\kappa^{(j)} = (\kappa^*, \kappa_t^{(j)})$ for $j=1,\ldots, \beta$\footnote{We use the expected number for analysis. By increasing $\beta$, $\A$ can make sure that the correct key is generated at least $t+1$ times, which helps distinguish it from incorrect keys whose generated $\kappa_t$ values are likely to vary.}. Based on \expref{Equation}{eqn:lastKeyInequality}, if $\kappa^* \neq (k_0,\ldots,k_{t-1})$, the last components $\kappa_t^{(j)}$ are essentially independent random values. The probability that all $\beta$ of these values are equal is $(1/2^n)^{\beta-1} = 1/2^{tn}$.

Assume $X$ is the set of key candidates that appear at least $t+1$ times. In general, there are $2^{tn}-1$ unique $\kappa^* \neq (k_0,\ldots,k_{t-1})$ in the input. Hence, the expected value of $|X|$ is at most $E[|X|] \leq 1 + \frac{2^{tn}-1}{2^{tn}} \leq 2$.

\[
\Pr[\hat{k} = k] = \sum_x\frac{1}{x}\Pr[|X| = x] = E[\frac{1}{|X|}].
\]
Since $\frac{1}{x}$ is convex, by Jensen's inequality:
\[\Pr[\hat{k} = k] = E[\frac{1}{|X|}] \geq \frac{1}{E[|X|]}\geq \frac{1}{2}.\]
\end{proof}

\subsection{Quantum Key-Recovery Attack for t-KAC in \ensuremath{Q1} Model}

In this section, we generalize the classical attack (\expref{Algorithm}{att:1}) to the $Q1$ model, where the adversary $\A$ has classical access to $E_k$ and quantum access to $P_1, \ldots,P_t$. We focus on query complexity, defining the \emph{cost} as the number of classical queries to $E_k$ and quantum queries to $P_i$ or their inverses. 

The key observation from the classical attack is that the correct key $(k_0, \ldots, k_t)$ is identified by the property that it generated by multiple input tuples from $S_0 \times S_1 \times \ldots \times S_t$. Our quantum attack uses a quantum walk to find a subset of $S_0, S_1, \ldots, S_t$ (corresponding to a vertex in a product graph) that is likely to contain the values needed to generate the correct key multiple times when combined with $S_0$.

$\A$ begins by querying $E_k$ on all elements of $S_0$. Then, $\A$ defines a graph $G$ which is the Cartesian product of $t$ Johnson graphs $J(S_i, r)$, for $1 \leq i \leq t$. A vertex in $G$ is a tuple of $t$ subsets, one from each $S_i$. A vertex $V = (V_1, \dots, V_t)$ with $V_i \subseteq S_i$, $|V_i|= r$ is considered \emph{marked} if it contains the necessary elements such that combining them with $S_0$ allows generating the correct key $(k_0, \ldots, k_t)$ at least $t+1$ times via \expref{Equation}{eqn:keygen}. 

By performing a quantum walk on $G$ to find a marked vertex, $\A$ can recover the correct key.

Below, we present the steps of our $Q1$ attack and provide a detailed explanation of each phase.

\vspace{\baselineskip}
\begin{breakablealgorithm}
\caption{Quantum Key-Recovery Attack for $t$-KAC in the $Q1$ Model.}
\label{attack:main}
\begin{enumerate}
    \item \textbf{(Classical Initialization:)}\label{step1} Set $\alpha = (t+1)^{\frac{1}{t+1}}$. Sample $t+1$ random sets $S_0,\ldots, S_t$ as in \expref{Equation}{eqn:sample}, such that $|S_0|= \alpha N_c$, $|S_1| = \ldots = |S_t|= \alpha N_q$, and \begin{equation}\label{equ:size}
        N_c.{N_q}^t = 2^{nt}.
    \end{equation} The optimal values for $N_c$ and $N_q$ will be determined later. Query $E_k$ on all inputs in $S_0$. Store query/response pairs.
       \vspace{.2em}
    \item \textbf{(Quantum Walk Search:)} \label{step2}  Define the graph $G = J(S_1,r) \times \ldots \times J(S_t,r)$ (Cartesian product). Vertices $V = (V_1, \ldots, V_t)$ are tuples of subsets $V_i \subseteq S_i$ with $|V_i|=r$. Define marked vertices as specified below. Run the quantum walk algorithm (\expref{Algorithm}{algo:quantumwalk}) on $G$ to find a marked vertex $V$ and its associated data $d(V)$.

       \vspace{.2em}
    \item \textbf{(Verification)} \label{step3} Using $V$ and $d(V)$ (which contains $P_i(x)$ for $x \in V_i$), identify key candidates $(\kappa_0, \ldots, \kappa_t)$ generated by tuples $(x_0, \dots, x_t)$ where $x_0 \in S_0, x_i \in V_i$ for $i \ge 1$, that are repeated at least $t+1$ times. Randomly output one of them.
  
\end{enumerate}
\end{breakablealgorithm}
\vspace{\baselineskip}

We now provide details for \expref{Algorithm}{attack:main}. We first define the data structure $d$ and the marked vertices. 

\paragraph{\textbf{Marked Vertex.}}
A vertex $V = (V_1,\ldots,V_t)$ in $G$ is a tuple of $t$ subsets, where $V_i \subseteq S_i$ and $|V_i|=r$. $V$ is marked if there exist at least $t+1$ distinct input tuples $(x_0, x_1, \ldots, x_t)$ from $S_0 \times V_1 \times \ldots \times V_t$ that generate the \emph{same} key candidate $(\kappa_0, \ldots, \kappa_t)$ via \expref{Equation}{eqn:keygen}.

\paragraph{\textbf{Data structure $d$.}} 
For a vertex $V = (V_1,\ldots,V_t)$, the data $d(V)$ consists of the query/response pairs for $P_i$ on all inputs in $V_i$ for $i=1,\ldots,t$. Classically, $d(V)$ would be the set $\{(x, P_i(x)) \mid x \in V_i, i=1,\dots,t\}$. In the quantum setting, the data is represented as a quantum state: 

\[\ket{d(V)} = \ket{P_1(x_{1,1}), \ldots P_1(x_{1,r})} \otimes \ldots \otimes \ket{P_t(x_{t,1}), \ldots P_t(x_{t,r})}.\]

Recall that $\ket{V}_d=\ket{V,d(V)}$. We analyze the costs of \expref{Algorithm}{algo:quantumwalk} when applied to $G$ with the defined marked vertices and data structure. 

\begin{lemma}
\label{lem:qw}
    The quantum walk algorithm in  \expref{Algorithm}{attack:main} finds a marked vertex in $G$ with $O({N_q}^{\alpha})$ quantum queries, where $\alpha = \frac{t(t+1)}{t(t+1)+1}$. 
\end{lemma}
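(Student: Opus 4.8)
The plan is to read the walk's cost off the MNRS bound of \expref{Theorem}{thm:qrw}, namely $\mathsf{S}+\tfrac{1}{\sqrt{\epsilon}}\bigl(\tfrac{1}{\sqrt{\delta}}\mathsf{U}+\mathsf{C}\bigr)$, by estimating separately the setup cost $\mathsf{S}$, update cost $\mathsf{U}$, checking cost $\mathsf{C}$, the spectral gap $\delta$ of $G$, and the fraction $\epsilon=|M|/|V|$ of marked vertices, and then optimizing over the free parameter $r$. Throughout I treat $t$ as a constant, and I write $c:=(t+1)^{1/(t+1)}$ for the oversampling constant of \expref{Step}{step1}, so that $|S_0|=cN_c$ and $|S_1|=\dots=|S_t|=cN_q$.

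First I would pin down the four mechanical quantities. The data attached to a vertex $V=(V_1,\dots,V_t)$ is $d(V)=\bigl((x,P_i(x)):x\in V_i,\ i=1,\dots,t\bigr)$, so preparing the initial state $\ket{\psi}_d$ costs $\mathsf{S}=O(tr)$ quantum queries (plus $O(t)$ more to also carry one neighbour's data). Since, in the product of \expref{Definition}{def_cat_pr}, a neighbour of $V$ differs from it in one element of each coordinate, a diffusion step uncomputes and recomputes $O(t)$ permutation values, so $\mathsf{U}=O(t)$. Deciding whether $V$ is marked uses only $d(V)$ and the classically stored pairs $\{(x_0,E_k(x_0)):x_0\in S_0\}$ from \expref{Step}{step1} — one forms the $\le|S_0|\,r^{t}$ candidate keys via \expref{Equation}{eqn:keygen} and looks for a $(t+1)$-fold coincidence — so $\mathsf{C}=0$ queries. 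For the gap, $G$ is a $t$-fold product of Johnson graphs $J(cN_q,r)$; as $\delta\bigl(J(m,r)\bigr)=\Theta(1/r)$ for $r\le m/2$ and the product's transition-matrix eigenvalues are products of the factors' eigenvalues (all lying in $[-o(1),1]$ in this regime), we get $\delta(G)=\Theta(1/r)$, hence $1/\sqrt{\delta}=O(\sqrt{r})$.

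The substantive step — and where I expect the main difficulty — is a lower bound on $\epsilon$. I would reuse the chain structure of the classical attack (\expref{Section}{sec:classical}): the true key is generated by $(x_0,\dots,x_t)$ exactly when $x_1=x_0\oplus k_0$ and $x_{i+1}=P_i(x_i)\oplus k_i$, so each $x_0\in S_0$ determines a unique candidate chain $x_1,\dots,x_t$. Applying \expref{Lemma}{lem:sum-capture} coordinatewise, the expected number of $x_0\in S_0$ whose chain lies entirely in $S_1\times\cdots\times S_t$ equals $|S_0|\,(cN_q/2^n)^{t}=c^{t+1}=t+1$, using $N_cN_q^{\,t}=2^{nt}$ and $c=(t+1)^{1/(t+1)}$; since this count is a sum of near-independent indicators it concentrates, so with constant probability over the sampling of the $S_i$ there are at least $t+1$ such chains. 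Conditioning on such a sample, a uniformly random vertex $V$ contains a fixed one of these chains iff $x_i\in V_i$ for every $i$, an event of probability $(r/cN_q)^{t}$; for $r\ll cN_q$ these events are, up to constant factors, independent across the $t+1$ chains, so $V$ contains all of them — and is therefore marked — with probability $\epsilon=\Omega\!\bigl((r/N_q)^{t(t+1)}\bigr)$. (Incorrect keys can only add marked vertices, so this suffices for a lower bound; and since $\epsilon\ll1$ the MNRS success probability $1-4(\arcsin\sqrt{\epsilon})^{2}$ is $1-o(1)$.) The delicate points are controlling the mild dependence among the chain-containment events and among the random sets $S_i$, and upgrading ``correct in expectation'' to ``concentrated'' so that a marked vertex exists with constant probability; the same concentration justifies running $O(1/\sqrt{\epsilon_0})$ iterations with the explicit lower bound $\epsilon_0$ in place of the true, unknown fraction.

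Finally I would substitute into \expref{Theorem}{thm:qrw}:
\[
\mathsf{S}+\tfrac{1}{\sqrt{\epsilon}}\bigl(\tfrac{1}{\sqrt{\delta}}\,\mathsf{U}+\mathsf{C}\bigr)
\;=\; O(r)+O\!\bigl((N_q/r)^{t(t+1)/2}\bigr)\cdot O(\sqrt{r})
\;=\; O(r)+O\!\bigl(N_q^{\,t(t+1)/2}\,r^{(1-t(t+1))/2}\bigr).
\]
The first term grows in $r$ and the second decays, so balancing them gives $r=N_q^{\,t(t+1)/(t(t+1)+1)}$ — which satisfies $r<N_q\le|S_i|$ since the exponent is below $1$ — and both terms then equal $O\!\bigl(N_q^{\alpha}\bigr)$ with $\alpha=\tfrac{t(t+1)}{t(t+1)+1}$, giving the claimed quantum-query count.
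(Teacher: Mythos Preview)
Your proposal is correct and follows essentially the same route as the paper: estimate $\mathsf{S}=O(tr)$, $\mathsf{U}=O(t)$, $\mathsf{C}=0$, $\delta=\Theta(1/r)$, and $\epsilon=\Omega\bigl((r/N_q)^{t(t+1)}\bigr)$, plug into the MNRS bound, and optimize $r$ to get $r=N_q^{t(t+1)/(t(t+1)+1)}$. You are in fact slightly more careful than the paper about the $\epsilon$ step---the paper simply asserts that the expected number of true-key chains equals $t{+}1$ and proceeds as if exactly $t{+}1$ are present, whereas you flag the concentration issue and the mild dependence among the containment events---but this is added rigor on the same argument, not a different approach.
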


\begin{proof}
The total cost of the quantum walk is given by \expref{Theorem}{thm:qrw}:
$    q_{P_t}={\sf S} +\frac{1}{\sqrt{\epsilon}}(\frac{1}{\sqrt{\delta}}{\sf U}+ {\sf C})$. Let's estimate the individual costs and parameters. 
\begin{itemize}
 \item {\bf Setup cost} $\sf S$: The initial state, typically a uniform superposition over all vertices in G is given by
 \[
 \bigotimes_{j=1}^{t}\sum_{i=1}^{\binom{n}{r}}\ket{V}_d\ket{\bar 0}_d=\bigotimes_{j=1}^{t}\sum_{i=1}^{\binom{n}{r}}\ket{x_{j,i,1},\ldots x_{j,i,r},P_j(x_{j,i,1}),\ldots, P_j(x_{j,i,r}) }_d \ket{\bar 0}_d.
 \]
 
Preparing this state quantumly requires $r$ quantum queries to each $P_j$ for each $j$, totaling $tr$ quantum queries. Thus, $\costS = O(tr)$.

\item {\bf Update cost} $\sf U$: The cost of simulating one step of the quantum walk operator $W_d$. In $G$, an edge connects $V = (V_1, \dots, V_t)$ and $V' = (V_1', \dots, V_t')$ if, for each $1 \leq j \leq t$, the vertices $V_j$ and $V_j'$ differ in exactly one component, and are adjacent in the graph $J(S_j, r)$. A step in $J(S_i,r)$ costs $O(1)$ queries to the underlying permutation $P_i$ (or its inverse) \cite{magniez2007search}. For the Cartesian product of $t$ graphs, an update costs $O(t)$ queries. Thus, $\costU = O(t)$.
    
\item {\bf Check cost} $\sf C$: Consider a vertex $V \in V_G$. Assume it has subsets of size $r$ from $S_1,\ldots,S_t$ as $V_1,\ldots,V_t$. $V$ is marked if \expref{Algorithm}{alg:marked} returns 1. This algorithm needs no additional queries. Therefore, $\sf C = 0$.

\begin{algorithm} 
\caption{Check whether a vertex is marked}
\label{alg:marked}
\begin{algorithmic}[1]
\ForAll{$(x_0, x_1, \ldots, x_t) \in S_0 \times V_1 \ldots \times V_t \times$}
    \State $\kappa_0 \gets x_0 \oplus x_1$
    \State $\kappa_1 \gets P_1(x_1) \oplus x_2$
    \State $\vdots$
    \State $\kappa_t \gets P_t(x_t) \oplus E(x_0)$
    \State Add $(\kappa_0, \kappa_1, \ldots, \kappa_t)$ to $C$
\EndFor
\ForAll{key in $C$}
    \State $d \gets$ frequency of key in $C$
    \If{$d \geq t + 1$}
        \State return 1
    \EndIf
\EndFor
\State \Return 0
\end{algorithmic}
\end{algorithm}

\item{\bf Probability of a vertex being marked} $\sf \varepsilon$: We first note that $|S_0|\cdot \ldots \cdot |S_t|=(t+1) 2^{tn}$. By \expref{Lemma}{lem:sum-capture}, the expected number of input tuples $(x_0,\ldots, x_t)$ that could generate the correct key $(k_0,\ldots,k_t)$ via \expref{Equation}{eqn:keygen} is $\frac{|S_0|\cdot \ldots \cdot |S_t|}{2^{tn}}=t+1.$ Thus, $t+1$ inputs in $S_0 \times S_1\times\ldots \times P_t$ can generate $(k_0,\ldots,k_t)$. Then, a marked vertex $v$ that represent the correct key must contain all those $t(t+1)$ inputs, i.e., $(t+1)$ inputs from each permutation. Set $\alpha N_q =T$, then the number of marked vertices is given by 
\begin{equation*}
    |M|\geq\dbinom{|S_1|-(t+1)}{r-(t+1)}\cdot \dbinom{|S_2|-(t+1)}{r-(t+1)}\cdot \ldots \cdot \dbinom{|S_t|-(t+1)}{r-(t+1)} = {\dbinom{T-(t+1)}{r-(t+1)}}^{t}
\end{equation*}
(the $\geq$ arises because there may be other marked vertices that produce a wrong key candidate repeated $t+1$ times). Therefore, the probability that a vertex is marked is
$$
\epsilon =\frac{|M|}{|V|} \geq \frac{ {\dbinom{T-(t+1)}{r-(t+1)}}^t}{{\dbinom{T}{r}}^t} \geq \frac{r^{t(t+1)}}{T^{t(t+1)}}(1-o(1)).
$$

\item{\bf Spectral gap $\delta$}: The spectral gap of $J(S_i,r)$ is $\frac{|S_i|-r}{r(|S_i|-r+1)}$, which simplifies to $\approx 1/r$ for large $N, r$. The spectral gap of the Cartesian product $G$ is the minimum of the spectral gaps of the component graphs \cite{Tani_2009}, so $\delta \approx 1/r$.

\end{itemize}
Substituting all the values of S, U, C, $\delta$, and $\epsilon$ into the cost equation (\expref{Theorem}{thm:qrw}):
\begin{equation*}
        q_{P_t}= tr+ \sqrt{\frac{T^{t(t+1)}}{r^{t(t+1)}}}(t\sqrt{{r}}),
\end{equation*}
which is optimized by setting  $r=O(T^\frac{t(t+1)}{t(t+1)+1})$. Therefore, we have  $ q_{P_t}= O(T^{\frac{t(t+1)}{t(t+1)+1}}) = O({N_q}^{\alpha}),$ with $\alpha = \frac{t(t+1)}{t(t+1)+1}$.
\end{proof}

\begin{theorem}[Quantum attack for $t$-KAC]\label{thm:attack}
The quantum key-recovery attack in \expref{Algorithm}{attack:main} recovers all keys with probability at least $\frac{1- 4\arcsin(\sqrt{\epsilon})^2}{2}$, for $t$-KAC using $O(2^{\frac{t(t+1)n}{(t+1)^2+1}})$ queries in the Q1 model.
\end{theorem}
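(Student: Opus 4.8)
The plan is to assemble \expref{Theorem}{thm:attack} from three ingredients already in hand: the correctness analysis of the classical attack (\expref{Lemma}{lem:successAttack}), the quantum walk guarantee of \expref{Theorem}{thm:qrw} applied through the parameter/cost estimates of \expref{Lemma}{lem:qw}, and an optimization of the total query budget between the $|S_0|$ classical queries to $E_k$ and the quantum queries consumed by the walk. So the proof splits cleanly into a correctness part and a query-count part.

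For correctness, I would condition on the event that the quantum walk in Step~\ref{step2} succeeds, i.e.\ outputs a marked vertex $V=(V_1,\dots,V_t)$ together with its data $d(V)$. By \expref{Theorem}{thm:qrw} with $\eta=\arcsin(\sqrt{\epsilon})$ and with $\epsilon$ lower-bounded as in \expref{Lemma}{lem:qw}, this occurs with probability at least $1-4\arcsin(\sqrt{\epsilon})^{2}$. Since Step~\ref{step1} enforces $|S_0|\cdots|S_t|=(t+1)2^{tn}$, \expref{Lemma}{lem:sum-capture} gives that in expectation $t+1$ input tuples $(x_0,\dots,x_t)\in S_0\times\cdots\times S_t$ generate the true key $(k_0,\dots,k_t)$ via Equation~\eqref{eqn:keygen}; the marked vertices counted in the bound on $\epsilon$ are exactly those containing all $t(t+1)$ associated $P_i$-inputs, so a marked vertex returned by the walk holds enough data to regenerate $(k_0,\dots,k_t)$ at least $t+1$ times. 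The verification of Step~\ref{step3}, run over $S_0\times V_1\times\cdots\times V_t$, then yields a nonempty list of key candidates repeated at least $t+1$ times containing $(k_0,\dots,k_t)$, and — reusing the argument of \expref{Lemma}{lem:successAttack}, namely Equation~\eqref{eqn:lastKeyInequality} together with the fact that the expected number of incorrect prefixes $\kappa^{\ast}\ne(k_0,\dots,k_{t-1})$ all of whose completions coincide is at most $1$, so this list has expected size at most $2$ — Jensen's inequality shows a uniformly random element of the list equals $k$ with probability at least $1/2$. Multiplying the two bounds yields the claimed recovery probability $\tfrac{1}{2}\bigl(1-4\arcsin(\sqrt{\epsilon})^{2}\bigr)$.

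For the query complexity, Step~\ref{step1} makes $|S_0|=O(N_c)$ classical queries to $E_k$, \expref{Lemma}{lem:qw} bounds Step~\ref{step2} by $O(N_q^{\alpha})$ quantum queries with $\alpha=\tfrac{t(t+1)}{t(t+1)+1}$, and Step~\ref{step3} makes none (its check cost is $0$). I would minimize $O(N_c)+O(N_q^{\alpha})$ subject to $N_c\cdot N_q^{t}=2^{nt}$ from Equation~\eqref{equ:size}: balancing the two terms, $N_c\asymp N_q^{\alpha}$, forces $N_q^{t+\alpha}=\Theta(2^{nt})$, hence $N_q=\Theta(2^{nt/(t+\alpha)})$ and total cost $\Theta(2^{nt\alpha/(t+\alpha)})$. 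Substituting $\alpha=\tfrac{t(t+1)}{t(t+1)+1}$ gives $t+\alpha=\tfrac{t\bigl((t+1)^{2}+1\bigr)}{t(t+1)+1}$ and $\tfrac{\alpha}{t+\alpha}=\tfrac{t+1}{(t+1)^{2}+1}$, so the total query count is $O\!\bigl(2^{t(t+1)n/((t+1)^{2}+1)}\bigr)$, as stated.

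I expect the correctness half to be the delicate point, specifically controlling which marked vertex the walk actually returns: the marked set also contains "wrongly marked" vertices carrying $t+1$ tuples that generate some incorrect key, and one must argue these do not sabotage recovery. The cleanest route is to push this entirely into the verification step — whatever marked vertex comes out, the candidate list it produces is short (expected size $\le 2$, by the same counting as in \expref{Lemma}{lem:successAttack}) and, on the dominant event, contains $k$, so random selection recovers $k$ with constant probability. I would also confirm, as in the footnote to \expref{Lemma}{lem:successAttack}, that taking the sampling constant slightly above $t+1$ upgrades "the true key is generated at least $t+1$ times" from an expectation statement to one holding with the probability implicit in the theorem, and check that the $1-o(1)$ and constant factors absorbed in the binomial estimates of \expref{Lemma}{lem:qw} do not perturb the exponent $\tfrac{t(t+1)}{(t+1)^{2}+1}$.
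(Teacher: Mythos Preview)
Your proposal is correct and follows essentially the same approach as the paper: both combine the quantum-walk success probability from \expref{Theorem}{thm:qrw} with the $\tfrac{1}{2}$ verification bound from \expref{Lemma}{lem:successAttack} for correctness, and both balance $N_c$ against $N_q^{\alpha}$ (with $\alpha=\tfrac{t(t+1)}{t(t+1)+1}$ from \expref{Lemma}{lem:qw}) under the constraint $N_cN_q^t=2^{tn}$ to obtain the exponent $\tfrac{t(t+1)}{(t+1)^2+1}$. Your write-up is in fact more explicit than the paper's on two points---the algebra converting $\tfrac{t\alpha}{t+\alpha}$ into the stated exponent, and the handling of ``wrongly marked'' vertices via the expected-list-size argument---both of which the paper glosses over.
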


\begin{proof}
We analyze the query complexity of \expref{Algorithm}{attack:main}. Let $q_E$ denote the number of classical queries and $q_P$ the number of quantum queries. \expref{Step}{step1} incurs $O(N_c)$ classical queries to $E_k$. As shown in  \expref{Lemma}{lem:qw}, \expref{Step}{step2} would incur $O({N_q}^{\frac{t(t+1)}{t(t+1)+1}})$ quantum queries to $P_1,\ldots,P_t$. \expref{Step}{step3} does not need any additional queries. Therefore, $q_E = O(N_c)$ and $q_P = O({N_q}^{\frac{t(t+1)}{t(t+1)+1}})$. To optimize the overall attack complexity, we balance the classical and quantum costs by setting 

\begin{equation}\label{equ:clas_quan}
    N_c = {N_q}^{\frac{t(t+1)}{t(t+1)+1}}.
\end{equation}
By combining \expref{Equation}{equ:size} and \expref{Equation}{equ:clas_quan}, we conclude that $q_E = q_P = 2^{\frac{t(t+1)n}{(t+1)^2+1}}$. Therefore, our quantum key-recovery attack on $t$-KAC in the $Q1$ model requires $O(2^{\gamma n})$ classical and quantum queries, where $\gamma = \frac{t(t+1)}{(t+1)^2+1}$.
\vspace{\baselineskip}

Regarding the success probability, by \expref{Theorem}{thm:qrw}, the quantum walk finds a marked vertex with probability at least $1 - 4\arcsin(\sqrt{\epsilon})^2$. As in \expref{Lemma}{lem:successAttack}, the probability that this vertex corresponds to the correct key $(k_0,\ldots,k_t)$ is at least $\tfrac{1}{2}$, giving an overall success probability of $\tfrac{1 - 4\arcsin^2(\sqrt{\epsilon})}{2}$.
\end{proof}

\begin{remark}
\expref{Algorithm}{thm:attack} for the EM cipher (1-KAC) requires $O(2^{2n/5})$ quantum queries. In contrast, it is known that an optimal algorithm can recover the secret keys with only $O(2^{n/3})$ quantum queries. One possible reason for such a difference in query complexity is because \expref{Algorithm}{thm:attack} is more similar to known-plaintext attack (KPA), whereas the attacks achieving $\tilde{O}(2^{n/3})$ query complexity are closer to chosen-plaintext attack (CPA) model \cite{kuwakado2012security,bonnetain2019quantum}. Identifying a known-plaintext-type attack with $O(2^{n/3})$ quantum query complexity for the EM cipher, or alternatively, devising a chosen-plaintext-type attack for the general case of $t$-KAC with $t \geq 2$, are intriguing directions for future research.
\end{remark}
\section{Mixed-Access Query Complexity for t-KAC}
\label{sec:conc}
Our quantum key-recovery algorithm for $t$-round KAC in the Q1 model runs in
$O\!\big(2^{\frac{t(t+1)}{(t+1)^2+1}n}\big)$ queries, improving over the best classical bound
$O\!\big(2^{\frac{t}{t+1}n}\big)$~\cite{bogdanov2012key}. While this speedup is non-trivial, one might expect a larger improvement when the adversary has coherent access to \emph{all} $t$ public permutations $P_1,\dots,P_t$. In this section we note that such an expectation is not generally warranted: the algorithm already attains its asymptotic complexity under significantly \emph{more restricted} quantum access.

\medskip
\noindent\textbf{Mixed-access model.}
Given a subset $S\subseteq\{P_1,\ldots,P_t,E_k\}$, an adversary has quantum (superposition) query access to the oracles in $S$ and classical access to the rest. Let $Q^{\mathrm{mix}}_{\mathrm{KR}}(t,S)$ denote the quantum query complexity of key recovery for $t$-KAC in this model.

\begin{proposition}
\label{prop:q1star}
For any $t\ge 1$ and any index $j\in\{1,\ldots,t\}$, the key-recovery algorithm of Section~\ref{sec:attack} achieves query complexity
\[
Q^{\mathrm{mix}}_{\mathrm{KR}}\big(t,\{P_j\}\big)
=O\!\Big(2^{\frac{t(t+1)}{(t+1)^2+1}n}\Big),
\]
i.e., the same asymptotic bound as in the Q1 model where all $P_i$ admit quantum access and $E_k$ is classical.
\end{proposition}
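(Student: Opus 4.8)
The plan is to re-instantiate \expref{Algorithm}{attack:main} with an \emph{asymmetric} choice of sampling sets so that only $P_j$ is ever queried in superposition, and then re-run the cost analysis of \expref{Lemma}{lem:qw} with the $t$-fold Cartesian product of Johnson graphs replaced by a single Johnson graph on $S_j$.

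First I would keep the correctness constraint $|S_0|\cdot|S_1|\cdots|S_t| = \Theta(2^{tn})$ inherited from \expref{Lemma}{lem:sum-capture} (so the true key is generated $\Theta(t{+}1)$ times in expectation), but instead of making $S_1,\ldots,S_t$ equal, take $|S_0| = |S_i| = \Theta(N_c)$ for every $i\in\{1,\ldots,t\}\setminus\{j\}$ and $|S_j| = \Theta(N_q)$, subject to $N_c^{\,t} N_q = \Theta(2^{tn})$. The adversary queries $E_k$ on $S_0$ and $P_i$ on $S_i$ for all $i\neq j$ \emph{classically}, costing $O(tN_c)=O(N_c)$ and using no superposition access beyond $P_j$. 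The key-candidate map \expref{Equation}{eqn:keygen} and the notion of a marked vertex carry over unchanged, except that the walk now runs on $J(S_j,r)$ alone: a vertex $V_j\subseteq S_j$, $|V_j|=r$, is marked iff $S_0\times S_1\times\cdots\times V_j\times\cdots\times S_t$ contains at least $t{+}1$ tuples yielding the same candidate, and the data register attaches to $V_j$ the pairs $(x,P_j(x))$ for $x\in V_j$; all other oracle values needed by the checking routine (\expref{Algorithm}{alg:marked}) already sit in classical memory, so checking stays query-free.

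Next I would redo the MNRS accounting for the single graph $J(S_j,r)$: $\costS = O(r)$ superposition queries to $P_j$, $\costU = O(1)$, $\costC = 0$, spectral gap $\delta = \Theta(1/r)$, and — since a marked vertex representing the true key must contain the $\le t{+}1$ elements of $S_j$ occurring in the generating tuples — marked probability $\epsilon \ge \binom{N_q-(t+1)}{r-(t+1)}\big/\binom{N_q}{r} = \Theta\bigl((r/N_q)^{t+1}\bigr)$. By \expref{Theorem}{thm:qrw} the quantum-query cost is then $O(r) + (N_q/r)^{(t+1)/2}\sqrt{r}\cdot O(1) = O(r) + O\bigl(N_q^{(t+1)/2}r^{-t/2}\bigr)$, optimized at $r = \Theta\bigl(N_q^{(t+1)/(t+2)}\bigr)$, i.e.\ $O\bigl(N_q^{(t+1)/(t+2)}\bigr)$ quantum queries. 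Finally I would balance the two budgets by setting $N_c = N_q^{(t+1)/(t+2)}$; substituting into $N_c^{\,t} N_q = \Theta(2^{tn})$ gives $N_q^{((t+1)^2+1)/(t+2)} = \Theta(2^{tn})$, hence $N_q = \Theta\bigl(2^{t(t+2)n/((t+1)^2+1)}\bigr)$ and both classical and quantum costs equal $\Theta\bigl(2^{t(t+1)n/((t+1)^2+1)}\bigr)$, matching \expref{Theorem}{thm:attack}. Correctness (the output equals $k$ with probability $\ge\tfrac{1}{2}$) transfers verbatim from \expref{Lemma}{lem:successAttack} and \expref{Theorem}{thm:attack}, since the marked condition and the candidate-voting step are untouched.

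The hard part will be purely the bookkeeping: the walk now has a much smaller quantum footprint (one Johnson graph rather than a $t$-fold product), so its optimal $r$ and per-step cost differ from those in \expref{Lemma}{lem:qw} — yet this is exactly compensated by shrinking $t$ of the $t{+}1$ sampling sets from size $\Theta(N_q)$ down to the classical size $\Theta(N_c)$, and checking that these two effects cancel to reproduce the exponent $\tfrac{t(t+1)}{(t+1)^2+1}$ is the crux. A minor point to dispatch is that any index $j$ can play the role $P_t$ played in \expref{Section}{sec:attack}: this is immediate from the chained form of \expref{Equation}{eqn:keygen}, which lets $x_j$ range over $V_j$ while every other input ranges over a pre-queried classical set and $P_j(x_j)$ is read from the walk's data register (an analogous relabeling of the $t$-KAC chain appears in the proof of \expref{Theorem}{cor:kac-q2}).
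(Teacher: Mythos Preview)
Your proposal is correct and substantially more explicit than the paper's own argument, which is only a terse sketch. The paper asserts that by ``routing the coherent evaluation through a single chosen permutation $P_j$'' and ``replacing calls to other $P_i$ by classical queries to precomputed tables'' one ``realize[s] the same walk with unchanged query complexity,'' but it never says how the sampling-set sizes or the walk graph must change to keep that precomputation within budget. Read literally---keeping $|S_i|=\Theta(N_q)$ for all $i$ and tabulating each $P_i$ on all of $S_i$---the classical cost would be $\Theta(N_q)$ per permutation, overshooting the target $\Theta(N_c)$ bound. You supply exactly the missing piece: shrink $|S_i|$ to $\Theta(N_c)$ for $i\ne j$ so the tables are affordable, collapse the walk from the $t$-fold product to the single Johnson graph $J(S_j,r)$, and re-derive the MNRS parameters ($\epsilon\approx(r/N_q)^{t+1}$, $\delta\approx 1/r$, optimal $r=N_q^{(t+1)/(t+2)}$) and the $N_c/N_q$ balance under the new constraint $N_c^{\,t}N_q=\Theta(2^{tn})$. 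That the resulting exponent still comes out to $\tfrac{t(t+1)}{(t+1)^2+1}$ is a genuine computation the paper asserts but does not carry out. In spirit the two agree; your version is the one that actually verifies the claim.
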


\noindent \emph{Proof sketch.} The algorithm is an MNRS-style quantum walk~\cite{magniez2007search} over a structured state space of partial key information. Its only genuinely quantum subroutine is the phase/reflection step that evaluates permutation images coherently to test markedness and to implement neighbor updates. All other ingredients—classical preprocessing, data-structure maintenance, and evaluation of the remaining permutations and the keyed oracle—are used within classical control and do not require superposition queries. By routing the coherent evaluation through a single chosen permutation $P_j$ (and replacing calls to other $P_i$ by classical queries to precomputed tables), we realize the same walk with unchanged query complexity.\hfill$\square$

\medskip
Surprisingly similar patterns are seen elsewhere in quantum query complexity: for collision-type tasks involving multiple functions, optimal or near-optimal quantum speedups do \emph{not} require fully coherent access to all oracles. For instance, finding $x,y$ with $F(x)=G(y)$ admits quantum complexity $O(2^{n/3})$ even when only one of $\{F,G\}$ is quantum~\cite{aaronson2004quantum,shi2002quantum}; likewise, the optimal $O(2^{n/(k+1)})$ complexity for the $k$-XOR problem can be achieved with quantum access to a single function and classical access to the remaining $k-1$~\cite{belovs2013adversary}.

\medskip

Proposition~\ref{prop:q1star} hints that granting quantum access to all $t$ public permutations does not, by itself, yield a better exponent for our attack than granting it to only one.
\section{Quantum Key-Recovery Attacks in Special Cases}
\label{app:special_proofs}
In this section, we present quantum key-recovery attacks in the Q1 model for specific constructions of KAC that offer better query complexity than the general case.

\begin{theorem} \label{thm:samekeyattack}
For the same key $t$-KAC, $E: \bool^n \times \bool^n \rightarrow \bool^n$, defined as $ E_k(x) = k \oplus P_t(k \oplus P_{t-1}(\ldots P_2(k \oplus P_1(k \oplus x))\ldots)),$ where $P_1, \ldots, P_t$ are independent random permutations and $k \in \bool^n$ is a secret key. There exists a quantum key-recovery attack in the $Q1$ model that finds the key $k$ with high probability using one classical query to $E$ and $O(2^{n/2})$ quantum queries to the permutations $P_1, \ldots, P_t$.
\end{theorem}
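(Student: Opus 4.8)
The plan is to reduce key recovery for the same-key $t$-KAC to a single unstructured quantum search over the key space $\bool^n$, exploiting the fact that, once one input--output pair of $E$ is known, the whole cipher $E_\kappa$ can be evaluated \emph{coherently} in a key guess $\kappa$ using only $O(t)$ queries to $P_1,\dots,P_t$.

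First I would issue the one permitted classical query: fix an arbitrary $x_0\in\bool^n$ and record $y_0=E_k(x_0)$. Next, define the predicate $g:\bool^n\to\bool$ by $g(\kappa)=1$ iff
$\kappa\oplus P_t\!\big(\kappa\oplus P_{t-1}(\cdots P_1(\kappa\oplus x_0)\cdots)\big)=y_0$, i.e.\ iff the same-key cipher with key $\kappa$ maps $x_0$ to $y_0$. The key observation is that the Grover oracle $O_g:\ket{\kappa}\ket{b}\mapsto\ket{\kappa}\ket{b\oplus g(\kappa)}$ is implementable with $2t$ quantum queries to the $P_i$: starting from $\ket{\kappa}$, compute $w_0=\kappa\oplus x_0$ reversibly (an XOR with a constant), apply $U_{P_1}$ to obtain $P_1(w_0)$ in a fresh register and XOR in $\kappa$ to get $w_1$, iterate through $U_{P_t}$, compare the final value against $y_0$, write the verdict into the phase/output register, and uncompute all intermediate registers (each $U_{P_i}$ is self-inverse, so uncomputation costs $t$ more queries; no inverse oracles are needed). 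Treating $t$ as constant, this is $O(1)$ queries per oracle call.

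Then I would run standard quantum search with oracle $O_g$ over $\kappa\in\bool^n$, using the usual wrapper that does not require knowing the number $M$ of marked keys in advance. Since $g(k)=1$ always, $M\ge1$, so the search makes $O(\sqrt{2^n/M})\le O(2^{n/2})$ Grover iterations, i.e.\ $O(t\cdot 2^{n/2})=O(2^{n/2})$ total quantum queries to the $P_i$, and returns a (nearly uniform) marked $\kappa$ with constant probability.

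The only genuine subtlety---and the step I expect to be the main obstacle---is certifying that the returned $\kappa$ is the true key and not a ``false key'' $\kappa\ne k$ with $E_\kappa(x_0)=y_0$. Because $P_1,\dots,P_t$ are uniform and independent, for any fixed $\kappa\ne k$ the evaluations $E_\kappa(x_0)$ and $E_k(x_0)$ generically differ in the argument fed to $P_t$, so $\Pr[E_\kappa(x_0)=y_0]\le 1/(2^n-1)$; hence the expected number of false keys is at most $1$, and by Markov's inequality the marked set has size $O(1)$ except with small probability. I would handle this while preserving the $O(2^{n/2})$ bound as follows: with probability $\ge e^{-1}-o(1)$ there are no false keys at all, so the unique marked element equals $k$ and ordinary Grover returns it with probability $1-o(1)$; and to recover $k$ unambiguously one can instead amplify to extract the \emph{entire} marked set at cost $O(\sqrt{M\cdot 2^n})=O(2^{n/2})$, obtaining an $O(1)$-size list that contains $k$, which (if a single output is required) is disambiguated by a further constant number of classical queries to $E$---this drives the false-key probability to $2^{-\Omega(n)}$ while leaving both the classical- and quantum-query counts unchanged up to constants. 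The remaining verifications---correctness of the coherent chain evaluation, the Grover iteration count, and the tail bound on $M$---are routine and would be stated briefly.
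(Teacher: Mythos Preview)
Your proposal is correct and takes essentially the same approach as the paper: one classical query to fix $(x_0,y_0)$, then Grover search over key guesses $\kappa$ using a coherent evaluation of $E_\kappa(x_0)$ through the $P_i$, with the false-key probability bounded via $\Pr[E_\kappa(x_0)=y_0\mid\kappa\neq k]\approx 2^{-n}$. One minor caveat: your optional disambiguation step using \emph{additional} classical queries to $E$ would exceed the single-query budget in the theorem statement, so rely on your primary argument (constant probability of zero false keys already suffices for the ``high probability'' claim, and the paper does not attempt to do better).
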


This attack leverages Grover's search algorithm on a specially constructed function that utilizes a single classical query/answer pair from $E$ and quantum access to the internal permutations. The query complexity is dominated by the $O(2^{n/2})$ quantum queries.

\begin{proof}
We present a $Q1$ attack using Grover search. The attack proceeds as follows:
\begin{enumerate}
    \item $\mathcal{A}$ first makes a classical query to $E$ and learns a query/answer pair $(x,y)$, such that $y = E_k(x) = k \oplus P_t(k \oplus P_{t-1}(\ldots P_2(k \oplus P_1(k \oplus x))\ldots))$.
    \item $\mathcal{A}$ then runs Grover search on a function $f:\{0,1\}^n\rightarrow \{0,1\}$ to find $i\in \bool^n$ such that $f(i)=1$. The function $f$ is defined using the known values $(x, y)$:
    \[
    f(i) =
    \begin{cases}
        1 & \text{if } i \oplus P_t(i \oplus P_{t-1}(\dots P_2(i \oplus P_1(i \oplus x)) \dots)) \oplus y = 0^n, \\ 
        0 & \text{otherwise.}
    \end{cases}
    \]
    Note that $i \oplus P_t(i \oplus \ldots P_1(i \oplus x)\ldots)) \oplus y = 0^n$ if and only if $i \oplus P_t(i \oplus \ldots P_1(i \oplus x)\ldots)) = y$. If $i=k$, then $k \oplus P_t(k \oplus \ldots P_1(k \oplus x)\ldots)) = E_k(x) = y$, so $f(k)=1$.
    \item $\mathcal{A}$ outputs the value $i$ found by Grover search.
\end{enumerate}
Note that since all permutations are uniform, for a random $i \neq k$, the value $i \oplus P_t(i \oplus \ldots P_1(i \oplus x)\ldots))$ is essentially a random value. Thus, $\text{Pr}[f(i)=1|i\neq k]=\frac{1}{2^n}$.

In the $Q1$ model, $\mathcal{A}$ gets quantum access to $P_1,\ldots P_t$. This allows $\mathcal{A}$ to construct a quantum oracle for the function $f$ and run Grover's search on it to find the unique (with high probability) solution $k$. The domain of $f$ is $\{0,1\}^n$, so Grover's search requires $O(\sqrt{2^n}) = O(2^{n/2})$ quantum queries to the oracle for $f$. Since each query to $f(i)$ for a given $i$ requires evaluating the expression $i \oplus P_t(i \oplus \ldots P_1(i \oplus x)\ldots)) \oplus y$, which involves calls to $P_1, \ldots, P_t$, each quantum query on $f$ translates to one quantum query on each of $P_1, \ldots, P_t$ (for a total of $t$ quantum queries).

The attack requires one classical query to $E$ (to get $(x,y)$) and $O(2^{n/2})$ quantum queries to the oracle $f$. The oracle for $f$ is constructed using quantum access to $P_1, \ldots, P_t$. Therefore, this attack requires $q_E=1$ and $q_{P_1}=\ldots = q_{P_t}=O(2^{n/2})$, totaling $O(2^{n/2})$ quantum queries to the permutations. This recovers the key $k$.
\end{proof}

\begin{remark} 
Our attack also works efficiently for the same permutation case ($P_1=\ldots=P_t=P$), which is notably different from the classical world where the best known attack for the same key, same permutation case requires $2^{n/2}$ classical queries using a slide attack. For independent permutations in $2$-KAC, the same key construction is one of the minimum variations known to achieve a similar level of classical security ($\tilde{O}(2^{2n/3})$) compared to the general case \cite{chen2018minimizing}.
\end{remark}

\begin{theorem}\label{thm:k0=k2attack}
For the 2-round KAC construction where the first and last keys are the same, $E: \bool^{2n} \times \bool^n \rightarrow \bool^n$, defined as
$E_k(x)=P_2 (P_1(x\oplus k_0) \oplus k_1) \oplus k_0,$
where $P_1, P_2$ are independent random permutations and $k= (k_0, k_1) \in \bool^{2n}$ is the secret key. There exists a quantum key-recovery attack in the $Q1$ model that finds the key $(k_0, k_1)$ with high probability using two classical queries to $E$ and $O(2^{n/2})$ quantum queries to $P_1$ and $P_2$.
\end{theorem}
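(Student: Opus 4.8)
The plan is to mimic the Grover-based strategy of \expref{Theorem}{thm:samekeyattack}, exploiting the equality $k_0=k_2$ to turn key recovery into an unstructured search over a \emph{single} unknown, namely $k_0$. First I would have $\A$ make two \emph{classical} queries $x_0\neq x_1$ to $E$, obtaining $y_0=E_k(x_0)$ and $y_1=E_k(x_1)$. Rewriting the cipher equation as $P_2^{-1}(y_b\oplus k_0)=P_1(x_b\oplus k_0)\oplus k_1$ for $b\in\{0,1\}$ and XOR-ing the two instances cancels the middle key $k_1$, leaving the $k_1$-free relation
\[
P_1(x_0\oplus k_0)\oplus P_1(x_1\oplus k_0)\;=\;P_2^{-1}(y_0\oplus k_0)\oplus P_2^{-1}(y_1\oplus k_0).
\]
This suggests the test function $f:\bool^n\to\bool$ with $f(i)=1$ iff $P_1(x_0\oplus i)\oplus P_1(x_1\oplus i)=P_2^{-1}(y_0\oplus i)\oplus P_2^{-1}(y_1\oplus i)$; by construction $f(k_0)=1$. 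Note that $y_0\neq y_1$ since $E_k$ is a permutation, so the $P_2^{-1}$-inputs are distinct.

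Second, I would argue that $k_0$ is essentially the unique preimage of $1$ under $f$. For a fixed $i\neq k_0$, the points $x_0\oplus i,x_1\oplus i$ (inputs to $P_1$) and $y_0\oplus i,y_1\oplus i$ (inputs to $P_2^{-1}$) are, apart from $O(1)$ degenerate choices of $i$, pairwise distinct and distinct from the ``true'' points $x_b\oplus k_0$, $y_b\oplus k_0$; hence, over the random independent permutations $P_1,P_2$, the two sides of the defining equation behave like independent differences of a random permutation evaluated at two distinct points, so $\Pr[f(i)=1]\le 1/(2^n-1)$. By linearity of expectation the expected number of spurious solutions is $\le 1+o(1)$, so the marked set $M:=f^{-1}(1)$ satisfies $\E[|M|]\le 2$.

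Third, since $\A$ has quantum access to $P_1,P_2$ and their inverses in the Q1 model, $f$ is computable with $O(1)$ coherent permutation queries, so running Grover search / amplitude amplification for an unknown number of marked points over the domain $\bool^n$ (which is nonempty, as $k_0\in M$) finds some $i^\star\in M$ using $O(2^{n/2})$ evaluations of $f$, i.e. $O(2^{n/2})$ quantum queries to $P_1$ and $P_2$. Given $i^\star$, I would set $\kappa_0:=i^\star$ and recover $\kappa_1:=P_1(x_0\oplus i^\star)\oplus P_2^{-1}(y_0\oplus i^\star)$ with $O(1)$ further queries; if $i^\star=k_0$ then $(\kappa_0,\kappa_1)=(k_0,k_1)$. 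As in \expref{Lemma}{lem:successAttack}, since Grover returns a (near-)uniform element of $M$, the probability of hitting $k_0$ is $\ge\E[1/|M|]\ge 1/\E[|M|]\ge 1/2$ by Jensen's inequality, up to the negligible Grover error, and this can be amplified by standard repetition. The total cost is then $2$ classical queries to $E$ and $O(2^{n/2})$ quantum queries to $P_1,P_2$, as claimed; as in \expref{Theorem}{thm:samekeyattack} this is a known-plaintext-style attack.

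I expect the main obstacle to be the false-positive bound, since $\E[|M|]$ is only a constant rather than $1+o(1)$ with genuine uniqueness. Making the clean ``$\E[|M|]\le 2$, hence success probability $\ge 1/2$'' statement rigorous requires care about two points: (i) the $O(1)$ degenerate indices $i$ at which the relevant $P_1$- or $P_2^{-1}$-inputs collide with each other or with the true points (for these the heuristic $1/2^n$ estimate fails, so they must be inspected directly); and (ii) the fact that $P_1(a)\oplus P_1(b)$ for a \emph{permutation} $P_1$ is uniform over $\bool^n\setminus\{0\}$ rather than over $\bool^n$, which only shifts the collision probability by a $1/(2^n-1)$ factor but must be accounted for. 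Both effects are benign and change only constants, but they are precisely what one must verify to conclude the stated success probability.
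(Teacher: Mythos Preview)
Your proposal is correct and takes essentially the same approach as the paper: two classical queries, a Grover search over candidates $i$ for $k_0$ using a test function built from $P_1,P_2^{-1}$, then recovery of $k_1$ from one of the query pairs. In fact your test function and the paper's coincide after simplification: the paper computes a $k_1$-candidate $f_1(i)=P_1(x_1\oplus i)\oplus P_2^{-1}(y_1\oplus i)$ from the first pair and then verifies against the second pair via $f_2(i)=1$ iff $P_2(P_1(x_2\oplus i)\oplus f_1(i))\oplus i=y_2$, which unwinds to exactly your XOR-relation $P_1(x_1\oplus i)\oplus P_1(x_2\oplus i)=P_2^{-1}(y_1\oplus i)\oplus P_2^{-1}(y_2\oplus i)$; your derivation via cancelling $k_1$ directly is just a cleaner route to the same oracle.
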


This attack utilizes two classical queries to the cipher $E$ and combines the resulting query-answer pairs with quantum queries to $P_1$ and $P_2$ via Grover's search to recover the key $(k_0, k_1)$. This improves upon previous quantum attacks for this specific structure that had $O(2^{2n/3})$ complexity. 

\begin{proof}
This construction is the 2-EM2 construction mentioned in \cite{cai2022quantum}. Previous work presented a $Q1$ attack using $O(2^{2n/3})$ queries with the offline Simon's algorithm \cite{bonnetain2019quantum2}, which had the same query complexity as the best classical attack \cite{chen2014tight,chen2018minimizing}. We present a better quantum attack in the $Q1$ model, which only requires two classical queries and $O(2^{n/2})$ quantum queries. Our $Q1$ key search attack is based on Grover's search.

\medskip\noindent\textbf{Key-recovery Attack.}
The attack proceeds as follows:
\begin{enumerate}
    \item $\mathcal{A}$ first queries $E$ twice and learns $(x_1,y_1)$ and $(x_2,y_2)$. Without loss of generality, $x_1 \neq x_2$. We have $y_1 = P_2 (P_1(x_1\oplus k_0) \oplus k_1) \oplus k_0$ and $y_2 = P_2 (P_1(x_2\oplus k_0) \oplus k_1) \oplus k_0$.
    \item $\mathcal{A}$ defines two functions. The first function $f_1:\bool^n \rightarrow \bool^n$ takes a candidate for $k_0$, denoted by $i$, and maps it to a candidate for $k_1$. Using the first classical query $(x_1, y_1)$, this function is defined as:
    \[ f_1(i)=P_1(x_1\oplus i)\oplus P_2^{-1}(y_1\oplus i). \]
    Note that if $i=k_0$, then $f_1(k_0) = P_1(x_1 \oplus k_0) \oplus P_2^{-1}(y_1 \oplus k_0)$. Since $y_1 = P_2(P_1(x_1 \oplus k_0) \oplus k_1) \oplus k_0$, we have $y_1 \oplus k_0 = P_2(P_1(x_1 \oplus k_0) \oplus k_1)$. Applying $P_2^{-1}$ to both sides gives $P_2^{-1}(y_1 \oplus k_0) = P_1(x_1 \oplus k_0) \oplus k_1$. Rearranging, $P_1(x_1 \oplus k_0) \oplus P_2^{-1}(y_1 \oplus k_0) = k_1$. Thus, $f_1(k_0) = k_1$. So, if the input $i$ is the correct $k_0$, the output $f_1(i)$ is the correct $k_1$.

    The second function, $f_2: \bool^n \rightarrow \bool$, takes a candidate $k_0=i$ and checks if the pair $(i, f_1(i))$ is consistent with the second classical query $(x_2, y_2)$. This function uses the second classical query $(x_2, y_2)$ and the function $f_1(i)$:
    \[
    f_2(i) =
    \begin{cases}
        1 & \text{if } P_2 (P_1(x_2\oplus i)\oplus f_1(i))\oplus i \oplus y_2 = 0^n, \\ 
        0 & \text{otherwise.}
    \end{cases}
    \]
    Note that $P_2 (P_1(x_2\oplus i)\oplus f_1(i))\oplus i \oplus y_2 = 0^n$ if and only if $P_2 (P_1(x_2\oplus i)\oplus f_1(i))\oplus i = y_2$. If $(i, f_1(i)) = (k_0, k_1)$, this condition becomes $P_2 (P_1(x_2\oplus k_0)\oplus k_1)\oplus k_0 = y_2$, which is exactly $E_k(x_2) = y_2$. Thus, $f_2(i)=1$ if and only if the key candidate $(i, f_1(i))$ maps $x_2$ to $y_2$.

    \item $\mathcal{A}$ performs Grover search on the function $f_2(i)$ to find $i \in \bool^n$ such that $f_2(i)=1$. Since $f_2(k_0)=1$, there is at least one solution $k_0$. With high probability, $k_0$ is the unique solution.
    \item $\mathcal{A}$ outputs the pair $(i, f_1(i))$, where $i$ is the value found by Grover search.
\end{enumerate}

\medskip\noindent\textbf{Attack analysis.}
The main idea of this attack is that for any $k_0$-candidate $i \in \bool^n$, $\mathcal{A}$ can construct a $k_1$-candidate $f_1(i) \in \bool^n$ using the first classical query $(x_1, y_1)$. This construction involves queries to $P_1$ and $P_2^{-1}$. The next step is to check if such a pair $(i, f_1(i))$ is the correct key $(k_0, k_1)$. This check is performed by evaluating the function $f_2(i)$, which utilizes the second classical query $(x_2, y_2)$. If $f_2(i)=1$, it means that the key candidate $(i, f_1(i))$ is consistent with the second query-answer pair $(x_2, y_2)$.

The bad event of this attack occurs when $f_2(i)=1$ for an $(i, f_1(i)) \neq (k_0, k_1)$. This happens if the incorrect key candidate $(i, f_1(i))$ by chance maps $x_2$ to $y_2$. The probability of a random incorrect key pair $(i, k_1')$ mapping $x_2$ to $y_2$ is $1/2^n$. The number of possible incorrect $i$ values is $2^n-1$. For a fixed incorrect $i$, $f_1(i)$ is essentially a random value. Thus, the probability that a random pair $(i, f_1(i))$ (for $i \neq k_0$) happens to satisfy the condition $P_2 (P_1(x_2\oplus i)\oplus f_1(i))\oplus i = y_2$ is $O(\frac{1}{2^n})$ since $P_1$ and $P_2$ are random permutations and $i \neq k_0$. With high probability, $k_0$ is the unique input to $f_2$ that evaluates to 1.

The attack uses Grover search on a function with domain size $2^n$, requiring $O(\sqrt{2^n}) = O(2^{n/2})$ quantum queries to the oracle for $f_2$. Constructing the oracle for $f_2(i)$ for a given $i$ requires evaluating $P_2(P_1(x_2\oplus i)\oplus f_1(i))\oplus i \oplus y_2$. This evaluation requires a call to $f_1(i)$, which involves $P_1$ and $P_2^{-1}$. The overall evaluation of $f_2(i)$ involves $P_1$, $P_2^{-1}$, and $P_2$. Thus, each query to $f_2$ translates to a constant number of quantum queries to $P_1$ and $P_2$.

The attack requires two classical queries to $E$ (to get $(x_1, y_1)$ and $(x_2, y_2)$) and $O(2^{n/2})$ quantum queries to the oracle $f_2$, which in turn requires $O(2^{n/2})$ quantum queries to $P_1$ and $P_2$. This recovers the key pair $(k_0, k_1)$.

\medskip\noindent\textbf{Generalization to $t$-KAC with $t$ same keys.}
The approach used in \expref{Theorem}{thm:k0=k2attack} can be generalized to $t$-KAC constructions where $t$ of the $t+1$ keys are identical. Assume $k_j$ is the unique key that is different from the other $t$ keys for some $j \in \{0,\ldots, t\}$. Let the repeated key be $k^*$. We can use two classical queries to the cipher, $(x_1, y_1)$ and $(x_2, y_2)$, to set up a Grover search for the unique key $k_j$. The search is performed over the domain $\bool^n$ for $k_j$. Similar to the 2-round case, we can construct an oracle for a function $f:\bool^n \rightarrow \bool$ such that $f(i)=1$ if and only if $i=k_j$ (with high probability), where the evaluation of $f(i)$ involves applying the KAC structure using $i$ as a candidate for $k_j$ and potentially using auxiliary functions involving $P_1, \dots, P_t$ and $P_1^{-1}, \dots, P_t^{-1}$ based on the classical queries.

For example, if $k_0$ is the unique key ($k_1=\ldots=k_t=k^*$), we can use $(x_1, y_1)$ to define functions that propose $k^*$ given a candidate $k_0$, and use $(x_2, y_2)$ to verify consistency. If $k_t$ is the unique key ($k_0=\ldots=k_{t-1}=k^*$), we can use $(x_1, y_1)$ to relate a candidate $k_t$ to a candidate $k^*$ and $(x_2, y_2)$ for verification.

The functions $f_1, f_2, f_3, f_4$ defined in the original Remark illustrate this for a general $j$. Assume $k_j$ is the unique key, and $k_l = k^*$ for $l \neq j$. Let $i$ be a candidate for $k^*$.
The functions can be constructed based on two classical queries $(x_1, y_1)$ and $(x_2, y_2)$. For instance, the functions might relate $y_1 \oplus i$ through inverse permutations $P_t^{-1}, \dots, P_{j+1}^{-1}$ to an intermediate value, and relate $x_1 \oplus i$ through forward permutations $P_1, \dots, P_j$ to another intermediate value. These intermediate values are then linked by the keys $k_{j+1}, \dots, k_t$ (which are all $k^*=i$) and $k_0, \dots, k_{j-1}$ (which are all $k^*=i$) respectively. This allows setting up equations that, when combined with $(x_2, y_2)$, yield a check function $f_4(i)$ which is 1 if $i=k^*$.

The functions defined in the original Remark were:
$f_1(i) = P_j(i \oplus P_{j-1}(\ldots P_2(i \oplus P_1(i \oplus x_1))\ldots))$, which seems to trace forward from $x_1$ assuming keys $k_0, \dots, k_{j-1}$ are $i$.
$f_2(i) = f_1(i) \oplus P_{j+1}^{-1}(\ldots P_{t-1}^{-1}(P_{t}^{-1}(y_1 \oplus i) \oplus i) \ldots))$, which seems to trace backward from $y_1$ assuming $k_t, \dots, k_{j+1}$ are $i$, and links it to the value after $P_j$.
$f_3(i) = P_j(i \oplus P_{j-1}(\ldots P_2(i \oplus P_1(i \oplus x_2))\ldots))$, similar to $f_1$ but using $x_2$.
$f_4(i)$ checks consistency using $x_2, y_2$ and the intermediate values computed based on candidate $i$ and functions $f_2, f_3$.

Applying Grover's search on $f_4(i)$ finds $i=k^*$. Once $k^*$ is found, the unique key $k_j$ can be recovered using function $f_2(k^*) \oplus P_{j+1}^{-1}(\dots P_t^{-1}(y_1 \oplus k^*)\dots)$. This part requires careful reconstruction based on the specific key indices involved.

In all such cases where only one key is distinct, a similar structure using two classical queries and a Grover search on a function with domain size $2^n$ can recover the keys with $O(2^{n/2})$ quantum queries to the permutations $P_l$ and $P_l^{-1}$.

\end{proof}

\begin{remark}
    As detailed above, the approach for the attack in \expref{Theorem}{thm:k0=k2attack} can be generalized to $t$-KAC constructions where $t$ of the $t+1$ keys are identical. This generalization also yields a key-recovery attack with $O(2^{n/2})$ quantum query complexity.
\end{remark}

\section*{Acknowledgement} The authors would like to thank Gorjan Alagic for insightful discussions and reviewers of Eurocrypt and Asiacrypt for constructive feedback on the previous versions of this work. This work is supported by AM’s faculty startup grant from Virginia Tech and, in part, by the Commonwealth Cyber Initiative, an investment in the advancement of cyber research, innovation, and workforce development.

\addcontentsline{toc}{section}{References}
\bibliographystyle{ieeetr}
\bibliography{refs}

\begin{thebibliography}{10}

\bibitem{shor1999polynomial}
P.~W. Shor, ``Polynomial-time algorithms for prime factorization and discrete logarithms on a quantum computer,'' {\em SIAM review}, vol.~41, no.~2, pp.~303--332, 1999.

\bibitem{grover1996fast}
L.~K. Grover, ``A fast quantum mechanical algorithm for database search,'' in {\em Proceedings of the twenty-eighth annual ACM symposium on Theory of computing}, pp.~212--219, 1996.

\bibitem{simon1997power}
D.~R. Simon, ``On the power of quantum computation,'' {\em SIAM journal on computing}, vol.~26, no.~5, pp.~1474--1483, 1997.

\bibitem{grassl2016applying}
M.~Grassl, B.~Langenberg, M.~Roetteler, and R.~Steinwandt, ``Applying grover’s algorithm to aes: quantum resource estimates,'' in {\em International Workshop on Post-Quantum Cryptography}, pp.~29--43, Springer, 2016.

\bibitem{jaques2020implementing}
S.~Jaques, M.~Naehrig, M.~Roetteler, and F.~Virdia, ``Implementing grover oracles for quantum key search on aes and lowmc,'' in {\em Advances in Cryptology--EUROCRYPT 2020: 39th Annual International Conference on the Theory and Applications of Cryptographic Techniques, Zagreb, Croatia, May 10--14, 2020, Proceedings, Part II 30}, pp.~280--310, Springer, 2020.

\bibitem{bonnetain2022beyond}
X.~Bonnetain, A.~Schrottenloher, and F.~Sibleyras, ``Beyond quadratic speedups in quantum attacks on symmetric schemes,'' in {\em Annual International Conference on the Theory and Applications of Cryptographic Techniques}, pp.~315--344, Springer, 2022.

\bibitem{bonnetain2019quantumaes}
X.~Bonnetain, M.~Naya-Plasencia, and A.~Schrottenloher, ``Quantum security analysis of aes,'' {\em IACR Transactions on Symmetric Cryptology}, vol.~2019, no.~2, pp.~55--93, 2019.

\bibitem{daemen2002design}
J.~Daemen and V.~Rijmen, {\em The design of Rijndael}, vol.~2.
\newblock Springer, 2002.

\bibitem{daemen2001wide}
J.~Daemen and V.~Rijmen, ``The wide trail design strategy,'' in {\em Cryptography and Coding: 8th IMA International Conference Cirencester, UK, December 17--19, 2001 Proceedings 8}, pp.~222--238, Springer, 2001.

\bibitem{bogdanov2012key}
A.~Bogdanov, L.~R. Knudsen, G.~Leander, F.-X. Standaert, J.~Steinberger, and E.~Tischhauser, ``Key-alternating ciphers in a provable setting: encryption using a small number of public permutations,'' in {\em Advances in Cryptology--EUROCRYPT 2012: 31st Annual International Conference on the Theory and Applications of Cryptographic Techniques, Cambridge, UK, April 15-19, 2012. Proceedings 31}, pp.~45--62, Springer, 2012.

\bibitem{even1997construction}
S.~Even and Y.~Mansour, ``A construction of a cipher from a single pseudorandom permutation,'' {\em Journal of cryptology}, vol.~10, pp.~151--161, 1997.

\bibitem{patarin2008coefficients}
J.~Patarin, ``The “coefficients h” technique,'' in {\em International Workshop on Selected Areas in Cryptography}, pp.~328--345, Springer, 2008.

\bibitem{steinberger2012improved}
J.~Steinberger, ``Improved security bounds for key-alternating ciphers via hellinger distance,'' {\em Cryptology ePrint Archive}, 2012.

\bibitem{chen2014tight}
S.~Chen and J.~Steinberger, ``Tight security bounds for key-alternating ciphers,'' in {\em Annual International Conference on the Theory and Applications of Cryptographic Techniques}, pp.~327--350, Springer, 2014.

\bibitem{zhandry2019record}
M.~Zhandry, ``How to record quantum queries, and applications to quantum indifferentiability,'' in {\em Annual International Cryptology Conference}, pp.~239--268, Springer, 2019.

\bibitem{majenz2025permutation}
C.~Majenz, G.~Malavolta, and M.~Walter, ``Permutation superposition oracles for quantum query lower bounds,'' in {\em Proceedings of the 57th Annual ACM Symposium on Theory of Computing}, pp.~1508--1519, 2025.

\bibitem{alagic2023two}
G.~Alagic, C.~Bai, A.~Poremba, and K.~Shi, ``On the two-sided permutation inversion problem,'' {\em arXiv preprint arXiv:2306.13729}, 2023.

\bibitem{alagic2022post}
G.~Alagic, C.~Bai, J.~Katz, and C.~Majenz, ``Post-quantum security of the even-mansour cipher,'' in {\em Annual International Conference on the Theory and Applications of Cryptographic Techniques}, pp.~458--487, Springer, 2022.

\bibitem{alagic2024post}
G.~Alagic, C.~Bai, J.~Katz, C.~Majenz, and P.~Struck, ``Post-quantum security of tweakable even-mansour, and applications,'' in {\em Annual International Conference on the Theory and Applications of Cryptographic Techniques}, pp.~310--338, Springer, 2024.

\bibitem{kuwakado2012security}
H.~Kuwakado and M.~Morii, ``Security on the quantum-type even-mansour cipher,'' in {\em 2012 international symposium on information theory and its applications}, pp.~312--316, IEEE, 2012.

\bibitem{brassard1997quantum}
G.~Brassard, P.~Hoyer, and A.~Tapp, ``Quantum algorithm for the collision problem,'' {\em arXiv preprint quant-ph/9705002}, 1997.

\bibitem{leander2017grover}
G.~Leander and A.~May, ``Grover meets simon--quantumly attacking the fx-construction,'' in {\em Advances in Cryptology--ASIACRYPT 2017: 23rd International Conference on the Theory and Applications of Cryptology and Information Security, Hong Kong, China, December 3-7, 2017, Proceedings, Part II 23}, pp.~161--178, Springer, 2017.

\bibitem{bonnetain2019quantum}
X.~Bonnetain, A.~Hosoyamada, M.~Naya-Plasencia, Y.~Sasaki, and A.~Schrottenloher, ``Quantum attacks without superposition queries: the offline simon’s algorithm,'' in {\em International Conference on the Theory and Application of Cryptology and Information Security}, pp.~552--583, Springer, 2019.

\bibitem{jaeger2021quantum}
J.~Jaeger, F.~Song, and S.~Tessaro, ``Quantum key-length extension,'' in {\em Theory of Cryptography Conference}, pp.~209--239, Springer, 2021.

\bibitem{cai2022quantum}
B.~Cai, F.~Gao, and G.~Leander, ``Quantum attacks on two-round even-mansour,'' {\em Frontiers in Physics}, vol.~10, p.~1028014, 2022.

\bibitem{kaplan2016breaking}
M.~Kaplan, G.~Leurent, A.~Leverrier, and M.~Naya-Plasencia, ``Breaking symmetric cryptosystems using quantum period finding,'' in {\em Advances in Cryptology--CRYPTO 2016: 36th Annual International Cryptology Conference, Santa Barbara, CA, USA, August 14-18, 2016, Proceedings, Part II 36}, pp.~207--237, Springer, 2016.

\bibitem{Bennett_1997}
C.~H. Bennett, E.~Bernstein, G.~Brassard, and U.~Vazirani, ``Strengths and weaknesses of quantum computing,'' {\em SIAM Journal on Computing}, vol.~26, p.~1510–1523, Oct. 1997.

\bibitem{magniez2007search}
F.~Magniez, A.~Nayak, J.~Roland, and M.~Santha, ``Search via quantum walk,'' in {\em Proceedings of the thirty-ninth annual ACM symposium on Theory of computing}, pp.~575--584, 2007.

\bibitem{kaplan2014quantum}
M.~Kaplan, ``Quantum attacks against iterated block ciphers,'' {\em arXiv preprint arXiv:1410.1434}, 2014.

\bibitem{daemen1993limitations}
J.~Daemen, ``Limitations of the even-mansour construction,'' in {\em Advances in Cryptology—ASIACRYPT'91: International Conference on the Theory and Application of Cryptology Fujiyosida, Japan, November 1991 Proceedings 2}, pp.~495--498, Springer, 1993.

\bibitem{anand2024quantum}
R.~Anand, S.~Ghosh, T.~Isobe, and R.~Shiba, ``Quantum key recovery attacks on 4-round iterated even-mansour with two keys,'' {\em Cryptology ePrint Archive}, 2024.

\bibitem{Tani_2009}
S.~Tani, ``Claw finding algorithms using quantum walk,'' {\em Theoretical Computer Science}, vol.~410, p.~5285–5297, Nov. 2009.

\bibitem{ambainis2007quantum}
A.~Ambainis, ``Quantum walk algorithm for element distinctness,'' {\em SIAM Journal on Computing}, vol.~37, no.~1, pp.~210--239, 2007.

\bibitem{le2014improved}
F.~Le~Gall, ``Improved quantum algorithm for triangle finding via combinatorial arguments,'' in {\em 2014 IEEE 55th Annual Symposium on Foundations of Computer Science}, pp.~216--225, IEEE, 2014.

\bibitem{chen2018minimizing}
S.~Chen, R.~Lampe, J.~Lee, Y.~Seurin, and J.~Steinberger, ``Minimizing the two-round even--mansour cipher,'' {\em Journal of Cryptology}, vol.~31, pp.~1064--1119, 2018.

\bibitem{wu2020tight}
Y.~Wu, L.~Yu, Z.~Cao, and X.~Dong, ``Tight security analysis of 3-round key-alternating cipher with a single permutation,'' in {\em Advances in Cryptology--ASIACRYPT 2020: 26th International Conference on the Theory and Application of Cryptology and Information Security, Daejeon, South Korea, December 7--11, 2020, Proceedings, Part I 26}, pp.~662--693, Springer, 2020.

\bibitem{yu2023security}
L.~Yu, Y.~Wu, Y.~Yu, Z.~Cao, and X.~Dong, ``Security proofs for key-alternating ciphers with non-independent round permutations,'' in {\em Theory of Cryptography Conference}, pp.~238--267, Springer, 2023.

\bibitem{bonnetain2019quantum2}
X.~Bonnetain, A.~Hosoyamada, M.~Naya-Plasencia, Y.~Sasaki, and A.~Schrottenloher, ``Quantum attacks without superposition queries: the offline simon’s algorithm,'' in {\em International Conference on the Theory and Application of Cryptology and Information Security}, pp.~552--583, Springer, 2019.

\bibitem{aaronson2004quantum}
S.~Aaronson and Y.~Shi, ``Quantum lower bounds for the collision and the element distinctness problems,'' {\em Journal of the ACM (JACM)}, vol.~51, no.~4, pp.~595--605, 2004.

\bibitem{shi2002quantum}
Y.~Shi, ``Quantum lower bounds for the collision and the element distinctness problems,'' in {\em The 43rd Annual IEEE Symposium on Foundations of Computer Science, 2002. Proceedings.}, pp.~513--519, IEEE, 2002.

\bibitem{belovs2013adversary}
A.~Belovs and R.~Spalek, ``Adversary lower bound for the k-sum problem,'' in {\em Proceedings of the 4th conference on Innovations in Theoretical Computer Science}, pp.~323--328, 2013.

\bibitem{bonnetain2023finding}
X.~Bonnetain, A.~Chailloux, A.~Schrottenloher, and Y.~Shen, ``Finding many collisions via reusable quantum walks: Application to lattice sieving,'' in {\em Annual International Conference on the Theory and Applications of Cryptographic Techniques}, pp.~221--251, Springer, 2023.

\bibitem{szegedy2004quantum}
M.~Szegedy, ``Quantum speed-up of markov chain based algorithms,'' in {\em 45th Annual IEEE symposium on foundations of computer science}, pp.~32--41, IEEE, 2004.

\bibitem{Laszlolecture}
L.~Babai., ``The fourier transform and equations over finite abelian groups: An introduction to the method of trigonometric sums.,'' in {\em Lecture notes}, pp.~8--9, 1989.

\end{thebibliography}

\end{document}